\title{Generalized data thinning
using sufficient statistics}
\author[1]{Ameer Dharamshi}
\author[2]{Anna Neufeld}
\author[1]{Keshav Motwani}
\author[3]{Lucy L. Gao}
\author[1,4]{Daniela Witten}
\author[5]{Jacob Bien}
\affil[1]{Department of Biostatistics, University of Washington}
\affil[2]{Public Health Sciences Division, Fred Hutchinson Cancer Research Center}
\affil[3]{Department of Statistics, University of British Columbia}
\affil[4]{Department of Statistics, University of Washington}
\affil[5]{Department of Data Sciences and Operations, University of Southern California}
\begin{document}

\maketitle


\begin{abstract}
Our goal is to develop a general strategy to decompose a random variable $X$ into multiple independent random variables, without sacrificing any information about unknown parameters.
A recent paper showed that for some well-known natural exponential families, $X$ can be \emph{thinned} into independent random variables $\xo, \ldots, \Xt{K}$, such that $X = \sum_{k=1}^K \Xt{k}$. These independent random variables can then be used for various model validation and inference tasks, including in contexts where traditional sample splitting fails. In this paper, we generalize that procedure by relaxing the summation requirement and simply asking that some known function of the independent random variables exactly reconstruct $X$. This generalization of the procedure serves two purposes.  First, it greatly expands the families of distributions for which thinning can be performed.  Second, it unifies sample splitting and data thinning, which on the surface seem to be very different, as applications of the same principle.  This shared principle is sufficiency. We use this insight to perform generalized thinning operations for a diverse set of families.
\end{abstract}

\section{Introduction}
\label{sec:introduction}

Suppose that we want to \emph{fit} and \emph{validate} a model using 
a single dataset.  Two example scenarios are as follows:
\begin{list}{}{}
\item{\emph{Scenario 1.}} We  want to use the data both to generate and to test a hypothesis. 
\item{\emph{Scenario 2.}} We want to use the  data both to fit a complicated model, and to obtain an accurate estimate of the expected prediction error. 
\end{list}
In either case, a naive approach that fits and validates a model on the same data is deeply problematic. In Scenario 1, testing a hypothesis on the same data used to generate it will lead to hypothesis tests that do not control the type 1 error, and to confidence intervals that do not attain the nominal coverage \citep{fithian2014optimal}.   And in Scenario 2, estimating the expected prediction error on the same data used to fit the model will lead to massive downward bias  \citep[see][for recent reviews]{tian2020prediction,oliveira2021unbiased}.

In the case of Scenario 1, recent interest  has focused on \emph{selective inference}, a framework that enables a data analyst to generate and test a hypothesis on the same data \citep[see, e.g.,][]{taylor2015statistical}. The main idea is as follows: to test a hypothesis generated from the data, we should condition on the event that we selected this particular hypothesis. Despite promising applications of this framework to a number of problems, such as inference after regression \citep{lee2016exact}, changepoint detection \citep{jewell2022testing,hyun2021post}, clustering \citep{gao2020selective,chen2022selective,yun2023selective}, and outlier detection \citep{chen2020valid}, it suffers from some drawbacks: 
\begin{enumerate}
\item To perform selective inference, the  procedure used to generate the null hypothesis must be fully-specified in advance.  For instance, if a researcher wishes to cluster the data and then test for a difference in means between the clusters, as in \cite{gao2020selective} and \cite{chen2022selective}, then they must fully specify the clustering procedure (e.g., hierarchical clustering with squared Euclidean distance and complete linkage, cut to obtain $K$ clusters) in advance. 
\item Finite-sample selective inference typically requires multivariate Gaussianity, though in some cases this can be relaxed to obtain asymptotic results \citep{taylor2018post,tian2017asymptotics,tibshirani2018uniform,tian2018selective}.
\end{enumerate}
Thus, selective inference is not a flexible, ``one-size-fits-all" approach to Scenario 1. 

In the case of Scenario 2, proposals to de-bias the ``in-sample" estimate of expected prediction error  tend to be specialized to simple models, and thus do not provide an all-purpose tool that is broadly applicable \citep{oliveira2021unbiased}.

\emph{Sample splitting} \citep{cox1975note} is an intuitive  approach that applies to a variety of settings, including Scenarios 1 and 2; see the left-hand panel of Figure~\ref{fig:samplesplit_vs_datathin}. We split a dataset containing $n$ observations into two sets, containing $n_1$ and $n_2$ observations (where $n_1+n_2=n$). Then we can generate a hypothesis based on the first set and test it on the second (Scenario 1), or we can fit a model to the first set and estimate its error on the second (Scenario 2). Sample splitting also forms the basis for cross-validation \citep{hastie2009elements}. 

However, sample splitting 
suffers from some drawbacks: 
\begin{enumerate} 
    \item If the data contain outliers, then each outlier is assigned to a single subsample. 
    \item If the observations are not independent (for instance, if they correspond to a time series) then the subsamples from sample splitting are not independent, and so sample splitting does not provide a solution to either Scenario 1 or Scenario 2.
    \item Sample splitting does not enable conclusions at a per-observation level.  
    For example, if sample splitting is applied to a dataset of the 50 states of the United States, then one can only conduct inference or perform validation on the states not used in fitting.
    \item If the model of interest is fit using  unsupervised learning, then  sample splitting may  not provide an adequate solution in either Scenario 1 or 2.  The issue relates to \#3 above. See  \cite{gao2020selective,chen2022selective}, and \cite{neufeld2022inference}. 
\end{enumerate}

In recent work, \cite{neufeld2023data} proposed \emph{convolution-closed data thinning} to address these drawbacks. They consider splitting, or \emph{thinning}, a random variable $X$ drawn from a convolution-closed family into $K$ independent random variables $\Xt{1},\ldots,\Xt{K}$ such that
$X=\sum_{k=1}^K\Xt{k}$, 
and $\Xt{1},\ldots,\Xt{K}$ come from the same family of distributions as $X$ (see the right-hand panel of Figure~\ref{fig:samplesplit_vs_datathin}). 
For instance, they show that $X \sim N(\mu, \sigma^2)$ can be thinned into two independent $N(\epsilon \mu, \epsilon \sigma^2)$ and $N((1-\epsilon) \mu, (1-\epsilon) \sigma^2)$ random variables that sum to $X$. 
Further, if  $X$ is drawn from a Gaussian, Poisson, negative binomial, binomial, multinomial, or gamma distribution, then they can thin it  \emph{even when parameters of its distribution are unknown}. Because the thinned random variables are independent, this
provides a new approach to tackle Scenarios 1 and 2: After thinning the data into independent parts, we fit a model to one part, and validate it on the rest.

On the surface, it is quite remarkable that one can break up a random variable $X$ into two or more {\em independent} random variables that sum to $X$ without knowing some (or sometimes any) of the parameters.  In this paper, we explain the underlying principles that make this possible. We also show that convolution-closed data thinning can be generalized to increase its flexibility and applicability. The convolution-closed data thinning property $X=\sum_{k=1}^K\Xt{k}$ is desirable because it ensures that no information has been lost in the thinning process. However, clearly this would remain true if we were to replace the summation by any other deterministic function.  Likewise, the fact that $\Xt{1},\ldots,\Xt{K}$ are from the same family as $X$, while convenient, is nonessential. 

Our generalization of convolution-closed data thinning is thus a procedure for splitting $X$ into $K$ random variables such that  the following two properties hold: 

\begin{itemize}
  \item[] (i) $X=T(\Xt{1},\ldots,\Xt{K})$; and (ii) $\Xt{1},\ldots,\Xt{K}$ are mutually independent.
\end{itemize}

This generalization is broad enough to simultaneously encompass both convolution-closed data thinning and sample splitting. Furthermore, it greatly increases the scope of distributions that can be thinned. In the $K=2$ case, this generalized goal has been stated before \citep[see][``P1'' property]{leiner2022data}. However, we are the first to develop a widely applicable strategy for achieving this goal. Not only can we thin exponential families that were not previously possible (such as the  beta family), but we can even thin outside of the exponential family.  For example, generalized thinning enables us to thin $X \sim \text{Unif}(0, \theta)$ into
 $\Xt{k} \overset{\text{iid}}{\sim} \theta \cdot\text{Beta}\left(\frac{1}{K},1\right)$, for $k=1,\dots,K$, in such a way that $X=\max\{\Xt{1},\ldots,\Xt{K}\}$.

The primary contributions of our paper are as follows:
\begin{enumerate}
\item We propose \emph{generalized data thinning}, a general strategy for thinning a single random variable $X$ into two or more independent random variables, $\xo,\ldots,\Xt{K}$, without knowledge of the parameter value(s).  
Importantly, we show that {\em sufficiency} is the key property underlying the choice of the function $T(\cdot)$.
\item We apply generalized data thinning to distributions far outside the scope of consideration of \cite{neufeld2023data}: These include the beta, uniform, and shifted exponential, among others.  A summary of distributions covered by this work is provided in Table~\ref{table:maintable}.  In light of results by \cite{darmois, koopman}, and \cite{pitman_1936}, we believe our examples are representative of the full range of cases to which this approach can be applied. 
\item We show that sample splitting --- which, on its surface, bears little resemblance to convolution-closed data thinning --- is in fact based on the same principle: Both are special cases of generalized data thinning with different choices of the function $T(\cdot)$.  In other words, our proposal is a direct \emph{generalization} of sample splitting. 
\end{enumerate}

We are not the first to generalize sample splitting.  Inspired by \cite{tian2018selective}'s use of randomized responses, \cite{rasines2021splitting} introduce the ``$(U,V)$-decomposition", which injects independent noise $W$ to create two independent random variables $U=u(X,W)$ and $V=v(X,W)$ that together are jointly sufficient for the unknown parameters.  However, they do not describe how to perform a $(U,V)$-decomposition other than in the special case of a Gaussian random vector with known covariance.  Our generalized thinning framework achieves the goal set out in their paper, providing a concrete recipe for finding such decompositions in a broad set of examples.  The ``data fission" proposal of \cite{leiner2022data} seeks random variables $f(X)$ and $g(X)$ for which the distributions of $f(X)$ and $g(X) \mid f(X)$ are known and for which $X=h(f(X),g(X))$. When these two random variables are independent (the ``P1'' property), their proposal aligns with generalized thinning.  However, they do not provide a general strategy for performing P1-fission, and the only two examples they provide are the Gaussian vector with known covariance and the Poisson.

The rest of our paper is organized as follows. In Section~\ref{sec:method}, we define generalized data thinning, present our main theorem, and provide a simple recipe for thinning that is followed throughout the paper. Sections~\ref{sec:natural-exp-fam}--\ref{sec:outside-exp-fam} demonstrate the utility of our approach in a series of examples organized by the results of \cite{darmois, koopman}, and \cite{pitman_1936}: In particular, in Section~\ref{sec:natural-exp-fam}, we consider the case of thinning natural exponential families; this section also revisits the convolution-closed data thinning proposal of \cite{neufeld2023data} and clarifies the class of distributions that can be thinned using that approach. In Section~\ref{sec:general-exp}, we apply data thinning to  general exponential families. We consider distributions outside of the exponential family in Section~\ref{sec:outside-exp-fam}. Section~\ref{sec:counterexamples} contains examples of distributions that \emph{cannot} be thinned using the approaches in this paper.  
Section~\ref{sec:changepoint} presents an application of data thinning to changepoint detection.  
Finally, we close with a discussion in Section~\ref{sec:discussion}; additional technical details are deferred to the supplementary materials.  

\begin{figure}
  \hspace{39mm}  Sample splitting    \hspace{10mm} Generalized data thinning 
  \vspace{-3mm}
\begin{center} 
\centering
\includegraphics[scale=0.18,trim={3cm 8cm 39cm 8cm},clip]{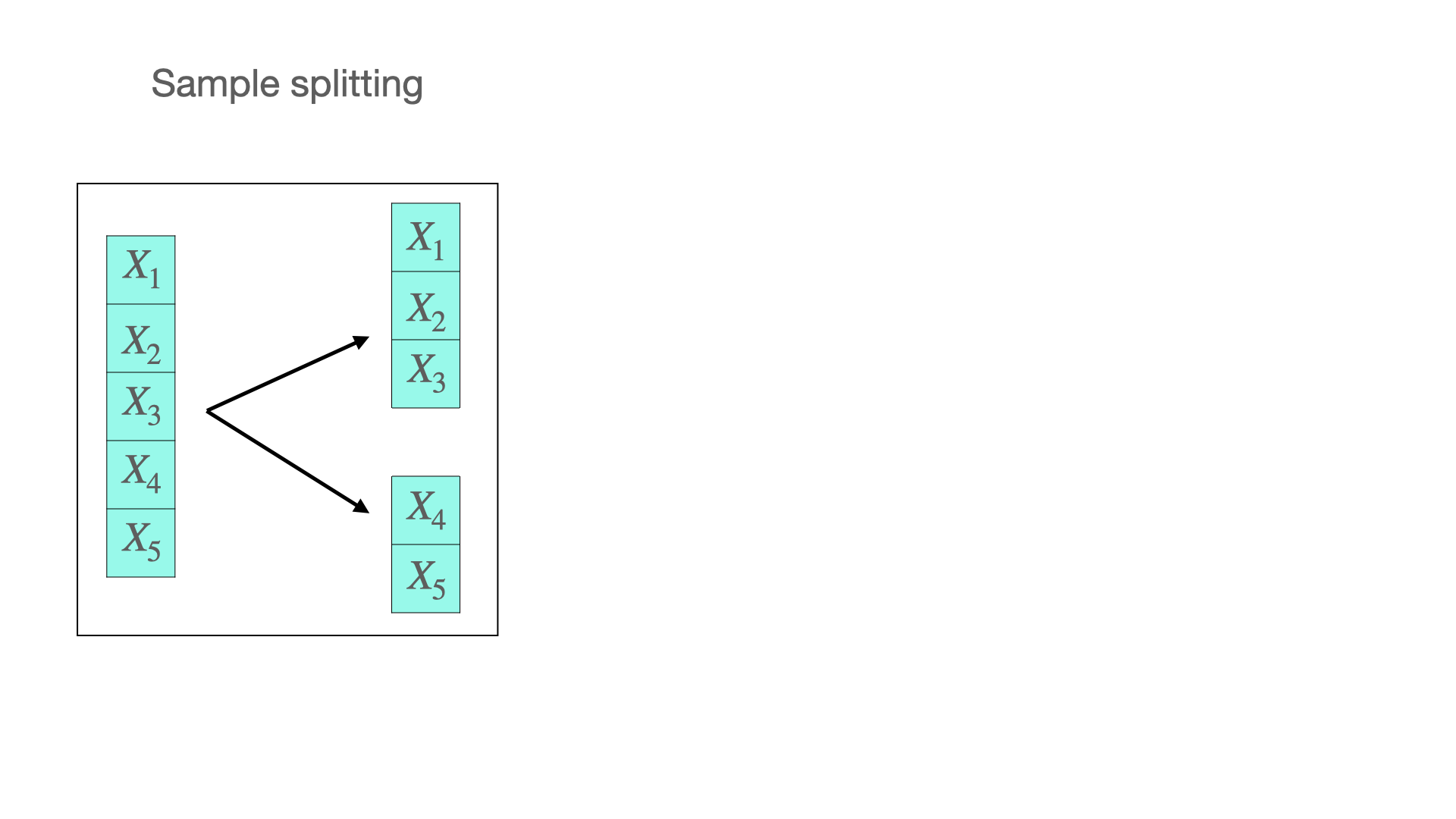}
\includegraphics[scale=0.18,trim={3cm 8cm 40cm 8cm},clip]{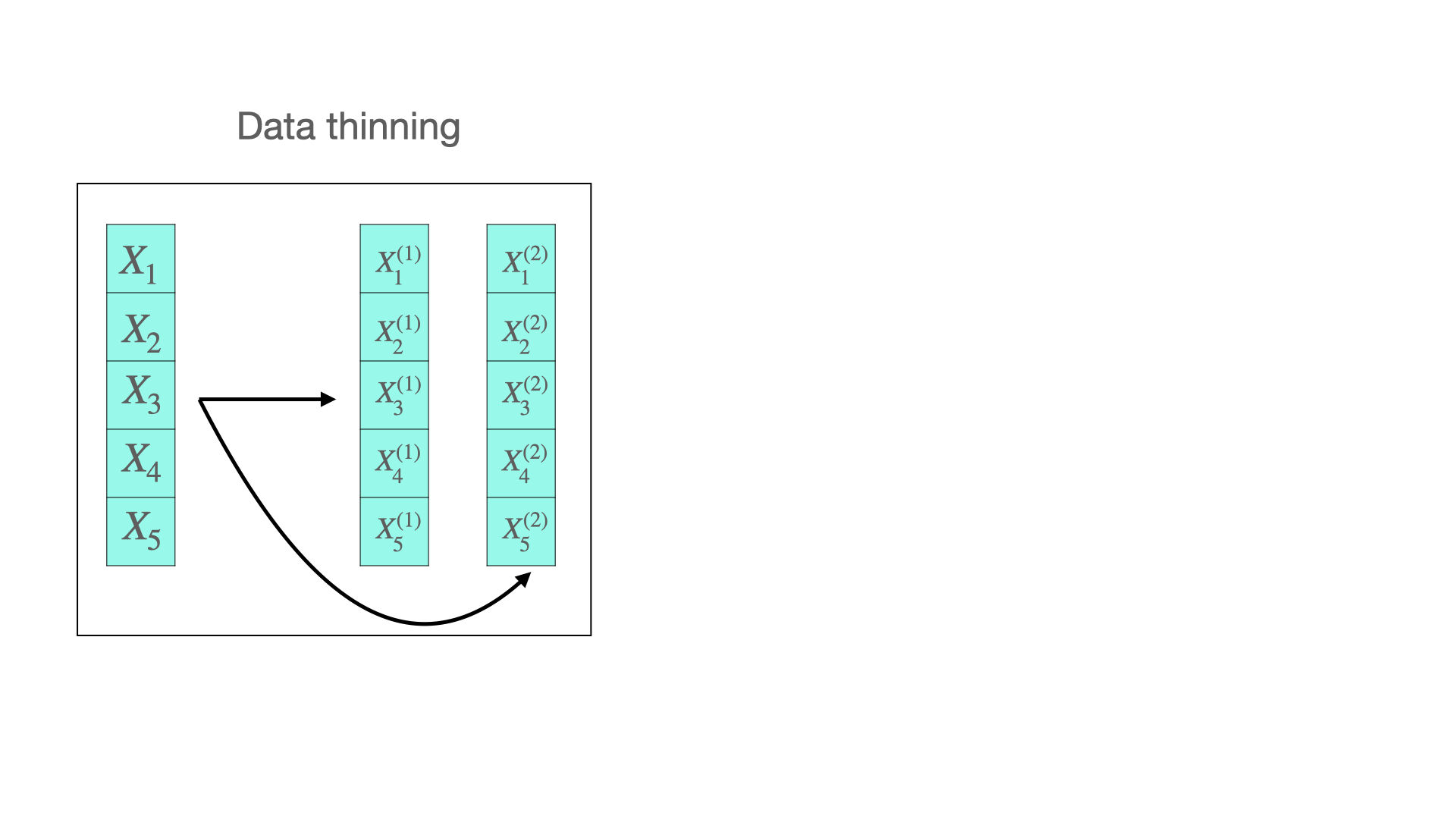} 
\caption{\emph{Left:} Sample splitting assigns each observation to either a training or a test set. 
\emph{Right:} Generalized data thinning splits each observation into two parts that are independent and  can be used to recover the original observation, i.e. $T(\xo, \Xt{2})=X$. 
\label{fig:samplesplit_vs_datathin}}
\end{center}
\end{figure}

\begin{table}
\scriptsize
\begin{spacing}{1.5}
\begin{tabular}{|c || c | c ||c | c |}
\hline
Family & Distribution $P_{\theta}$, & Distribution $Q_{\theta}^{(k)}$ & Sufficient statistic $T$ & Reference / notes \\
& where $X \sim P_{\theta}$. & where $X^{(k)} \overset{ind.}{\sim} Q_{\theta}^{(k)}$. & (sufficient for $\theta$) & \\
\hline
\hline
\multirow{9}{*}{
\begin{tabular}{c}
Natural \\exponential \\ family \\
(in parameter $\theta$)
\end{tabular}
}
& $N(\theta, \sigma^2)$ & $\mathrm{N}(\epsilon_k \theta, \epsilon_k \sigma^2)$ & \multirow{5}{*}{$\sum_{k=1}^K X^{(k)}$}  & \multirow{7}{*}{\cite{neufeld2023data}} \\
& $\mathrm{Poisson}(\theta)$ & $\mathrm{Poisson}(\epsilon_k \theta)$ &   &  \\
& $\mathrm{NegBin}(r, \theta)$ & $\mathrm{NegBin}(\epsilon_k r, \theta)$ & &   \\
& $\mathrm{Binomial}(r,\theta) $ & $\mathrm{Binomial}(\epsilon_k r,\theta) $& & \\
& $\mathrm{Gamma}(\alpha, \theta)$ & $\mathrm{Gamma}(\epsilon_k \alpha, \theta)$ & & \\
\cline{2-4}
& $\mathrm{N}_p(\boldsymbol\theta
, \boldsymbol\Sigma)$ & $N_p(\eps_k \boldsymbol\theta
, \eps_k \boldsymbol\Sigma)$ & \multirow{2}{*}{$\sum_{k=1}^K \mathbf{X}^{(k)}$} &   \\
& $\mathrm{Multinomial}_p(r, \boldsymbol\theta
)$ & 
$\mathrm{Multinomial}_p(\eps_k r, \boldsymbol\theta
)$  & & \\
\cline{2-5}
&  $\mathrm{Gamma}(K/2, \theta)$ &  $N(0, \frac{1}{2 \theta})$ & $\sum_{k=1}^K \left(X^{(k)}\right)^2$ &  Example~\ref{ex:scaled-normal} \\
&  $\mathrm{Gamma}(K, \theta)$ &  $\mathrm{Weibull}(\theta^{-\frac{1}{\nu}}, \nu)$ &  $\sum_{k=1}^K \left(X^{(k)}\right)^\nu$ & 
 Example C.1  \\
\hline 
\multirow{6}{*}{
\begin{tabular}{c}
General \\exponential \\
family \\
(in parameter $\theta$)
\end{tabular}
} 

& $\mathrm{Beta}(\theta,\beta)$ & 
$\mathrm{Beta}\left(\frac{1}{K}\theta+\frac{k-1}{K}, \frac{1}{K}\beta\right)$ & $\left( \Pi_{k=1}^K X^{(k)}\right)^{1/K}$ &  Example~\ref{ex:beta} \\
& $\mathrm{Beta}(\alpha, \theta)$ & $\mathrm{Beta}\left(\frac{1}{K}\alpha, \frac{1}{K}\theta+\frac{k-1}{K}\right)$ & $\left( \Pi_{k=1}^K \left(1-X^{(k)}\right)\right)^{1/K}$ & {\tiny Text below Example~\ref{ex:beta}} \\
& $\mathrm{Gamma}(\theta, \beta)$ & $\mathrm{Gamma}(\frac{1}{K} \theta + \frac{k-1}{K} , \frac{1}{K} \beta)$ & $\left( \Pi_{k=1}^K X^{(k)}\right)^{1/K}$ &
  Example~\ref{ex:gamma1} \\
 & $\mathrm{Weibull}(\theta, \gamma)$ &  $\mathrm{Gamma}(\frac{1}{K}, \theta^{-\gamma})$ & $\left( \sum_{k=1}^K X^{(k)} \right)^{1/\gamma}$ &  Example~\ref{ex:other}  \\
 & $\mathrm{Pareto}(\gamma, \theta)$ & $\mathrm{Gamma}(\frac{1}{K}, \theta)$  & $\gamma \times \mathrm{Exp} \left( \sum_{k=1}^K X^{(k)} \right)$ &  Example~\ref{ex:other}  \\
 & $\mathrm{Dirichlet}_K(\boldsymbol\theta, \phi)$ & $\mathrm{Gamma}(\theta_k\phi, \nu)$  & $\left(\Xt{1},\dots,\Xt{K}  \right)^\top/\sum_{k=1}^K X^{(k)}$ &  Example C.2  \\
 \cline{2-5}
& $\mathrm{N}(\mu, \theta)$ & $\mathrm{Gamma}(\frac{1}{2K}, \frac{1}{2 \theta})$ & $(X-\mu)^2 = \sum_{k=1}^K X^{(k)}$ &{\tiny Indirect only;  Example~\ref{ex:other} }\\
& $\mathrm{N}_K(\theta_1 \mathbf{1}_K, \theta_2\mathbf{I}_K)$ & $N(\theta_1,\theta_2)$ & sample mean and variance &{\tiny Indirect only;  Example D.1}\\
\hline 
\multirow{3}{*}{
\begin{tabular}{c}
Truncated \\support \\
family
\end{tabular}
}
& $\mathrm{Unif}(0,\theta)$ & $\theta \cdot\mathrm{Beta}(\frac{1}{K}, 1)$ & \multirow{2}{*}{$\mathrm{max} \left( X^{(1)}, \ldots, X^{(K)}\right)$} &  Example~\ref{ex:scaled-uniform} \\ 
& $\theta \cdot\mathrm{Beta}(\alpha, 1)$ & $\theta \cdot\mathrm{Beta}(\frac{\alpha}{K}, 1)$ & & Example C.3 \\
\cline{2-5}
& $\theta + \mathrm{Exp}(\lambda)$ & $\theta + \mathrm{Exp(\lambda/K)}$ & $\mathrm{min}\left(X^{(1)},\ldots, X^{(K)} \right)$ &  Example C.4\\
\hline 
Non-parametric &
$F^n$  & $F^{n_k}$ & 
 See Example~\ref{ex:samplesplit} 
 & Example~\ref{ex:samplesplit} \\
\hline
\end{tabular}
\end{spacing}
\caption{
Examples of named families (indexed by an unknown parameter $\theta$) that can be thinned into $K$ components, where $K$ is a positive integer, without knowledge of $\theta$.  
In cases where they are used, $\epsilon_k$, $n_k$, and $\nu$ are positive tuning parameters to be selected by the analyst, where $\sum_{k=1}^K \epsilon_k = 1$ and $n_1,\dots,n_K$ are integers that sum to $n$; all other parameters are constrained appropriately. Note that Examples C.1, C.2, C.3, C.4, and D.1 are discussed in the supplementary materials.
}
\label{table:maintable}
\end{table}

\section{The generalized thinning proposal}
\label{sec:method}
We write $X$ to denote a random variable that can be scalar-, vector-, or matrix-valued (and likewise for $\Xt{1},\ldots,\Xt{K}$). When referring to a random variable or parameter that can only be vector- or matrix-valued, we use bolded symbols.
\begin{definition}[Generalized  data thinning] 
\label{def:thinning}
Consider a family of distributions $\mathcal P=\{P_\theta:\theta\in\Omega\}$. Suppose that there exists a distribution $G_t$, not depending on $\theta$, and a deterministic function $T(\cdot)$ such that when we sample $(\Xt{1},\ldots,\Xt{K})|X$ from $G_X$, for  $X \sim P_\theta$, the following properties hold: 
\begin{enumerate}
  \item $\Xt{1},\ldots,\Xt{K}$ are mutually independent (with distributions depending on $\theta$),  and
    \item $X=T(\Xt{1},\ldots,\Xt{K})$. 
\end{enumerate}
Then we say that $\mathcal{P}$ is \emph{thinned} by the function $T(\cdot)$.
\end{definition}
When clear from context, sometimes we will say that ``$P_\theta$ is thinned'' or ``X is thinned'', by which we mean that the corresponding family $\mathcal P$ is thinned. 
Intuitively, we can think of thinning as breaking $X$ up into $K$ independent pieces, but in a very particular way that ensures that none of the information about $\theta$ is lost. The fact that no information is lost is evident from the requirement that $X=T(\Xt{1},\ldots,\Xt{K})$. 

Sample splitting \citep{cox1975note} can be viewed as a special case of generalized data thinning. 
\begin{remark}[Sample splitting] \label{remark:samplesplit}
Sample splitting, in which a sample of $n$ independent and identically distributed random variables is partitioned into $K$ subsamples, is a special case of generalized data thinning.  Here, $T(\cdot)$ is the function that takes in the subsamples as arguments, and concatenates and sorts their elements. For more details, see Section~\ref{sec:sample-splitting}.
\end{remark}

Furthermore, Definition~\ref{def:thinning} is closely related to the proposal of \cite{neufeld2023data}.

\begin{remark}[Thinning convolution-closed families of distributions] \label{remark:convclosed}

 \cite{neufeld2023data} show that some well-known families of convolution-closed distributions, such as the binomial, negative binomial, gamma, Poisson, and Gaussian, can be thinned, in the sense of Definition~\ref{def:thinning}, by addition:  
$T\left(\xt{1}, \ldots, \xt{K}\right) = \sum_{k=1}^K \xt{k}$. 
\end{remark}

The two examples above do not resemble each other: The first involves a non-parametric family of distributions and applies quite generally, while the second depends on a  specific property of the family of distributions.  Furthermore, the functions $T(\cdot)$ are quite different from each other.  It is natural to ask: How can we find families $\mathcal P$ that can be thinned?  Is there a unifying principle for the choice of $T(\cdot)$? How can we ensure that there exists a distribution $G_t$ as in Definition~\ref{def:thinning} that does not depend on $\theta$?  The following theorem answers these questions, and indicates that \emph{sufficiency} is the key principle required to ensure that the distribution $G_t$ does not depend on $\theta$.

\begin{theorem}[Main theorem] \label{thm:generalized-thinning}
Suppose $\mathcal P$ is thinned by a function $T(\cdot)$ and, for $X\sim P_\theta$, let $\Qt{1}_\theta\times\cdots\times\Qt{K}_\theta$ denote the distribution of the mutually independent random variables, $(\Xt{1},\ldots,\Xt{K})$, sampled as in Definition~\ref{def:thinning}. Then, the following hold:
\begin{enumerate}[(a)]
    \item $T(\Xt{1},\ldots,\Xt{K})$ is a sufficient statistic for $\theta$ based on $(\Xt{1},\ldots,\Xt{K})$.
    \item The distribution $G_t$  in Definition~\ref{def:thinning} is the conditional distribution
    $$
(\Xt{1},\ldots,\Xt{K})|T(\Xt{1},\ldots,\Xt{K})=t,
    $$
    where $(\Xt{1},\ldots,\Xt{K})\sim \Qt{1}_\theta\times\cdots\times\Qt{K}_\theta$.
    \end{enumerate}
\end{theorem}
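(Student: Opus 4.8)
The plan is to extract from Definition~\ref{def:thinning} the single structural fact that makes both parts true at once: on the joint probability space of $(X,\Xt{1},\ldots,\Xt{K})$, the variable $X$ and the statistic $T(\Xt{1},\ldots,\Xt{K})$ are equal almost surely, so conditioning on one is the same as conditioning on the other.

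First I would fix $\theta$ and build the joint law of $(X,\Xt{1},\ldots,\Xt{K})$ by drawing $X\sim P_\theta$ and then $(\Xt{1},\ldots,\Xt{K})\mid X\sim G_X$. Property~2 of Definition~\ref{def:thinning} says that $X=T(\Xt{1},\ldots,\Xt{K})$ holds with probability one under this law; Property~1 says the marginal law of $(\Xt{1},\ldots,\Xt{K})$ is exactly $\Qt{1}_\theta\times\cdots\times\Qt{K}_\theta$, which is how those factors are defined. Assuming the underlying spaces are standard Borel (so that regular conditional distributions exist), all of these objects are well defined.

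Next I would prove part (b). Because $X=T(\Xt{1},\ldots,\Xt{K})$ almost surely, the sub-$\sigma$-algebras generated by $X$ and by $T(\Xt{1},\ldots,\Xt{K})$ coincide up to null sets, so any regular conditional distribution of $(\Xt{1},\ldots,\Xt{K})$ given $T(\Xt{1},\ldots,\Xt{K})$ is also a version of the regular conditional distribution of $(\Xt{1},\ldots,\Xt{K})$ given $X$. By construction, the latter equals $G_t$ on the event $\{X=t\}$. Hence $G_t$ is a version of the conditional law of $(\Xt{1},\ldots,\Xt{K})$ given $T(\Xt{1},\ldots,\Xt{K})=t$, where $(\Xt{1},\ldots,\Xt{K})\sim\Qt{1}_\theta\times\cdots\times\Qt{K}_\theta$ --- which is exactly the assertion of (b). Part (a) then follows immediately: a statistic is sufficient for $\theta$ precisely when the conditional distribution of the data given the statistic does not depend on $\theta$, and by (b) this conditional distribution is $G_t$, which does not depend on $\theta$ by the hypothesis of Definition~\ref{def:thinning}.

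The main obstacle is purely measure-theoretic bookkeeping: justifying the existence of regular conditional distributions (immediate on standard Borel spaces) and the transfer of conditioning from $X$ to $T(\Xt{1},\ldots,\Xt{K})$ through the almost-sure equality, including checking that a null set of ``bad'' values of $t$ may be discarded uniformly over $\theta$ since $G_t$ does not depend on $\theta$. There is no computation to grind through --- the theorem is essentially a re-reading of the factorization of the joint law that Definition~\ref{def:thinning} already encodes.
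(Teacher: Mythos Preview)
Your proposal is correct and follows essentially the same route as the paper's proof: both hinge on the observation that $X=T(\Xt{1},\ldots,\Xt{K})$ almost surely, so conditioning on $X$ is the same as conditioning on $T(\Xt{1},\ldots,\Xt{K})$, which gives (b), and then (a) follows because $G_t$ is $\theta$-free. The paper's version is terser and does not spell out the measure-theoretic caveats (standard Borel spaces, regular conditional distributions, uniformity of null sets over $\theta$) that you flag, but the logical skeleton is identical.
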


Theorem~\ref{thm:generalized-thinning} is proven in Supplement A.1. Further, there is a simple algorithm for finding families of distributions $\mathcal{P}$ and functions $T(\cdot)$ such that $\mathcal P$ can be thinned by $T(\cdot)$.  

\begin{algorithm}[Finding distributions that can be thinned]\label{alg:recipe}
\textcolor{white}{.}
\begin{enumerate}
    \item Choose $K$ families of distributions, $\cQt{k}=\{\Qt{k}_\theta:\theta\in\Omega\}$ for $k=1,\ldots, K$.
    \item Let $(\Xt{1},\ldots,\Xt{K})\sim \Qt{1}_\theta\times\cdots\times\Qt{K}_\theta$, and let  $T(\Xt{1},\ldots,\Xt{K})$ denote a sufficient statistic for $\theta$.
    \item Let $P_\theta$ denote the distribution of $T(\Xt{1},\ldots,\Xt{K})$.  
\end{enumerate}
By construction, the family $\mathcal P=\{P_\theta:\theta\in\Omega\}$ is thinned by $T(\cdot)$.
\end{algorithm}

This recipe gives us a very succinct way to describe the distributions that can be thinned: \emph{We can thin the distributions of sufficient statistics}.
In particular, the recipe takes as input a joint distribution $\Qt{1}_\theta\times\cdots\times\Qt{K}_\theta$, and requires us to choose a sufficient statistic for $\theta$. Then, that statistic's distribution is the $P_\theta$ that can be thinned.

\section{Thinning natural exponential families}
\label{sec:natural-exp-fam}

In Section~\ref{subsec:natural}, we show how to thin a natural exponential family into two or more natural exponential families.  In Section~\ref{subsec:neufeld}, we show how the convolution-closed thinning proposal of \citet{neufeld2023data} can be understood in light of natural exponential family thinning.  Finally, in Section~\ref{subsec:natural-to-general}, we show how natural exponential families can be thinned into more general (i.e., not necessarily natural) exponential families.

\subsection{Thinning natural into natural exponential families} 
\label{subsec:natural}

A natural exponential family \citep{lehmann2005testing} starts with a known probability distribution $H$, and then forms a family of distributions $\mathcal P^H=\{P^H_\theta:\theta\in\Omega\}$ based on $H$:
\begin{equation}
    dP^H_\theta(x)=e^{x^\top\theta-\psi_H(\theta)}dH(x).\label{eq:natural}
\end{equation}
The normalizing constant $e^{-\psi_H(\theta)}$ ensures that $P_\theta$ is a probability distribution, and we take $\Omega$ to be the set of $\theta$ for which this normalization is possible (i.e. for which $\psi_H(\theta)<\infty$).  

The next theorem presents a property of $H$ that is  necessary and sufficient for the resulting natural exponential family $\mathcal P^H$ to be thinned by addition into $K$ natural exponential families.  To streamline the statement of the theorem, we start with a definition.
 
\begin{definition}[$K$-way convolution]
\label{def:conv}
A probability distribution $H$ is the $K$-way convolution of distributions $H_1,\ldots, H_K$ if $\sum_{k=1}^KY_k\sim H$ for $(Y_1,\ldots,Y_K)\sim H_1\times\cdots\times H_K$.
\end{definition}

\begin{theorem}[Thinning natural exponential families by addition]\label{thm:natural-exponential-families}
The natural exponential family $\mathcal P^H$ can be thinned by $T(\xt{1},\ldots,\xt{K})=\sum_{k=1}^K\xt{k}$ into $K$ natural exponential families $\mathcal P^{H_1},\ldots,\mathcal P^{H_K}$ if and only if $H$ is the $K$-way convolution of $H_1,\ldots,H_K$.
\end{theorem}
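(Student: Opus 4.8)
The plan is to prove the two implications separately: the ``if'' direction via Algorithm~\ref{alg:recipe}, and the ``only if'' direction by evaluating the thinning at a single parameter value.

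For the ``if'' direction, suppose $H$ is the $K$-way convolution of $H_1,\ldots,H_K$, and apply Algorithm~\ref{alg:recipe} with the families $\cQt k=\mathcal P^{H_k}$ and with $T(\xt 1,\ldots,\xt K)=\sum_{k=1}^K\xt k$. Two verifications are needed. First, that $T$ is sufficient for $\theta$ given mutually independent $\Xt k\sim P^{H_k}_\theta$: this is immediate from the Fisher--Neyman factorization theorem, since the joint density with respect to $H_1\times\cdots\times H_K$ equals $\exp\{(\sum_k\xt k)^\top\theta\}\cdot\exp\{-\sum_k\psi_{H_k}(\theta)\}$. Second, that the distribution $P_\theta$ constructed in step~3 of the algorithm --- the law of $\sum_k\Xt k$ --- coincides with $P^H_\theta$. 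I would establish this by a change-of-variables (pushforward) argument: for a nonnegative bounded test function $\phi$, write the expectation of $\phi(\sum_k\Xt k)$ as an integral against $H_1\times\cdots\times H_K$, then invoke Definition~\ref{def:conv} to recognize that the sum map pushes $H_1\times\cdots\times H_K$ forward to $H$, yielding $\int\phi(t)\,e^{t^\top\theta}\,dH(t)$ up to the constant $e^{-\sum_k\psi_{H_k}(\theta)}$. Applying the same pushforward identity with $\phi\equiv1$ (and the tilt retained) gives $e^{\sum_k\psi_{H_k}(\theta)}=\int e^{t^\top\theta}\,dH(t)=e^{\psi_H(\theta)}$, hence $\psi_H(\theta)=\sum_k\psi_{H_k}(\theta)$; this simultaneously collapses the normalizing constants and shows that each $\mathcal P^{H_k}$ is defined on all of $\Omega=\{\theta:\psi_H(\theta)<\infty\}$. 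Thus the expectation of $\phi(\sum_k\Xt k)$ is $\int\phi\,dP^H_\theta$, so $P_\theta=P^H_\theta$, and Algorithm~\ref{alg:recipe} concludes that $\mathcal P^H$ is thinned by $T$ into $\mathcal P^{H_1},\ldots,\mathcal P^{H_K}$.

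For the ``only if'' direction, suppose $\mathcal P^H$ is thinned by $T(\xt1,\ldots,\xt K)=\sum_k\xt k$ into $\mathcal P^{H_1},\ldots,\mathcal P^{H_K}$. The key observation is that $\theta=0$ always belongs to the natural parameter space: $\psi_H(0)=\log\int dH=0$, and likewise $\psi_{H_k}(0)=0$, so $P^H_0=H$ and $P^{H_k}_0=H_k$. Instantiating Definition~\ref{def:thinning} at $\theta=0$ therefore produces, from $X\sim P^H_0=H$, mutually independent random variables $\Xt1,\ldots,\Xt K$ with $\Xt k\sim P^{H_k}_0=H_k$ and $X=\sum_k\Xt k$ --- which is exactly the assertion of Definition~\ref{def:conv} that $H$ is the $K$-way convolution of $H_1,\ldots,H_K$.

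The ``only if'' direction is thus essentially a one-line consequence of the fact that the base measures are themselves the $\theta=0$ members of their families. The bulk of the effort lies in the ``if'' direction, and within it in the identity ``law of $\sum_k\Xt k$ is $P^H_\theta$'': the mild technical points are that the tilt $e^{t^\top\theta}$ is unbounded (handled by its $H$-integrability for $\theta\in\Omega$ together with a routine truncation/monotone-convergence passage from bounded nonnegative $\phi$ to general $\phi$), and that the parameter spaces of $\mathcal P^H$ and the $\mathcal P^{H_k}$ must be reconciled, which falls out of the identity $\psi_H=\sum_k\psi_{H_k}$ derived along the way.
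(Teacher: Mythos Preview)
Your proposal is correct and follows essentially the same approach as the paper: both directions are proved identically (Algorithm~\ref{alg:recipe} with factorization for sufficiency in the ``if'' direction, evaluation at $\theta=0$ for the ``only if'' direction). Your pushforward/test-function argument for identifying the law of $\sum_k\Xt{k}$ as $P^H_\theta$ is a slightly more careful rendering of the paper's direct integral computation, and your explicit derivation of $\psi_H=\sum_k\psi_{H_k}$ makes manifest a step the paper invokes via moment generating functions.
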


The $K$ natural exponential families in Theorem \ref{thm:natural-exponential-families} can be different from each other, but they are all indexed by the same $\theta\in\Omega$ that was used in the original family $\mathcal P^H$. The proof of Theorem~\ref{thm:natural-exponential-families} is in Supplement A.2. 

\cite{neufeld2023data} show that it is possible to thin a Gaussian random variable by addition into $K$ independent Gaussians.  We now see that this result follows from Theorem~\ref{thm:natural-exponential-families}. 

\begin{exmp}[Thinning $N_n(\boldsymbol\theta, \mathbf{I}_n)$] 
\label{exmp:gaussian}
Distributions of the form  $N_n(\boldsymbol\theta, \mathbf{I}_n)$ are a natural exponential family indexed by $\boldsymbol\theta\in\mathbb{R}^n$. It can be written in the notation of \eqref{eq:natural}  as 
$\mathcal{P}^{H}$, where $H$ represents the $N_n(\mathbf{0}_n, \mathbf{I}_n)$ distribution.  Furthermore, $H$ is the $K$-way convolution of $H_k = N_n(\mathbf{0}_n, \epsilon_k \mathbf{I}_n)$ for $k = 1, 2, \ldots, K$, where $\epsilon_1,\ldots,\epsilon_K>0$ and $\sum_{k=1}^K \epsilon_k=1$.  Thus, by Theorem~\ref{thm:natural-exponential-families},   we can thin  $\mathcal{P}^{H}$ by addition into $\mathcal P^{H_1}, \ldots, \mathcal P^{H_K}$, where $P_{\boldsymbol\theta}^{H_k}=N_n(\epsilon_k \boldsymbol\theta, \epsilon_k \mathbf{I}_n)$. 
\end{exmp}

In Supplement B, we show that Example \ref{exmp:gaussian} is closely connected 
to a randomization strategy that has been frequently used in the literature.

Not all natural exponential families satisfy the condition of Theorem \ref{thm:natural-exponential-families}. 
We prove in Section \ref{sec:bernoulli} that the distribution $H=\text{Bernoulli}(0.5)$ cannot be written as the sum of two independent, non-constant random variables. 
Since $\mathcal P^{\text{Bernoulli}(0.5)}$ is the $\text{Bernoulli}([1+e^{-\theta}]^{-1})$ natural exponential family, Theorem \ref{thm:natural-exponential-families} implies that Bernoulli random variables cannot be thinned by addition into natural exponential families. In Section~\ref{sec:bernoulli} we will further prove that {\em no function} $T(\cdot)$ can thin the Bernoulli family.

\subsection{Connections to \citet{neufeld2023data}}
\label{subsec:neufeld}

\cite{neufeld2023data} focus on convolution-closed families, i.e., those for which convolving two or more distributions (see Definition~\ref{def:conv}) in the family produces a distribution that is in the family. 
They provide a recipe for decomposing 
a random variable $X$ drawn from a distribution in such a family into independent random variables $X^{(1)}, \ldots, X^{(K)}$ that sum to yield $X$.  
We now show that their results are encompassed by Theorem~\ref{thm:natural-exponential-families}.

\emph{Exponential dispersion families} \citep{jorgensen1992exponential, jorgensen1998stationary} are a subclass of convolution-closed families. Given a distribution $H$ with $\psi_H(\theta)<\infty$ for $\theta\in\Omega$ (as in \eqref{eq:natural}), we identify the set of distributions $H_\lambda$ for which $\psi_{H_\lambda}(\cdot)=\lambda\psi_H(\cdot)$ (i.e., distributions whose cumulant generating function is a multiple of $H$'s cumulant generating function).  We define $\Lambda$ to be the set of $\lambda$ for which such a distribution $H_\lambda$ exists.  Then, an (additive) \emph{exponential dispersion family} is $\mathcal{P} = \bigcup_{\lambda \in \Lambda}\mathcal P^{H_\lambda}$, where $\mathcal P^{H_\lambda}$ is the natural exponential family generated by $H_\lambda$ (see \eqref{eq:natural}).  The distributions in $\mathcal P$ are indexed over $(\theta,\lambda)\in\Omega\times\Lambda$ and take the form $dP^{H_\lambda}_{\theta} (x) = e^{x^\top \theta - \lambda \psi_H(\theta)} dH_\lambda(x)$. 

In words, an exponential dispersion family results from  combining a collection of related natural exponential families.  For example, starting with $H=\text{Bernoulli(1/2)}$, we can take $\Lambda=\mathbb Z^+$ since for any positive integer $\lambda$, $H_\lambda=\text{Binomial}(\lambda,1/2)$ satisfies the necessary cumulant generating function relationship.  Then, $\mathcal P^{\text{Binomial}(\lambda,1/2)}$ corresponds to the binomial natural exponential family that results from fixing $\lambda$. 
Finally, allowing $\lambda$ to vary gives the full binomial exponential dispersion family, which is the set of all binomial distributions (varying both of the parameters of the binomial distribution).

By construction, for any $\lambda_1,\ldots,\lambda_K\in\Lambda$, convolving $P^{H_{\lambda_1}}_\theta,\ldots,P^{H_{\lambda_K}}_\theta$ gives the distribution $P^{H_\lambda}_\theta$, where $\lambda=\sum_{k=1}^K\lambda_k$.  The next corollary is an immediate application of Theorem~\ref{thm:natural-exponential-families} in the context of exponential dispersion families.
Notably, 
 the distributions $\Qt{k}_\theta$ themselves still belong to the exponential dispersion family $\mathcal P$ to which the distribution of $X$ belongs. 
\begin{corollary}[Thinning while remaining inside an exponential dispersion family] \label{cor:neufeld}
Consider an exponential dispersion family $\mathcal{P} = \bigcup_{\lambda \in \Lambda}\mathcal P^{H_\lambda}$ and suppose $\lambda_1,\ldots,\lambda_K\in\Lambda$.  Then for $\lambda=\sum_{k=1}^K\lambda_k$, we can thin the natural exponential family $\mathcal P^{H_\lambda}$ by $T(\xt{1},\ldots,\xt{K})=\sum_{k=1}^K\xt{k}$ into the natural exponential families $\mathcal P^{H_{\lambda_1}},\ldots,\mathcal P^{H_{\lambda_K}}$.
\end{corollary}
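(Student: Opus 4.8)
The plan is to reduce the claim to the convolution condition in Theorem~\ref{thm:natural-exponential-families} and then verify that condition from the definition of an exponential dispersion family. By Theorem~\ref{thm:natural-exponential-families}, the natural exponential family $\mathcal P^{H_\lambda}$ is thinned by $T(\xt{1},\ldots,\xt{K})=\sum_{k=1}^K\xt{k}$ into $\mathcal P^{H_{\lambda_1}},\ldots,\mathcal P^{H_{\lambda_K}}$ if and only if $H_\lambda$ is the $K$-way convolution (Definition~\ref{def:conv}) of $H_{\lambda_1},\ldots,H_{\lambda_K}$. So the entire task is to establish this convolution identity, using the defining property of an exponential dispersion family, namely $\psi_{H_\mu}(\cdot)=\mu\,\psi_H(\cdot)$ for every $\mu\in\Lambda$, together with the hypothesis $\lambda=\sum_{k=1}^K\lambda_k$. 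Since each $\lambda_k\in\Lambda$, the distributions $H_{\lambda_k}$ exist, so this is a well-posed statement.

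Next I would carry out the cumulant generating function computation. Let $(Y_1,\ldots,Y_K)\sim H_{\lambda_1}\times\cdots\times H_{\lambda_K}$. Because $\psi_{H_\mu}(\theta)=\log\int e^{x^\top\theta}\,dH_\mu(x)$ is the cumulant generating function of $H_\mu$, independence of the $Y_k$ gives that the cumulant generating function of $\sum_{k=1}^K Y_k$ at $\theta$ equals $\sum_{k=1}^K\psi_{H_{\lambda_k}}(\theta)=\left(\sum_{k=1}^K\lambda_k\right)\psi_H(\theta)=\lambda\,\psi_H(\theta)=\psi_{H_\lambda}(\theta)$ for all $\theta\in\Omega$. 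Hence $\sum_{k=1}^K Y_k$ and $H_\lambda$ have the same cumulant generating function on $\Omega$, so $\sum_{k=1}^K Y_k\sim H_\lambda$; that is, $H_\lambda$ is the $K$-way convolution of $H_{\lambda_1},\ldots,H_{\lambda_K}$. Applying Theorem~\ref{thm:natural-exponential-families} then finishes the proof, and the final sentence of the corollary's surrounding discussion is immediate: the thinned components $\Qt{k}_\theta=P^{H_{\lambda_k}}_\theta$ lie in $\mathcal P=\bigcup_{\mu\in\Lambda}\mathcal P^{H_\mu}$ since $\lambda_k\in\Lambda$.

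The main obstacle is the one non-formal step, that agreement of cumulant generating functions on $\Omega$ determines the distribution: the textbook uniqueness theorem for moment generating functions requires agreement on a neighborhood of the origin, and here $\Omega$ need not contain $0$. I would handle this by an exponential-tilting argument in place of, or in addition to, the CGF computation. Fix any $\theta_0\in\Omega$; tilting a distribution by $\theta_0$ (reweighting its density by $e^{x^\top\theta_0}$ and renormalizing) is an invertible map that sends $H_\mu$ to $P^{H_\mu}_{\theta_0}$ and commutes with convolution, so it suffices to show $P^{H_\lambda}_{\theta_0}$ is the convolution of $P^{H_{\lambda_1}}_{\theta_0},\ldots,P^{H_{\lambda_K}}_{\theta_0}$ — which is precisely the ``by construction'' property of exponential dispersion families recorded just before the corollary statement. (Alternatively, one can run the computation above with characteristic functions rather than $\psi$, since characteristic functions are everywhere defined and determine the law.) Either route is short; the only real care needed is not to implicitly assume $0\in\Omega$.
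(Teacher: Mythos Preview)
Your proposal is correct and follows essentially the same route as the paper: the paper treats the corollary as an immediate application of Theorem~\ref{thm:natural-exponential-families}, noting just before the statement that the convolution property $P^{H_\lambda}_\theta = P^{H_{\lambda_1}}_\theta * \cdots * P^{H_{\lambda_K}}_\theta$ holds ``by construction'' from the defining cumulant generating function relationship $\psi_{H_\mu}=\mu\,\psi_H$. Your third paragraph, addressing whether equality of cumulant generating functions on $\Omega$ suffices to identify the law, is more careful than the paper itself, which does not flag this point.
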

This result corresponds exactly to the data thinning proposal of  \cite{neufeld2023data}.  We see from Corollary~\ref{cor:neufeld} that that proposal thins a natural exponential family, $\mathcal P^{H_\lambda}$, into a \emph{different} set of natural exponential families, 
$\mathcal P^{H_{\lambda_1}},\ldots,\mathcal P^{H_{\lambda_K}}$. However, 
 from the perspective of exponential dispersion families, it thins an exponential dispersion family into the same exponential dispersion family.  Continuing the binomial example from above, the corollary tells us that we can thin the binomial family with $\lambda$ as the number of trials into two or more binomial families with smaller numbers of trials, provided that $\lambda>1$.

\cite{neufeld2023data} focus on convolution-closed families, not exponential dispersion families. However, all convolution-closed families that have moment-generating functions can be written as exponential dispersion families \citep{jorgensen1998stationary}. The Cauchy distribution is convolution-closed, but does not have a moment generating function and thus is not an exponential dispersion family. As we will see in Example~\ref{ex:cauchy}, the $\text{Cauchy}(\theta_1, \theta_2)$ distribution cannot be thinned by addition: Decomposing it using the recipe of \citet{neufeld2023data} requires knowledge of both unknown parameters. Thus, not all convolution-closed distributions can be thinned by addition in the sense of Definition~\ref{def:thinning}. However, \cite{neufeld2023data} claim that all convolution-closed distributions \emph{can} be thinned. This apparent discrepancy  is due to a slight difference in the definition of thinning between our paper and theirs: Our Definition~\ref{def:thinning} requires that $G_t$ not depend on $\theta$; however, \cite{neufeld2023data} have no such requirement. In practice, data thinning is useful only if $G_t$ does not depend on $\theta$, and so there is no meaningful difference between the two definitions.

\subsection{Thinning natural into general exponential families} 
\label{subsec:natural-to-general}

In this section, we apply Algorithm~\ref{alg:recipe} in the case that $\cQt{k}$ are (possibly non-natural) exponential families, for which the sufficient statistic need not be the identity. 
In particular, for $k=1,\ldots, K$, we let $\cQt{k}=\{\Qt{k}_\theta:\theta\in\Omega\}$ denote an exponential family based on a known distribution $H_k$ and sufficient statistic $T^{(k)}(\cdot)$:
\begin{equation}\label{eq:exp-fam}
d\Qt{k}_\theta(x)=\exp\{[T^{(k)}(x)]^\top\eta(\theta)-\psi_k(\theta)\}dH_k(x).
\end{equation}
As in Section \ref{subsec:natural}, $e^{-\psi_k(\theta)}$ is the normalizing constant needed to ensure that $\int d\Qt{k}_\theta(x)=1$ and $\Omega$ is the set of $\theta$ for which $\psi_k(\theta)<\infty$.  The function $\eta(\cdot)$ maps $\theta$ to the natural parameter.  We note that  $\sum_{k=1}^K T^{(k)}(\Xt{k})$ is a sufficient statistic for $\theta$ based on $(\Xt{1},\dots,\Xt{K})\sim \Qt{1}_\theta\times\cdots\times\Qt{K}_\theta$.   Then,  Algorithm~\ref{alg:recipe} tells us that we can thin the distribution of this sufficient statistic. This  leads to the next result.

\begin{proposition}[Thinning natural exponential families with more general functions $T(\cdot)$] \label{prop:exp-family}
Let $\Xt{1},\dots,\Xt{K}$ be independent random variables with $\Xt{k}\sim\Qt{k}_\theta$ for $k=1,\ldots,K$ from any (i.e., possibly non-natural) exponential families $\cQt{k}$ as in \eqref{eq:exp-fam}.
Let $P_\theta$ denote the distribution of $\sum_{k=1}^K T^{(k)}(\Xt{k})$. Then, $\mathcal P=\{P_\theta:\theta\in\Omega\}$ is a natural exponential family, and we can thin it  into $\Xt{1},\ldots,\Xt{K}$ using the function $T(\xt{1},\dots,\xt{K})=\sum_{k=1}^K T^{(k)}(\xt{k})$.
\end{proposition}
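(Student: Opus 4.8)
The plan is to verify the two hypotheses of Definition~\ref{def:thinning} directly, leaning on Algorithm~\ref{alg:recipe} and on the well-known fact that a sum of independent exponential-family random variables (sharing the same natural parameter $\eta(\theta)$) is again an exponential family whose sufficient statistic is the sum of the individual sufficient statistics. First I would write down the joint density of $(\Xt{1},\dots,\Xt{K})\sim\Qt{1}_\theta\times\cdots\times\Qt{K}_\theta$ from \eqref{eq:exp-fam}: it factors as $\exp\{[\sum_{k=1}^K T^{(k)}(x_k)]^\top\eta(\theta)-\sum_{k=1}^K\psi_k(\theta)\}\prod_{k=1}^K dH_k(x_k)$. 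By the Fisher--Neyman factorization theorem, $T(\xt{1},\dots,\xt{K})=\sum_{k=1}^K T^{(k)}(\xt{k})$ is sufficient for $\theta$. This is exactly Step~2 of Algorithm~\ref{alg:recipe} with this particular choice of the $\cQt{k}$ and of $T$.

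Next I would identify $P_\theta$, the distribution of $T(\Xt{1},\dots,\Xt{K})$, and show it is a natural exponential family. The cleanest route is to push the joint density forward through the map $T$: integrating out the fibers, the density of $T$ with respect to the pushforward $H:=T_\#(H_1\times\cdots\times H_K)$ of the product base measure is $e^{t^\top\eta(\theta)-\psi(\theta)}$, where $\psi(\theta)=\sum_{k=1}^K\psi_k(\theta)$, because the exponential tilt $\exp\{[\sum_k T^{(k)}(x_k)]^\top\eta(\theta)\}$ depends on the $x_k$ only through $t=\sum_k T^{(k)}(x_k)$ and hence factors out of the fiber integral. Reparametrizing by $\vartheta:=\eta(\theta)$ (or, if $\eta$ is not injective, simply noting that $\{P_\theta\}$ is a subfamily of the natural exponential family $\mathcal P^H$ from \eqref{eq:natural} indexed by $\eta(\Omega)$), we get $dP_\theta(t)=e^{t^\top\eta(\theta)-\psi(\theta)}\,dH(t)$, which is of the form \eqref{eq:natural}, so $\mathcal P$ is a natural exponential family. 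A small care point here is the measure-theoretic justification that the fiber integral is legitimate and that $H$ is a genuine probability distribution (it is, since $\int dH = \int d(H_1\times\cdots\times H_K)=1$); I would state this but not belabor it.

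Finally I would assemble the thinning statement. Given $X\sim P_\theta$, define $G_x$ to be the conditional distribution of $(\Xt{1},\dots,\Xt{K})$ given $T(\Xt{1},\dots,\Xt{K})=x$ under the product law $\Qt{1}_\theta\times\cdots\times\Qt{K}_\theta$. By sufficiency of $T$ (established in the first step), this conditional distribution does not depend on $\theta$, so $G_x$ is a legitimate, $\theta$-free kernel as required by Definition~\ref{def:thinning}. Sampling $(\Xt{1},\dots,\Xt{K})\mid X\sim G_X$ and then marginalizing reproduces, by construction, the product law $\Qt{1}_\theta\times\cdots\times\Qt{K}_\theta$, so the $\Xt{k}$ are mutually independent with $\Xt{k}\sim\Qt{k}_\theta$; and $X=T(\Xt{1},\dots,\Xt{K})$ holds almost surely since we conditioned on that event. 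Alternatively, and perhaps more slickly, I would simply invoke Algorithm~\ref{alg:recipe} wholesale: the construction in the proposition is precisely an instance of that recipe, so the conclusion that $\mathcal P$ is thinned by $T$ is immediate, and the only genuinely new content to prove is that the resulting $P_\theta$ is a \emph{natural} exponential family.

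I expect the main obstacle to be the measure-theoretic bookkeeping in the second step --- making the ``integrate out the fiber'' argument rigorous when $T^{(k)}$ are arbitrary (possibly many-to-one, possibly vector-valued) statistics and the $H_k$ are general base measures --- together with handling the case where $\eta$ is not one-to-one, so that the natural parameter space of $\mathcal P$ is $\eta(\Omega)$ rather than $\Omega$ itself. Everything else (sufficiency via factorization, $\theta$-freeness of the conditional, independence) is routine once the density of $P_\theta$ is in hand.
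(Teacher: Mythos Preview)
Your proposal is correct and follows essentially the same approach as the paper. The paper's argument is terse: it invokes Algorithm~\ref{alg:recipe} for the thinning claim and cites \citet[Lemma~2.7.2(i)]{lehmann2005testing} for the fact that the sufficient statistic of an exponential family has a natural exponential family distribution, whereas you unpack both of these by hand (writing out the joint density, applying factorization, and pushing forward through $T$). Your measure-theoretic caution about the fiber integral and the non-injectivity of $\eta$ goes beyond what the paper addresses, but the core route is the same.
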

The fact that $\mathcal P$ in this result is a natural exponential family follows from recalling that the sufficient statistic of an exponential family follows a natural exponential family \citep[Lemma~2.7.2(i)]{lehmann2005testing}. 
Many named exponential families are not natural exponential families, involving non-identity functions $T^{(k)}(\cdot)$, such as the logarithm or polynomials.  Therefore, to thin into those families, Proposition~\ref{prop:exp-family} will be useful.

Proposition~\ref{prop:exp-family} implies that many natural exponential families \textit{can} be thinned by a function of the form $T(\xt{1},\dots,\xt{K})=\sum_{k=1}^K T^{(k)}(\xt{k})$. Theorem~\ref{thm:backwards-natexpfam} shows that if a full-rank natural exponential family can be thinned, then the thinning function \textit{must} take this form. 

\begin{theorem}[Thinning functions for natural exponential families] \label{thm:backwards-natexpfam}
Suppose $X\sim P_\theta$, where $\mathcal{P}=\{P_\theta: \theta\in\Omega\}$ is a full-rank natural exponential family with density/mass function $p_\theta(x)=\exp(\theta^\top x - \psi(\theta))h(x)$. If $\mathcal{P}$ can be thinned by $T(\cdot)$ into $\Xt{1},\dots,\Xt{K}$, then:
\begin{enumerate}
    \item The function $T(\xt{1},\dots,\xt{K})$ is of the form $\sum_{k=1}^K T^{(k)}(\xt{k})$.
    \item $\Xt{k}\overset{\text{ind}}{\sim} \Qt{k}_\theta$ where $\Qt{k}_\theta$ is an exponential family with sufficient statistic $T^{(k)}(\Xt{k})$.
\end{enumerate}
\end{theorem}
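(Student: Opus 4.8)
## Proof proposal for Theorem~\ref{thm:backwards-natexpfam}

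The plan is to convert the thinning hypothesis into a functional equation for $T$ via sufficiency, and then to solve that equation by exploiting the full-rank structure of $\mathcal{P}$. By Theorem~\ref{thm:generalized-thinning}, $(\Xt{1},\dots,\Xt{K})$ is generated by drawing $t\sim P_\theta$ and then $(\Xt{1},\dots,\Xt{K})\sim G_t$, where $G_t$ does not depend on $\theta$ and is supported on $\{T=t\}$. Writing $q^{(k)}_\theta$ for the density of $\Qt{k}_\theta$ with respect to a fixed ($\theta$-free) dominating measure, and using $p_\theta(t)=\exp(\theta^\top t-\psi(\theta))h(t)$, this disintegration shows that on the support of $(\Xt{1},\dots,\Xt{K})$,
$$
\prod_{k=1}^K q^{(k)}_\theta\!\left(x^{(k)}\right)=\exp\!\left\{\theta^\top T\!\left(x^{(1)},\dots,x^{(K)}\right)-\psi(\theta)\right\}\,R\!\left(x^{(1)},\dots,x^{(K)}\right)
$$
for some positive, $\theta$-free function $R$. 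Because all of the $\theta$-dependence sits in the exponential factor, which never vanishes for $\theta\in\Omega$, the set on which the left-hand side is positive is the same for every $\theta$; and since $\Xt{1},\dots,\Xt{K}$ are independent, this set is a product $S^{(1)}\times\cdots\times S^{(K)}$ with each $S^{(k)}$ free of $\theta$.

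Next I would take logarithms on this common support and difference in $\theta$. Fixing a reference $\theta_0$ in the interior of $\Omega$ (nonempty since $\mathcal{P}$ is full rank) and subtracting the log-identity at $\theta_0$ from the one at a general $\theta\in\Omega$ cancels all the terms coming from $R$, leaving
$$
\sum_{k=1}^K f^{(k)}_\theta\!\left(x^{(k)}\right)=(\theta-\theta_0)^\top T\!\left(x^{(1)},\dots,x^{(K)}\right)-\big(\psi(\theta)-\psi(\theta_0)\big),
$$
where $f^{(k)}_\theta:=\log q^{(k)}_\theta-\log q^{(k)}_{\theta_0}$. This is the key functional equation: for each fixed $\theta$, a fixed linear image of $T$ is a sum of single-coordinate functions (up to an additive constant). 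To recover all of $T$, I would choose $\theta_1,\dots,\theta_d\in\Omega$, with $d$ the dimension of $\theta$, such that $\theta_1-\theta_0,\dots,\theta_d-\theta_0$ form a basis of $\mathbb{R}^d$ --- again available because $\mathcal{P}$ is full rank --- and solve the resulting $d$ linear equations for $T(x^{(1)},\dots,x^{(K)})$ by inverting the matrix whose rows are the $(\theta_j-\theta_0)^\top$. Each right-hand side is a constant plus $\sum_k(\text{function of }x^{(k)})$, and this structure is preserved under the linear solve, so $T(x^{(1)},\dots,x^{(K)})=c+\sum_{k=1}^K T^{(k)}(x^{(k)})$ for a constant vector $c$ and measurable functions $T^{(k)}$; absorbing $c$ into $T^{(1)}$ gives Part~1.

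For Part~2, I would substitute this additive form of $T$ back into the functional equation, obtaining
$$
\sum_{k=1}^K\Big[f^{(k)}_\theta\!\left(x^{(k)}\right)-(\theta-\theta_0)^\top T^{(k)}\!\left(x^{(k)}\right)\Big]=\psi(\theta_0)-\psi(\theta).
$$
The left-hand side is a sum of functions each depending on a single coordinate, while the right-hand side is constant in $(x^{(1)},\dots,x^{(K)})$; because the support is the product $S^{(1)}\times\cdots\times S^{(K)}$ (here the independence of the $\Xt{k}$ is essential), a standard separation-of-variables argument forces each summand to be constant in its argument, say equal to $-\psi_k(\theta)$. Unwinding the definition of $f^{(k)}_\theta$ then yields
$$
q^{(k)}_\theta\!\left(x^{(k)}\right)=\exp\!\left\{(\theta-\theta_0)^\top T^{(k)}\!\left(x^{(k)}\right)-\psi_k(\theta)\right\}q^{(k)}_{\theta_0}\!\left(x^{(k)}\right),
$$
which exhibits $\Qt{k}_\theta$ as an exponential family of the form~\eqref{eq:exp-fam}, with carrier distribution $\Qt{k}_{\theta_0}$, natural parameter $\eta(\theta)=\theta-\theta_0$, and sufficient statistic $T^{(k)}$, as claimed.

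I expect the main obstacle to be the measure-theoretic bookkeeping in the first paragraph: one must verify that the densities involved are strictly positive on a common, $\theta$-free set (so that the logarithms, and hence the entire functional-equation argument, are legitimate), and that combining the disintegration of Theorem~\ref{thm:generalized-thinning} with the exponential form of $p_\theta$ genuinely isolates all the $\theta$-dependence in the single factor $\exp(\theta^\top T(x)-\psi(\theta))$. Once that setup is secured, the rest is linear algebra together with the elementary fact that a sum of univariate functions on a product domain that is constant must be coordinatewise constant.
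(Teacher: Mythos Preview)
Your argument is correct and follows the same overall route as the paper: both start from Theorem~\ref{thm:generalized-thinning}, write the conditional density, use that it is $\theta$-free to obtain the functional equation
\[
(\theta-\theta_0)^\top T(x^{(1)},\dots,x^{(K)})=\sum_{k=1}^K\bigl[\log q^{(k)}_\theta(x^{(k)})-\log q^{(k)}_{\theta_0}(x^{(k)})\bigr]+\psi(\theta)-\psi(\theta_0),
\]
and then use the full-rank hypothesis to extract the additive form of $T$ and the exponential-family form of each $\Qt{k}_\theta$. The only real difference is in the order of the last two steps and in how full-rank is invoked: you pick $d$ values $\theta_1,\dots,\theta_d$ so that the $\theta_j-\theta_0$ span $\mathbb{R}^d$ and invert the resulting linear system to read off $T=\sum_k T^{(k)}+c$ directly, and only afterwards substitute back and use the product support to conclude each summand is constant, yielding the exponential-family form of $q^{(k)}_\theta$. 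The paper instead first differences the functional equation in a single coordinate $x^{(k)}$ (holding the others fixed) to deduce that $\log q^{(k)}_\theta(x)$ is affine in $\theta$, i.e.\ already of exponential-family form with some sufficient statistic $T^{(k)}$, then sums these over $k$ and uses full-rank (an open ball around $\theta_0$, hence all standard-basis directions) to identify $T$ with $\sum_k T^{(k)}$. Your linear-algebra inversion is slightly more streamlined; the paper's coordinatewise differencing makes the exponential-family conclusion for each $\Qt{k}_\theta$ appear earlier. Both exploit exactly the same ingredients, and your cautionary remarks about the measure-theoretic bookkeeping (common $\theta$-free support, positivity for taking logs) apply equally to the paper's version.
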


The proof of Theorem~\ref{thm:backwards-natexpfam} is provided in Supplement A.3. 

To illustrate the flexibility provided by Proposition~\ref{prop:exp-family} and Theorem~\ref{thm:backwards-natexpfam}, we demonstrate that a natural exponential family $\mathcal P$ can be thinned by different functions $T(\cdot)$, leading to families of distributions $\cQt{1},\ldots,\cQt{K}$ different from $\mathcal P$. Specifically, we consider three possible $K$-fold thinning strategies for a gamma distribution when the shape, $\alpha$, is known but the rate\footnote{Although $\theta$ is often used in the gamma distribution to denote the scale parameter, here we use it to denote the rate parameter.}, $\theta$, is unknown.  

\begin{exmp}[Thinning $\text{Gamma}(\alpha,\theta)$ with $\alpha$ known, approach 1] \label{ex:dtgamma} 
Following Algorithm~\ref{alg:recipe}, we start with $\Xt{k} \overset{iid}{\sim} \text{Gamma}\left(\frac{\alpha}{K},\theta\right)$ for $k=1,\ldots,K$, and note that $T(\Xt{1},\dots,\Xt{K})=\sum_{k=1}^K\Xt{k}$ is sufficient for $\theta$. Thus, we can thin the distribution of $\sum_{k=1}^K\Xt{k}$.  A well-known property of the gamma distribution tells us that this is a $\text{Gamma}(\alpha,\theta)$ distribution.  Sampling from $G_t$ as in Theorem~\ref{thm:generalized-thinning} 
corresponds exactly to the multi-fold gamma data thinning recipe of \cite{neufeld2023data} where $\epsilon_k=\frac{1}{K}$.
\end{exmp}

Alternatively, when $\alpha$ can be expressed as half of a natural number, we can apply Proposition~\ref{prop:exp-family} to decompose the gamma family into centred normal data. 

\begin{exmp}[Thinning $\text{Gamma}(\alpha,\theta)$ with $\alpha=K/2$ known, approach 2] \label{ex:scaled-normal}
Starting with $\Xt{k} \overset{iid}{\sim} N(0,\frac{1}{2\theta})$, notice that $T^{(k)}(\xt{k})=(\xt{k})^2$. We thus apply Proposition~\ref{prop:exp-family} using $T(\xt{1},\dots,\xt{K})=\sum_{k=1}^K(\xt{k})^2$ to thin the sufficient statistic, $\sum_{k=1}^K(\Xt{k})^2\sim\frac{1}{2\theta}\chi_K^2=\text{Gamma}\left(\frac{K}{2},\theta\right)$, into $(\Xt{1},\dots,\Xt{K})$. The function $G_t$ from Theorem~\ref{thm:generalized-thinning} is the conditional distribution $(\Xt{1},\dots,\Xt{K})|\sum_{k=1}^K(\Xt{k})^2=t$. By rotational symmetry of the $N_K(0,(2\theta)^{-1}\mathbf{I}_K)$ distribution (the joint distribution of $(\Xt{1},\dots,\Xt{K})$), $G_t$ is the uniform distribution on the $(K-1)$-sphere of radius $t^{1/2}$. To sample from this conditional distribution, we generate $\mathbf{Z}\sim N_K(0,\mathbf{I}_K)$ and then take  $(\Xt{1},\dots,\Xt{K})$ to be $t^{1/2}\frac{\mathbf{Z}}{\|\mathbf{Z}\|_2}$.
\end{exmp}
If $\alpha$ is a natural number, then applying a similar logic enables us to thin the gamma family with unknown rate into the Weibull family with unknown scale; see Example C.1 in Supplement C.1.1. From a theoretical perspective, when $\alpha$ is a natural number, there is no reason to prefer one of the  three gamma thinning strategies over another. However, there may be practical considerations: For instance,  the strategy  in Example \ref{ex:scaled-normal} may be preferred due to the convenience of working with Gaussian data. In general, if multiple thinning strategies are available, then the choice  can be driven by modeling convenience.

\section{Indirect thinning of general exponential families} \label{sec:general-exp}

Sometimes rather than thinning $X$, we may choose  to thin a function $S(X)$.  When $S(X)$ is sufficient for $\theta$ based on $X$, the next proposition tells us that  thinning $S(X)$ rather than $X$ does not result in a loss of  information about $\theta$.  We emphasize that we are using the concept of sufficiency in two ways  here: (i) $S(X)$ is sufficient for $\theta$ based on $X\sim P_\theta$, and (ii)  $T(\Xt{1},\ldots,\Xt{K})$  is sufficient for $\theta$ based on $(\Xt{1},\ldots,\Xt{K})\sim \Qt{1}_\theta\times\cdots\times\Qt{K}_\theta$.  

\begin{proposition}[Thinning a sufficient statistic preserves information]\label{prop:thin-S}
Suppose $X\sim P_\theta\in\mathcal P$ has a sufficient statistic $S(X)$ for $\theta$, and we thin $S(X)$ by  $T(\cdot)$. That is, conditional on $S(X)$ (and without knowledge of $\theta$) we sample $\Xt{1},\ldots,\Xt{K}$ that are mutually independent and satisfy
$S(X)=T(\Xt{1},\ldots,\Xt{K})$.
Under regularity conditions 
needed for Fisher information to exist, we have that
$I_X(\theta)=\sum_{k=1}^K I_{\Xt{k}}(\theta)$.
\end{proposition}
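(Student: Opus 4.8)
The plan is to exploit the two layers of sufficiency and combine them with the standard additivity of Fisher information under independence. First I would use the fact that $S(X)$ is sufficient for $\theta$ based on $X$: by the standard Fisher information identity for sufficient statistics, sufficiency of $S(X)$ implies $I_X(\theta) = I_{S(X)}(\theta)$ (the data-processing inequality for Fisher information holds with equality precisely when the statistic is sufficient, under the stated regularity conditions). This reduces the problem to showing $I_{S(X)}(\theta) = \sum_{k=1}^K I_{\Xt{k}}(\theta)$.

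Next, recall from Theorem~\ref{thm:generalized-thinning} (applied to the family of distributions of $S(X)$, which we are thinning) that $T(\Xt{1},\ldots,\Xt{K})$ is a sufficient statistic for $\theta$ based on $(\Xt{1},\ldots,\Xt{K}) \sim \Qt{1}_\theta\times\cdots\times\Qt{K}_\theta$, and that $S(X) = T(\Xt{1},\ldots,\Xt{K})$ has exactly the distribution $P_\theta$ of the sufficient statistic. Applying the same sufficiency-implies-equality-of-Fisher-information fact in this second layer gives $I_{S(X)}(\theta) = I_{(\Xt{1},\ldots,\Xt{K})}(\theta)$, where the right-hand side is the Fisher information in the full independent collection. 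Finally, since $\Xt{1},\ldots,\Xt{K}$ are mutually independent, the log-likelihood of the collection is the sum of the individual log-likelihoods, so the score function is additive and (taking variances, using that the cross terms vanish by independence and the score having mean zero) $I_{(\Xt{1},\ldots,\Xt{K})}(\theta) = \sum_{k=1}^K I_{\Xt{k}}(\theta)$. Chaining the three equalities yields $I_X(\theta) = \sum_{k=1}^K I_{\Xt{k}}(\theta)$.

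The main obstacle, and the only place real care is needed, is justifying $I_X(\theta) = I_{S(X)}(\theta)$ and $I_{S(X)}(\theta) = I_{(\Xt{1},\ldots,\Xt{K})}(\theta)$ rigorously — i.e., that a sufficient statistic carries exactly the same Fisher information as the original data. This is the content hidden behind the phrase ``regularity conditions needed for Fisher information to exist'': one needs the densities to be differentiable in $\theta$ with the usual interchange-of-differentiation-and-integration permitted, and one uses the factorization theorem to write $p_\theta(x) = g_\theta(S(x)) h(x)$, so that $\partial_\theta \log p_\theta(x) = \partial_\theta \log g_\theta(S(x))$ almost surely, whence the two score functions agree and hence so do their variances. (In the vector-parameter case this is the equality of Fisher information matrices and the argument is identical componentwise.) I would state these regularity assumptions explicitly and then the chain of equalities is essentially mechanical; the additivity step under independence is entirely routine.
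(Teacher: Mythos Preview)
Your proposal is correct and follows essentially the same approach as the paper: both chain together $I_X(\theta)=I_{S(X)}(\theta)$ (sufficiency of $S$), $I_{S(X)}(\theta)=I_{(\Xt{1},\ldots,\Xt{K})}(\theta)$ (via $S(X)=T(\Xt{1},\ldots,\Xt{K})$ and sufficiency of $T$ from Theorem~\ref{thm:generalized-thinning}), and the additivity of Fisher information under independence. If anything, you provide more detail than the paper on why sufficiency yields equality of Fisher information.
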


This proposition shows that thinning $S(X)$, rather than $X$, does not result in a loss of information about $\theta$.  
Its proof (provided in Supplement A.4) follows easily from multiple applications of the fact that sufficient statistics preserve information.
Definition~\ref{def:indirect} formalizes the strategy suggested by Proposition~\ref{prop:thin-S}.

\begin{definition}[Indirect thinning] 
\label{def:indirect}
Consider $X\sim P_\theta\in \mathcal P$. Suppose we thin a sufficient statistic $S(X)$ for $\theta$ by a function $T(\cdot)$. 
We say that the family $\mathcal P$ is {\em indirectly thinned through $S(\cdot)$ by $T(\cdot)$}. 
\end{definition}
In light of Proposition \ref{prop:thin-S}, indirect thinning does not result in a loss of information.

When $S(\cdot)$ is invertible, then
$X=S^{-1}(T(\Xt{1},\ldots,\Xt{K}))$,
which implies that we can thin $X$ directly by $S^{-1}(T(\cdot))$.
It turns out that, regardless of whether we thin $X$ by $S^{-1}(T(\cdot))$ or indirectly thin $X$ through $S(\cdot)$ by $T(\cdot)$, there is little difference between the resulting form of $G_t$ in Theorem~\ref{thm:generalized-thinning}. In the former case, $G_t$ is the conditional distribution of $(\Xt{1},\dots,\Xt{K})$ given $S^{-1}( T(\Xt{1},\dots,\Xt{K}))=  t$. In the  latter case, it is  the conditional distribution of $(\Xt{1},\dots,\Xt{K})$ given $ T(\Xt{1},\dots,\Xt{K})=  t$. Since $S(\cdot)$ is invertible, these two conditional distributions are identical following a reparameterization. 

We now return to the setting of Proposition \ref{prop:thin-S}, where $S(\cdot)$ may or may not be invertible.

\begin{remark}[Indirect thinning of general exponential families]
\label{remark:genexp}
Let $\mathcal{P} = \{P_\theta: \theta \in \Omega\}$ be a full-rank general exponential family. That is, $dP_\theta(x) = \exp\{[S(x)]^\top\eta(\theta) - \psi(\theta)\}dH(x)$,
where $e^{-\psi(\theta)}$ is the normalising constant. Since $S(X)$ is sufficient for $\theta$, we can  indirectly thin $X$ through $S(\cdot)$ without a loss of Fisher information (Proposition~\ref{prop:thin-S}). 
Furthermore,  $S(X)$ belongs to a full-rank natural exponential family \citep[Lemma~2.7.2(i)]{lehmann2005testing}. We can thus indirectly thin $X$ through $S(\cdot)$ as follows: 
\begin{enumerate}
\item Provided that the necessary and sufficient condition of Theorem \ref{thm:natural-exponential-families} holds for $S(X)$, we can indirectly thin $X$ through $S(\cdot)$ by addition into $\Xt{1},\ldots,\Xt{K}$ that follow natural exponential families, i.e. \eqref{eq:exp-fam} where $T^{(k)}(\cdot)$ is the identity.
\item  We now consider $\Xt{1},\ldots,\Xt{K}$ that belong to a general exponential family, where
 $T^{(k)}(\cdot)$ in \eqref{eq:exp-fam} is not necessarily the identity. Suppose further that
$S(X)\overset{D}{=}\sum_{k=1}^KT^{(k)}(\Xt{k})$. 
 Then, by Proposition~\ref{prop:exp-family}, we can indirectly thin $X$ through $S(\cdot)$ into 
$\Xt{1},\ldots,\Xt{K}$,
by  $T(\xt{1},\ldots,\xt{K})=\sum_{k=1}^KT^{(k)}(\xt{k})$.
  
\end{enumerate}
We see that 1) is a special case of 2). 
\end{remark}

We now demonstrate indirect thinning with some examples.   
First, we consider a $\text{Beta}(\theta,\beta)$ random variable, with $\beta$ a known parameter. This is not a natural exponential family, and so the results in Section \ref{sec:natural-exp-fam} are not directly applicable.  The beta family also differs from the other examples that we have seen in the following ways:  (i)  It is not convolution-closed; (ii) it has finite support; and (iii) the sufficient statistic for an independent and identically distributed sample has an unnamed distribution. 

\begin{exmp}[Thinning $\text{Beta}(\theta,\beta)$ with $\beta$ known] \label{ex:beta}
We start with $\Xt{k} \overset{ind}{\sim} \text{Beta}\left(\frac{1}{K}\theta + \frac{k-1}{K}, \frac{1}{K}\beta\right)$, for  $k=1,\dots,K$; this is a 
general exponential family \eqref{eq:exp-fam} with $T^{(k)}(\xt{k}) = \frac1{K}\log(\xt{k})$. Since $ \sum_{k=1}^K T^{(k)}(\Xt{k})$ is sufficient for $\theta$ based on $\Xt{1},\ldots,\Xt{K}$, we can apply Proposition~\ref{prop:exp-family} to thin the distribution of $\sum_{k=1}^K T^{(k)}(\Xt{k})$ by the function
\begin{equation}\label{eq:beta}
T(\xt{1},\dots,\xt{K})=\sum_{k=1}^K T^{(k)}(\xt{k})=\frac1{K}\sum_{k=1}^K\log(\xt{k})=\log\left[\left(\prod_{k=1}^K\xt{k}\right)^{1/K}\right]. 
\end{equation}

Furthermore, we show in Supplement C.1.2 that $\exp\left( T(\Xt{1},\dots,\Xt{K})\right) =\left(\prod_{k=1}^K\Xt{k}\right)^{1/K}$, the geometric mean of $\Xt{1},\ldots,\Xt{K}$,   
follows a $\text{Beta}(\theta, \beta)$ distribution. Therefore, we can indirectly thin a $\text{Beta}(\theta, \beta)$  random variable through  $S(x)=\log(x)$ by $T(\cdot)$ defined in \eqref{eq:beta}. This results in $\Xt{k} \overset{ind}{\sim} \text{Beta}\left(\frac{1}{K}\theta + \frac{k-1}{K}, \frac{1}{K}\beta\right)$, for $k=1,\ldots,K$. 

Furthermore, since $S(x)=\log(x)$ is invertible,  we can \emph{directly} thin $X\sim\text{Beta}(\theta, \beta)$  by 
\begin{equation}
    \label{eq:betaprime}
    T'(\xt{1},\dots,\xt{K})= S^{-1} (T(  \xt{1},\dots,\xt{K})) = \left(\prod_{k=1}^K\xt{k}\right)^{1/K}.
\end{equation}

To apply either of these thinning strategies, we need to sample from  $G_t$ defined in Theorem~\ref{thm:generalized-thinning}. This can be done using numerical methods, as detailed in Supplement C.1.2.  
\end{exmp}  

By symmetry of the beta distribution, we can also apply the thinning operations detailed in Example~\ref{ex:beta} to thin a $\text{Beta}(\alpha, \theta)$ random variable with $\alpha$ known. In Example C.2 in Supplement C.1.3, we propose an alternative strategy to thin a beta random variable, using a different parametrization. As this example extends naturally to higher dimensions, we derive and prove it for the more general Dirichlet case.

Next, we consider thinning the gamma distribution with unknown shape parameter.

\begin{exmp}[Thinning $\text{Gamma}(\theta,\beta)$ with $\beta$ known] \label{ex:gamma1} 
We start with $\Xt{k} \overset{ind}{\sim} \text{Gamma}\left(\frac{1}{K}\theta + \frac{k-1}{K}, \frac{1}{K}\beta\right)$,  for  $k=1,\dots,K$; this is a general exponential family (\ref{eq:exp-fam}) with $T^{(k)}(\Xt{k})=\frac{1}{K}\log(\xt{k})$. Note that
$T(\Xt{1},\dots,\Xt{K})=\sum_{k=1}^K T^{(k)}(\Xt{k})$ is sufficient for $\theta$ based on $\Xt{1},\ldots,\Xt{K}$. As $T^{(k)}(\cdot)$ is shared with Example \ref{ex:beta}, we can apply Proposition \ref{prop:exp-family} to thin the distribution of $\sum_{k=1}^K T^{(k)}(\Xt{k})$ by the function defined in (\ref{eq:beta}).

In Supplement C.1.4 we show that $\exp\left(T(\Xt{1},\dots,\Xt{K}\right)=\left(\prod_{k=1}^K\Xt{k}\right)^{1/K}$
follows a $\text{Gamma}(\theta, \beta)$ distribution. Thus, we can indirectly thin a $\text{Gamma}(\theta,\beta)$ random variable through $S(x)=\log(x)$ by $T(\cdot)$ defined in (\ref{eq:beta}). This produces independent random variables $\Xt{k}\sim\text{Gamma}(\frac{1}{K}\theta+\frac{k-1}{K},\frac{1}{K}\beta)$ for $k=1,\dots,K$. Once again, noting that $S(\cdot)$ is invertible, we can instead directly thin $X\sim\text{Gamma}(\theta,\beta)$ by the function defined in \eqref{eq:betaprime}.

To apply either of these thinning strategies to a $\text{Gamma}(\theta,\beta)$ random variable, we must sample from $G_t$ as defined in Theorem \ref{thm:generalized-thinning}. 
See Supplement C.1.4. 
\end{exmp}

 Example~\ref{ex:gamma1} is different from the gamma thinning example from \cite{neufeld2023data}: That involves thinning a $\text{Gamma}(\alpha,\theta)$ random variable with $\alpha$ known, whereas here we thin a $\text{Gamma}(\theta,\beta)$ random variable with $\beta$ known. 

Examples \ref{ex:beta} and \ref{ex:gamma1} enable us to thin a random variable into an arbitrary number of independent random variables. However, unlike in the examples in Section~\ref{sec:natural-exp-fam}, the  resulting folds are not identically distributed.

In Examples~\ref{ex:beta} and \ref{ex:gamma1}, the function  $S(\cdot)$ through which we indirectly thin $X$ is invertible. Supplement D considers indirect thinning of a sample of $n$ independent and identically distributed normal random variables with both mean and variance unknown. This provides an example of a case
in which $S(\cdot)$ is neither invertible, nor scalar-valued.

We close with a list of a few short examples to illustrate the flexibility of indirect thinning. 

\begin{exmp}[Additional examples of indirect thinning]
\label{ex:other}
\textcolor{white}{.}
\begin{enumerate}
    \item Suppose we observe $X\sim N(\mu,\theta)$ where $\mu$ is known; here $\mu$ denotes the mean and $\theta$ the variance. Then $S(X)=(X-\mu)^2\sim\theta\chi_1^2=\text{Gamma}\left(\frac{1}{2},\frac{1}{2\theta}\right)$. Thus, by applying the Gamma thinning strategy of \cite{neufeld2023data} discussed in Example \ref{ex:dtgamma} to $S(X)$, we can indirectly thin a normal distribution with unknown variance through $S(\cdot)$. \label{ex:normal-unknown-variance}
    \item Suppose we observe $X\sim\text{Weibull}(\theta, \gamma)$ where $\gamma$ is known. Then, $S(X)=X^\gamma\sim\text{Exp}\left(\theta^{-\gamma}\right)$. Thus, by applying the Gamma thinning strategy of Example \ref{ex:dtgamma} or \ref{ex:scaled-normal} to $S(X)$, we can indirectly thin a Weibull distribution with unknown rate through $S(\cdot)$.
    \item Suppose we observe $X\sim\text{Pareto}(\gamma,\theta)$ where $\gamma$ is known. Then $S(X)=\log \left( X/\gamma \right) \sim\text{Exp}(\theta)$. Thus, by applying the Gamma thinning strategy of Example \ref{ex:dtgamma} or \ref{ex:scaled-normal} to $S(X)$, we can indirectly thin a Pareto distribution with unknown shape through $S(\cdot)$. 
\end{enumerate}
\end{exmp}

\section{Thinning outside of exponential families}
\label{sec:outside-exp-fam}

In this section, we focus on thinning outside of exponential families.
Outside of the exponential family, only certain distributions with domains that vary with the parameter of interest have sufficient statistics that are bounded as the sample size increases \citep{darmois, koopman, pitman_1936}. 
Thus, we first consider a setting where $\theta$ alters the support of the distribution (Section~\ref{sec:varysupport}), and then one where the sufficient statistic's dimension grows as the sample size increases (Section~\ref{sec:sample-splitting}). 

\subsection{Thinning distributions with varying support} \label{sec:varysupport}

We consider examples in which the parameter of interest, $\theta$, changes the support of a distribution. 
In Example~\ref{ex:scaled-uniform}, $\theta$ scales the support.  

\begin{exmp}[Thinning $\text{Unif}(0,\theta)$] \label{ex:scaled-uniform}
We start with  $\Xt{k} \overset{iid}{\sim} \theta \cdot\text{Beta}\left(\frac{1}{K},1\right)$ for $k=1,\ldots,K$, and note that $T(\Xt{1},\dots,\Xt{K}) = \max(\Xt{1},\dots,\Xt{K})$ is sufficient for $\theta$. 
 Furthermore, $\max(\Xt{1},\dots,\Xt{K}) \sim \text{Unif}(0, \theta)$. Thus, we define $G_t$ to be the conditional distribution of 
 $(\Xt{1},\dots,\Xt{K})$ given $\max(\Xt{1},\dots,\Xt{K}) = t$. Then, 
 by Theorem~\ref{thm:generalized-thinning}, 
 we can thin  $X \sim \text{Unif}(0, \theta)$ by 
 sampling from  $G_X$. 
To do this, we first   
draw $\mathbf{C}\sim\text{Categorical}_K\left(1/K, \ldots, 1/K\right)$. Then, $\Xt{k} = C_kX+(1-C_k)Z_k$ where $Z_k \overset{iid}{\sim} X\cdot\text{Beta}\left(\frac{1}{K},1\right)$.
\end{exmp}
This is a special case of Example C.3 in Supplement C.2.1, in which we thin the scale family $\theta\cdot\text{Beta}(\alpha,1)$ where $\alpha$ is known. Setting $\alpha=1$ yields Example \ref{ex:scaled-uniform}.

Similar thinning results can be identified for distributions in which $\theta$ shifts the support. In Supplement C.2.2, we show that $X\sim \mathrm{SExp}(\theta,\lambda)$, the location family generated by shifting an exponential random variable by $\theta$, can be thinned by the minimum function.

\subsection{Sample splitting as a special case of generalized data thinning}\label{sec:sample-splitting}

We now consider sample splitting, a well-known approach for splitting a sample of  observations into two or more sets \citep{cox1975note}. 
We show that sample splitting can be viewed as an instance of generalized data thinning. In this setting, $\mathbf{X}=(X_1,\ldots,X_n)$ is a sample of independent and identically distributed random variables, $X_i\in\mathcal X$, each having distribution $F\in\mathcal F$, where $\mathcal F$ is some (potentially non-parametric) family of distributions and $\mathcal X$ is the set of values that the random variable $X_i$ can take (most commonly $\mathcal X=\mathbb R^p$).  That is, $\mathbf{X}\sim P_F\in \mathcal P$, where
$\mathcal P=\{F^n:F\in\mathcal F\}$,
and $F^n=F\times\cdots\times F$ denotes the joint distribution of $n$ independent random variables drawn from $F$.

\begin{exmp}[Sample splitting is a special case of generalized data thinning] \label{ex:samplesplit}
We begin with $\mathbf{X}^{(k)} := (\Xt{k}_{1},\ldots,\Xt{k}_{n_k}) \overset{iid}{\sim} F^{n_k}$, for $k=1,\ldots,K$. Here, 
$n_1,\ldots,n_K >0$, 
 and $\sum_{k=1}^K n_k=n$. 
 That is, 
for $k=1,\ldots,K$, $\mathbf{X}^{(k)}\in\mathcal X^{n_k}$ denotes a set of $n_k$ independent and identically distributed draws from $F$.

Our goal is to thin $S(\mathbf{X})$, where $S:\mathcal X^n\to\mathcal X^n$ sorts the entries of its input based on their values.
We define $T:\mathcal X^{n_1}\times\cdots\times \mathcal X^{n_K}\to \mathcal X^n$ as
$T(\mathbf{x}^{(1)},\ldots,\mathbf{x}^{(K)})=S((\mathbf{x}^{(1)},\ldots,\mathbf{x}^{(K)}))$,
the function that  concatenates its arguments and then applies $S(\cdot)$. 
Then $T(\mathbf{X}^{(1)},\ldots,\mathbf{X}^{(K)})$ is a sufficient statistic for $F$, and 
furthermore, $T(\mathbf{X}^{(1)},\ldots,\mathbf{X}^{(K)}) \overset{D}{=} S(\mathbf{X})$. 

We define $G_{\mathbf{t}}$ to be the conditional distribution of $(\mathbf{X}^{(1)},\ldots,\mathbf{X}^{(K)})$ given $  T(\mathbf{X}^{(1)},\ldots,\mathbf{X}^{(K)})=\mathbf{t}$. Suppose we observe $\mathbf{X}\sim F^n$. 
Then, by Theorem~\ref{thm:generalized-thinning}, we can indirectly thin $\mathbf{X}$ through $S(\cdot)$ by $T(\cdot)$ by sampling from $G_{S(\mathbf{X})}$. This conditional distribution is uniform over all   
$\frac{n!}{n_1!\cdots n_K!}$ assignments of $n$ items to $K$ groups of sizes $n_1,\ldots,n_K$. Thus, to sample from $G_{S(\mathbf{X})}$, we randomly partition the sample of size $n$ into $K$ groups of sizes $n_1,\ldots,n_K$.
This is precisely the same as sample splitting. 
\end{exmp}

We have shown that when one has $n$ independent and identically distributed samples from a distribution $F$, then sample splitting is an instance of generalized data thinning. When this assumption holds, it follows from Proposition \ref{prop:thin-S} that sample splitting preserves all information about $F$.  In practice, however, sample splitting is often applied in situations where we have $n$ random variables that are not independent or not identically distributed. In such a situation, using a valid generalized data thinning strategy will be advantageous. For example, consider the setting of multivariate Gaussian data with known dense covariance. Since the data are not independent, sample splitting will produce dependent folds whereas multivariate Gaussian data thinning generates independent folds. Next, consider the case of linear regression with a fixed design matrix: The data are independent but not identically distributed. In this setting, \cite{neufeld2023data} and \cite{rasines2021splitting} show that Gaussian data thinning is preferable to sample splitting from the standpoint of Fisher information (see Section 4 of \cite{neufeld2023data} for technical details).
\section{Counterexamples}
\label{sec:counterexamples}

We now present two examples in which thinning strategies do not work.
The first  involves a natural exponential family that is based on a distribution that {\em cannot} be written as the convolution of two distributions. In this case, Theorem~\ref{thm:natural-exponential-families} implies that we cannot thin it by addition. In fact, we will prove a stronger statement: Namely, that  there does not exist \emph{any} function $T(\cdot)$ that can thin it.  The second example involves a convolution-closed family outside of the natural exponential family in which addition is not sufficient. In this case, taking $T(\cdot)$ to be addition does not enable thinning, as Theorem \ref{thm:generalized-thinning} does not apply.

\subsection{The Bernoulli family cannot be thinned}
\label{sec:bernoulli}

Let $P_{\theta}$ denote the $\mathrm{Bernoulli}(\theta)$ distribution, where $\theta$ is the probability of success. Recall that this distribution can be written as a natural exponential family (with natural parameter $\log \left( \frac{\theta}{1 - \theta} \right)$). By Theorem \ref{thm:backwards-natexpfam}, if $P_\theta$ can be thinned, then the thinning function $T(\cdot)$ must be  additive. However, as the next theorem shows, the Bernoulli distribution cannot be written as a convolution of independent, non-constant random variables.

\begin{theorem}[The Bernoulli is not a convolution]\label{thm:bernoulli}
If $Z^{(1)}$ and $Z^{(2)}$ are independent, non-constant random variables, then $Z^{(1)}+Z^{(2)}$
cannot be a Bernoulli random variable.
\end{theorem}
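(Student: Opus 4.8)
The plan is to argue by contradiction via the support structure of the summands. Suppose $Z^{(1)}$ and $Z^{(2)}$ are independent and non-constant, and that $Z := Z^{(1)} + Z^{(2)}$ is Bernoulli, so $Z$ takes exactly two values, which after an affine reparametrization we may take to be $0$ and $1$. The first step is to observe that $Z$ has bounded support, hence so do $Z^{(1)}$ and $Z^{(2)}$: indeed $\sup(Z^{(1)}) + \sup(Z^{(2)}) = \sup(Z) = 1$ and likewise for the infima, where by $\sup$ and $\inf$ I mean the essential supremum and infimum of the support. So both $Z^{(1)}$ and $Z^{(2)}$ have compact support.

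The key step is to look at the extreme points of the supports. Let $a_1 = \inf(Z^{(1)})$, $b_1 = \sup(Z^{(1)})$, and similarly $a_2, b_2$; since each variable is non-constant, $a_i < b_i$ for $i = 1, 2$. Because the support of a sum of independent variables contains the sum of the supports (more precisely, $a_1 + a_2$ and $b_1 + b_2$ are in the support of $Z$, and $a_1 + a_2$ is its infimum, $b_1 + b_2$ its supremum), we get $a_1 + a_2 = 0$ and $b_1 + b_2 = 1$. Now consider the point $a_1 + b_2$. It lies strictly between $a_1 + a_2 = 0$ and $b_1 + b_2 = 1$ (strictly, since $a_2 < b_2$ gives $a_1 + b_2 > a_1 + a_2 = 0$, and $a_1 < b_1$ gives $a_1 + b_2 < b_1 + b_2 = 1$). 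But $a_1 + b_2$ must lie in the support of $Z$: any neighborhood of $a_1$ has positive $Z^{(1)}$-probability and any neighborhood of $b_2$ has positive $Z^{(2)}$-probability, so by independence any neighborhood of $a_1 + b_2$ has positive $Z$-probability. Hence $a_1 + b_2$ is a limit point of $\mathrm{supp}(Z)$ strictly inside $(0,1)$, contradicting the fact that $\mathrm{supp}(Z) = \{0, 1\}$.

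The main obstacle — really the only subtlety — is handling the topology of supports carefully rather than pretending the variables are discrete: I need the facts that (i) the support of a sum of independent random variables is the closure of the sumset of their supports, or at least that it contains every sum of a point from one support with a point from the other, and (ii) the essential infimum/supremum of the sum is the sum of the essential infima/suprema when these are finite. Both are standard, but the write-up should state them precisely (e.g. using that for any open intervals $I \ni a_1$, $J \ni b_2$ we have $P(Z^{(1)} \in I) > 0$ and $P(Z^{(2)} \in J) > 0$, whence $P(Z \in I + J) > 0$) to avoid sloppiness about measure-zero sets. An alternative, purely characteristic-function approach — writing $\phi_Z(t) = \phi_{Z^{(1)}}(t)\phi_{Z^{(2)}}(t)$ and using that a Bernoulli characteristic function, being of the form $1 - \theta + \theta e^{it}$ up to an affine shift, has zeros whose structure is incompatible with a nontrivial product — is also available, but the support argument is cleaner and more elementary, so I would present that as the primary proof.
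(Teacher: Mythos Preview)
Your proof is correct and takes a genuinely different route from the paper. The paper argues via conditional probability: from $\mathbb P(Z^{(1)}+Z^{(2)}\in\{0,1\})=1$ it integrates out $Z^{(1)}$ to obtain $\mathbb P(Z^{(2)}\in\{-z^{(1)},1-z^{(1)}\})=1$ for $Q^{(1)}$-almost every $z^{(1)}$, then picks two distinct such values $a\neq b$ and observes that $\{-a,1-a\}\cap\{-b,1-b\}$ has at most one element, forcing $Z^{(2)}$ to be constant. Your argument is instead purely support-theoretic: you locate the mixed extreme point $a_1+b_2$ strictly inside $(0,1)$ and show it must lie in $\mathrm{supp}(Z)$, contradicting $\mathrm{supp}(Z)=\{0,1\}$. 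The paper's approach avoids any topological reasoning about supports and never needs to establish boundedness of the summands; yours is more geometric and makes the obstruction visually transparent (the ``cross term'' $a_1+b_2$ cannot land in a two-point set), at the cost of the mild care you already flag about essential suprema and sumsets of supports. Both generalize immediately to any target distribution supported on two points; the paper's version perhaps extends a bit more readily to showing that each summand must itself have at most two support points, which is the first step toward the categorical analogue mentioned after Corollary~\ref{cor:bern2}.
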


Theorem~\ref{thm:bernoulli} is proven in Supplement A.5. 

As the Bernoulli distribution cannot be written as a convolution of non-constant random variables, it cannot achieve the two conclusions of Theorem~\ref{thm:backwards-natexpfam} simultaneously. Thus, a contrapositive argument applied to Theorem~\ref{thm:backwards-natexpfam} leads to the next result.

\begin{corollary}
\label{cor:bern2}
    The Bernoulli family cannot be thinned by any function $T(\cdot)$.
\end{corollary}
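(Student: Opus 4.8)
The plan is to combine the structural constraint of Theorem~\ref{thm:backwards-natexpfam} with the impossibility result of Theorem~\ref{thm:bernoulli}, via a contrapositive argument. Recall that the $\mathrm{Bernoulli}(\theta)$ family is a full-rank natural exponential family (with natural parameter $\log(\theta/(1-\theta))$ and sufficient statistic the identity), so Theorem~\ref{thm:backwards-natexpfam} applies to it. Suppose, for contradiction, that some function $T(\cdot)$ thins the Bernoulli family into $K\ge 2$ folds $\Xt{1},\dots,\Xt{K}$. Then Theorem~\ref{thm:backwards-natexpfam} forces $T(\xt{1},\dots,\xt{K})=\sum_{k=1}^K T^{(k)}(\xt{k})$ and $\Xt{k}\overset{\text{ind}}{\sim}\Qt{k}_\theta$, where each $\Qt{k}_\theta$ is an exponential family with sufficient statistic $T^{(k)}(\Xt{k})$. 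Writing $Z^{(k)} := T^{(k)}(\Xt{k})$, this says $X = \sum_{k=1}^K Z^{(k)}$ is a sum of $K$ mutually independent random variables.

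Next I would argue that each $Z^{(k)}$ is non-constant. By Definition~\ref{def:thinning}, every fold $\Xt{k}$ has a distribution $\Qt{k}_\theta$ that depends on $\theta$. If $Z^{(k)}=T^{(k)}(\Xt{k})$ were $\Qt{k}_\theta$-a.s.\ equal to a constant $c_k$, then the exponential-family density $\exp\{[T^{(k)}(x)]^\top\eta(\theta)-\psi_k(\theta)\}\,dH_k(x)$ would reduce to $\exp\{c_k^\top\eta(\theta)-\psi_k(\theta)\}\,dH_k(x)$; since the leading factor is then constant in $x$, normalization pins it down and forces $\Qt{k}_\theta$ to be proportional to $H_k$, hence independent of $\theta$ --- a contradiction. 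So every $Z^{(k)}$ is non-constant, and in particular, because $K\ge 2$, at least two of them are.

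Finally I would reduce to the two-variable statement of Theorem~\ref{thm:bernoulli} by grouping: write $X = Z^{(1)} + W$ with $W := \sum_{k=2}^K Z^{(k)}$. Then $Z^{(1)}$ and $W$ are independent, and $W$ is non-constant since $Z^{(2)}$ is non-constant and independent of $Z^{(3)},\dots,Z^{(K)}$ (if a non-constant variable plus an independent variable were a.s.\ constant, the non-constant variable would be a deterministic function of the independent one, hence independent of itself, hence constant). Thus $X$ is the sum of two independent non-constant random variables, so by Theorem~\ref{thm:bernoulli} it cannot be a Bernoulli random variable, contradicting $X\sim P_\theta$. Hence no such $T(\cdot)$ exists.

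I expect the main obstacle to be the non-degeneracy step --- showing each $Z^{(k)}$ is non-constant --- because that is exactly where the requirement in Definition~\ref{def:thinning} that each fold's distribution depend on $\theta$ is essential: without it one could trivially ``thin'' the Bernoulli via $T(\xt{1},\dots,\xt{K})=\xt{1}$ with $\Xt{1}=X$ and the remaining folds arbitrary and $\theta$-free. The remaining pieces (the grouping reduction and the ``non-constant plus independent is non-constant'' observation) are routine.
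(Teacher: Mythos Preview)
Your proposal is correct and follows essentially the same route as the paper: assume thinnability, invoke Theorem~\ref{thm:backwards-natexpfam} to force $T$ into additive form with exponential-family folds, then contradict Theorem~\ref{thm:bernoulli}. Your write-up is in fact more careful than the paper's on two points the paper leaves implicit --- the non-degeneracy of each $Z^{(k)}=T^{(k)}(\Xt{k})$ (which you derive from the ``distributions depending on $\theta$'' clause in Definition~\ref{def:thinning}) and the reduction from $K$ summands to two --- so nothing needs to change.
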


This corollary of Theorems \ref{thm:backwards-natexpfam} and \ref{thm:bernoulli} is proven in Supplement A.6.
A similar argument reveals that the categorical distribution also cannot be thinned.

The above corollary pertains to a \emph{single} Bernoulli random variable. By contrast, a \emph{vector} of independent and identically distributed Bernoulli random variables can be thinned by sample splitting or by indirect binomial thinning on the sum of the  entries.

\subsection{The Cauchy family cannot be thinned by addition}
\label{subsec:cauchy}

Suppose now that our interest lies in a random variable $X=T(\Xt{1},\Xt{2})$, where $T(\Xt{1}, \Xt{2})$ is \emph{not} sufficient for the parameter $\theta$ based on $(\Xt{1},\Xt{2})$. This means that the conditional distribution of $(\Xt{1},\Xt{2})$ given $T(\Xt{1},\Xt{2})$ depends on $\theta$, and thus that we cannot thin $X$ by $T(\cdot)$.  We see this in the following example.

\begin{exmp}[The trouble with thinning $\text{Cauchy}(\theta_1, \theta_2)$ by addition] \label{ex:cauchy}
Recall that the Cauchy family, $\text{Cauchy}(\theta_1, \theta_2)$, indexed by $\boldsymbol\theta=(\theta_1,\theta_2)$,  is convolution-closed. In particular, if $\Xt{1},\Xt{2}\overset{iid}{\sim}\text{Cauchy}\left(\frac{1}{2}\theta_1, \frac{1}{2}\theta_2\right)$, then $\Xt{1} + \Xt{2}\sim\text{Cauchy}(\theta_1, \theta_2)$.  It is tempting therefore to try thinning this family by $T(\xt{1},\xt{2})=\xt{1}+\xt{2}$.  However, the sum $\Xt{1} + \Xt{2}$ is not sufficient for either 
$\theta_1$ or $\theta_2$, which means that Theorem~\ref{thm:generalized-thinning} does not apply. In particular, $G_t$, the conditional distribution of $(\Xt{1},\Xt{2})$ given $\Xt{1} + \Xt{2}=t$, \emph{is} a function of $\boldsymbol\theta$. Therefore, we cannot thin the Cauchy family with any unknown parameters by addition. 
\end{exmp}

We can take this result a step further: Given a collection of Cauchy random variables, there is no sufficient statistic for $\boldsymbol\theta$ that reduces the data beyond the order statistics
\citep[p.~275]{CaseBerg}. 
Thus, following Algorithm \ref{alg:recipe} with $\cQt{k}$ being $\text{Cauchy}(\theta_1,\theta_2)$, the only generalized data thinning approach that generates independent Cauchy random variables is sample splitting a vector of independent Cauchy random variables. 

\section{Changepoint detection in wind speed data}
\label{sec:changepoint}

To demonstrate the utility of generalized data thinning, we consider detecting changepoints in the variance of wind speed data. We consider a wind speed dataset \citep{haslett1989space} collected in the Irish town of Claremorris, available in the R package \texttt{gstat} \citep{pebesma2004multivariable}. 
\citet{killick2014changepoint} took first differences to remove the periodic mean, and then modeled the
resulting $X_i$ for $i = 1, \dots, n$ as independent normal observations with $X_i \sim N(0, \theta_i)$. They then estimated changepoints in the variance $\theta_1,\ldots,\theta_n$. Here, we take their analysis a step further by testing for a difference in variance on either side of each estimated changepoint.

First, we consider a naive approach.

\begin{algorithm}[Naive approach for changepoint detection]\label{alg:naive}
\textcolor{white}{.}
\begin{enumerate}
    \item Compute $Z_i := X_i^2$. Note that $Z_i  \sim \text{Gamma}\left(\frac{1}{2}, \frac{1}{2\theta_i}\right)$.
    \item Estimate changepoints in $Z_1,\ldots,Z_n$.
    \item Fit a gamma GLM to test for a change in the rate of $Z_i$ on either side of each estimated changepoint.  
\end{enumerate}
\end{algorithm}
To carry out  Step 2 of Algorithm \ref{alg:naive}, we use the nonparametric changepoint detection method of \citet{haynes2017computationally}, implemented in the \texttt{changepoint.np} R package \citep{cpt.np}, with a BIC penalty and a minimum segment length of 10 days.

However, using the same data to estimate and test changepoints will lead to many false discoveries, as pointed out by \cite{hyun2021post} and \cite{jewell2022testing} in a related setting. 

A natural alternative is to use \emph{order-preserved sample splitting}, which involves estimating changepoints on a training set composed of odd-indexed observations, and testing those changepoints on a test set composed of even-indexed observations
\citep{10.1214/19-AOS1814}. 
Note that order-preserved sample splitting is different from Example \ref{ex:samplesplit}. Since the $Z_i$ are \emph{not} independent and identically distributed, it is \emph{not} a special case of data thinning.

\begin{algorithm}[Order-preserved sample splitting approach for changepoint detection]\label{alg:opss}
\textcolor{white}{.}
\begin{enumerate}
    \item Compute $Z_i := X_i^2$. Note that $Z_i  \sim \text{Gamma}\left(\frac{1}{2}, \frac{1}{2\theta_i}\right)$.
    \item Assume $n$ is even. Estimate changepoints in odd observations $Z_1,Z_3, \ldots,Z_{n-1}$.
    \item Fit a gamma GLM to test for a change in the rate of $Z_i$ on either side of each estimated changepoint using even observations $Z_2,Z_4, \ldots,Z_{n}$.
\end{enumerate}
\end{algorithm}

In Step 2 of Algorithm \ref{alg:opss}, we again use the \verb=changepoint.np= R package with a BIC penalty, but with a minimum segment length of five points (corresponding to 10 days).

\cite{yu2020veridical} point out that it is important for the findings of a data analysis to be stable across perturbations of the data; a similar argument underlies the  stability selection proposal of \cite{meinshausen2010stability}. We may wish to assess stability by repeating the splitting procedure many times, and comparing the estimated and rejected changepoints across different splits of the data. 
However, deterministic approaches like Algorithms \ref{alg:naive} and \ref{alg:opss} do not lend themselves to repetition.

Generalized data thinning offers a solution to this problem. Each time the procedure is run, sampling from $G_t$ produces a different pair of independent training and test sets. This allows us to assess stability of the procedure across any number of replicates.

\begin{algorithm}[Generalized data thinning approach for changepoint detection]\label{alg:gdt}
\textcolor{white}{.}
\begin{enumerate}
    \item Indirectly thin each $X_i$ through the function $S(x_i)=x_i^2$, as in Example~\ref{ex:other}.1 (with $\mu=0$).  This yields $\Xt{1}_1,\ldots,\Xt{1}_n$ and $\Xt{2}_1,\ldots,\Xt{2}_n$, where $\Xt{1}_i, \Xt{2}_i \sim \mathrm{Gamma}\left(\frac{1}{4}, \frac{1}{2 \theta_i}\right)$ and $\Xt{1}_i$ and $\Xt{2}_i$ are independent. 
    \item Estimate changepoints in $\Xt{1}_1,\ldots,\Xt{1}_n$.
    \item Fit a gamma GLM to test whether there is a change in the rate of  $\Xt{2}_1,\ldots,\Xt{2}_n$ on either side of each estimated changepoint. 
\end{enumerate}
\end{algorithm}
In Step 2 of Algorithm \ref{alg:gdt}, we again apply the nonparametric changepoint detection method, this time with the same 10-point minimum segment length used in Algorithm \ref{alg:naive}.

We first compare the methods in a simulation study; see Supplement F for  details. Figure~\ref{fig:simulation} demonstrates that in the setting where there are no true changepoints, the naive approach fails to control the type 1 error rate. By contrast, both order-preserved sample splitting and generalized data thinning  control the type 1 error rate. Figures S2~and~S3 of Supplement F overlay the simulated data with the detected changepoints, further illustrating that the naive approach routinely mistakes noise for signal.

\begin{figure}[!h]
\begin{center}
\includegraphics[width=0.8\textwidth,trim={0 3mm 0 1mm},clip]{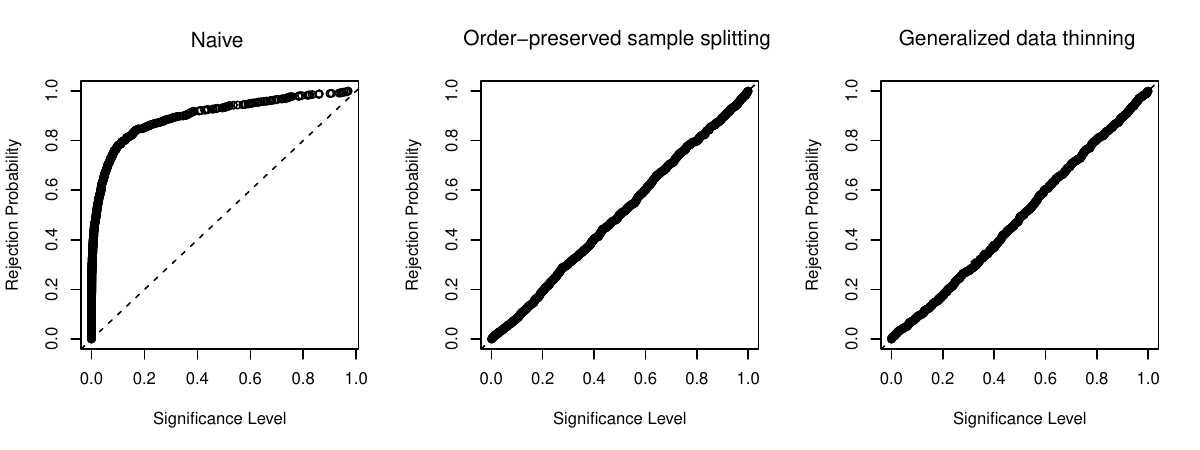}
\end{center}
\caption{Type 1 error rate of naive (Algorithm~\ref{alg:naive}), order-preserved sample splitting (Algorithm~\ref{alg:opss}), and generalized data thinning (Algorithm~\ref{alg:gdt}) approaches to testing for a change in variance, in a setting where the variance is truly constant.}
\label{fig:simulation}
\end{figure}

Turning back to the wind speed data, the top three panels of Figure \ref{fig:application}  show the results of applying the naive, order-preserved sample splitting, and generalized data thinning approaches. To account for the effects of multiple comparisons, when testing changepoints we apply a Bonferroni correction by dividing the standard 0.05 threshold by the number of detected changepoints.
We see that the naive method's p-values are below the Bonferroni corrected threshold
for over a third of the estimated changepoints. By contrast, the order-preserved sample splitting and generalized data thinning approaches give similar results with no rejections of the null hypothesis. 
In light of the results in Figure \ref{fig:simulation} and Supplement F, we believe that most of the  changepoints for which we rejected the null hypothesis using 
the naive approach are false positives. 
\begin{figure}[!h]
\begin{center}
\includegraphics[width=0.75\textwidth]{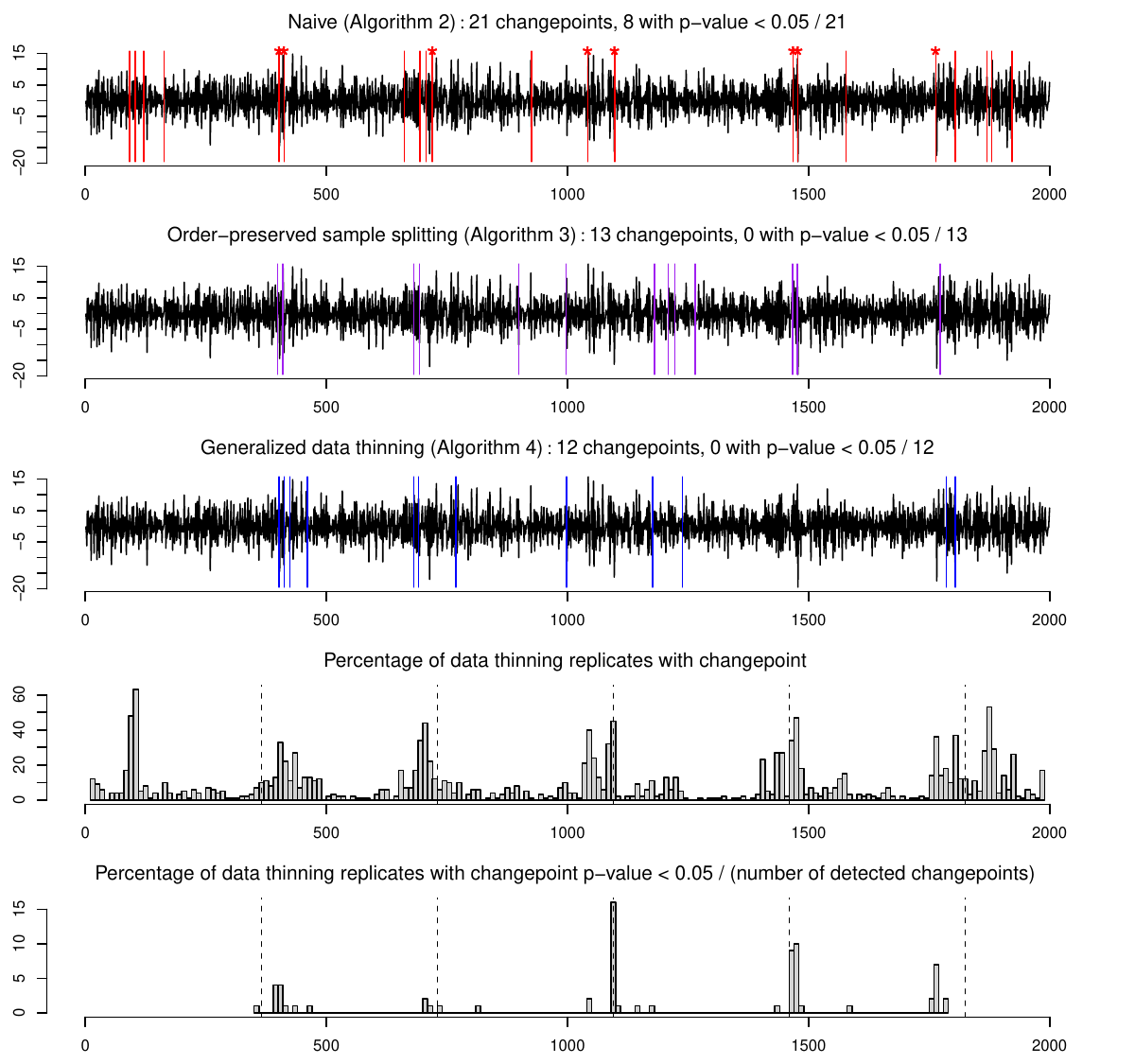}
\end{center}
\caption{Results for the wind speed analysis in Section \ref{sec:changepoint}. In each panel, the $x$-axis indexes the days. 
\emph{First three rows:} Wind speed data over time with results of each approach (Algorithms~\ref{alg:naive}, \ref{alg:opss}, and \ref{alg:gdt}) overlayed: Vertical lines indicate changepoints estimated and asterisks indicate those estimated changepoints for which the computed p-value was below 0.05 divided by the number of detected changepoints. 
\emph{Fourth row:} We binned the 2,000 days into  10-day windows. For each 10-day window, we display the percentage of replicates of the generalized data thinning approach for which at least one changepoint was estimated on the training set. Dashed lines are drawn every 365 days. \emph{Fifth row:} 
For each 10-day window, we display the percentage of replicates of the generalized data thinning approach for which at least one changepoint was estimated on the training set \emph{and} that estimated changepoint had a test set p-value below 0.05 divided by the number of detected changepoints.}
\label{fig:application}
\end{figure}

We now turn to the lower two panels of Figure~\ref{fig:application} to see the advantage of the generalized data thinning approach over the order-preserved sample splitting approach. As mentioned previously, the generalized data thinning approach is amenable to a stability analysis whereas the order-preserved sample splitting approach is not.  
In this spirit,  
we repeatedly apply Algorithm \ref{alg:gdt} a total of 100 times and compare results across replicates.
The fourth panel of  Figure \ref{fig:application} displays, for each 10-day window, the percentage of replicates in which at least one changepoint was estimated  using the training set. The fifth panel displays, for each 10-day window, the percentage of replicates for  which there was at least one changepoint estimated using the training set \emph{and} that estimated changepoint had a test set p-value below the Bonferroni corrected threshold. As none of the changepoints identified are consistently found to be significant, we are skeptical that they represent true changes in variance. Additional data are likely needed to draw a definitive conclusion.

\section{Discussion}
\label{sec:discussion}

Our generalized data thinning proposal encompasses a diverse set of existing approaches for splitting a random variable into  independent random variables, from convolution-closed data thinning \citep{neufeld2023data} to sample splitting \citep{cox1975note}. It provides a lens through which these existing approaches follow from the same simple principle --- sufficiency ---  and can be derived through the same simple recipe (Algorithm~\ref{alg:recipe}).

The principle of sufficiency is key to generalized data thinning, as it enables a sampling mechanism that does not depend on unknown parameters.  When no sufficient statistic that reduces the data is available, as in the non-parametric setting of Section~\ref{sec:sample-splitting} and the Cauchy example of Section~\ref{subsec:cauchy}, then sample splitting is still possible, provided that the observations are independent and identically distributed. Conversely, in a setting with $n=1$ or where the elements of $\mathbf{X}=(X_1,\ldots,X_n)$ are not independent and identically distributed, sample splitting may not be possible, but other  generalized thinning approaches may be available. 

For example, consider a regression setting with a fixed design, in which  each response $Y_i$ has a potentially distinct distribution determined by its corresponding feature vector $\mathbf{x}_i$, for $i=1,\ldots,n$.  It is typical to recast this as random pairs $(\mathbf{x}_1,Y_1),\ldots,(\mathbf{x}_n, Y_n)$ that are independent and identically distributed from some joint distribution, thereby justifying sample splitting.  However, this amounts to viewing the model as arising from a random design, which  may not match the reality of how the design matrix was generated, and may not be well-aligned with the goals of the data analysis.  For instance, recall the example given in the introduction: Given a dataset consisting of the  $n=50$ states of the United States, it is unrealistic to treat each state as an independent and identically distributed draw, and undesirable to perform inference only on the states that were ``held out" of training. In this example, generalized data thinning could provide a more suitable alternative to sample splitting that stays true to the fixed design model underlying the data. 

The starting place for any generalized thinning strategy---whether sample splitting or otherwise---is  the assumption that the data are drawn from a distribution belonging to a family $\mathcal P$.  
An important topic of future study is the effect of model misspecification.  In particular, if we falsely assume that $X\sim P_\theta\in\mathcal P$, 
what goes wrong?  The  random variables $\Xt{1},\ldots,\Xt{K}$ generated by thinning will still satisfy the property $X=T(\Xt{1},\ldots,\Xt{K})$; however, $\Xt{1},\ldots,\Xt{K}$ may not be independent and may no longer have the intended marginals $\Qt{1}_\theta,\ldots,\Qt{K}_\theta$.  Can we quantify the effect of the model misspecification?  I.e., if the true family is ``close" to the assumed family, will  $\Xt{1},\ldots,\Xt{K}$ be only weakly dependent, and will the marginals be close to $\Qt{1}_\theta,\ldots,\Qt{K}_\theta$? Some initial answers to these questions can be found in \cite{neufeld2023data} and \cite{rasines2021splitting}.

In the introduction, we noted that generalized data thinning with $K=2$  is a $(U,V)$-decomposition, as defined in \cite{rasines2021splitting}.  We elaborate on that connection here.  The $(U,V)$-decomposition seeks independent random variables $U=u(X,W)$ and $V=v(X,W)$ such that $U$ and $V$ are jointly sufficient for the unknowns, where $W$ is a random variable possibly depending on $X$.  Suppose we can indirectly thin $X$ through $S(\cdot)$ by $T(\cdot)$.  This means we have produced independent random variables $\Xt{1}$ and $\Xt{2}$ for which $S(X)=T(\Xt{1},\Xt{2})$.  Since $S(X)$ is sufficient for $\theta$ on the basis of $X$, this implies that $(\Xt{1},\Xt{2})$ is jointly sufficient for $\theta$.   It follows that $(\Xt{1},\Xt{2})$ is a $(U,V)$-decomposition of $X$.  It is of interest to investigate whether there are $(U,V)$-decompositions that cannot be achieved through either direct or indirect generalized data thinning.

In Section~\ref{sec:bernoulli}, we provided an example of a family for which it is impossible to perform (non-trivial) thinning.  In such situations, one may choose to drop the requirement of independence between $\Xt{1}$ and $\Xt{2}$. We expand on this extension in Supplement G.

The data thinning strategies outlined in this paper are implemented in the \texttt{datathin} R package, available at \url{https://anna-neufeld.github.io/datathin/}. Code to reproduce the simulation study and data analysis results are available at \url{https://github.com/AmeerD/gdt-experiments}.

\section*{Acknowledgments}

We thank Nicholas Irons for identifying a problem with a previous version of the proof about Bernoulli thinning
(Section~\ref{sec:bernoulli}).   
We acknowledge funding from  the following sources: 
NIH R01 EB026908, NIH R01 DA047869, ONR N00014-23-1-2589, a Simons Investigator Award in Mathematical Modeling of Living Systems, and the Keck Foundation to DW; NIH R01 GM123993 to DW and JB; a Natural Sciences and Engineering Research Council of Canada Discovery Grant to LG; and a Natural Sciences and Engineering Research Council of Canada Postgraduate Scholarships-Doctoral to AD.

\newpage

\bibliographystyle{natbib}
\bibliography{gcs}

\newpage 
\section*{Supplementary Materials}
\appendix
\setcounter{figure}{0}
\makeatletter
\renewcommand \thefigure{S\@arabic\c@figure}
\makeatother
\section{Proofs} \label{sec:appendix}

\subsection{Proof of Theorem~\ref{thm:generalized-thinning}} \label{app:pf-sufficiency-thinning}

\begin{proof}
By their construction in Definition \ref{def:thinning}, the random variables $(\Xt{1},\ldots,\Xt{K})\sim \Qt{1}_\theta\times\cdots\times\Qt{K}_\theta$ have conditional distribution
$$
(\Xt{1},\ldots,\Xt{K})|\{X=t\}\sim G_t.
$$
Furthermore, Definition~\ref{def:thinning} tells us that $X=T(\Xt{1},\ldots,\Xt{K})$. This means that
$$
(\Xt{1},\ldots,\Xt{K})|\{T(\Xt{1},\ldots,\Xt{K})=t\}\sim G_t,
$$
which establishes part (b) of the theorem.

The distribution $G_t$ in Definition~\ref{def:thinning} does not depend on $\theta$ (note that it is associated with the entire family $\mathcal P$, not a particular distribution $P_\theta$). By the definition of sufficiency, the fact that the conditional distribution $(\Xt{1},\ldots,\Xt{K})|T(\Xt{1},\ldots,\Xt{K})$ does not depend on $\theta$ implies that $T(\Xt{1},\ldots,\Xt{K})$ is sufficient for $\theta$. This proves (a).
\end{proof}

\subsection{Proof of Theorem~\ref{thm:natural-exponential-families}} \label{app:pf-exp-fam}

\begin{proof}
We start by proving the $\impliedby$ direction.
Suppose $H$ is the convolution of $H_1,\ldots, H_K$.  We follow the recipe given in Algorithm~\ref{alg:recipe}:

\begin{enumerate}
    \item We choose $\cQt{k}=\mathcal P^{H_k}$ for $k=1,\ldots,K$.
    \item Let $(\Xt{1},\ldots,\Xt{K})\sim P^{H_1}_\theta\times \cdots\times P^{H_K}_\theta$.  This joint distribution satisfies
    $$
    \prod_{k=1}^KdP^{H_k}_\theta(\xt{k})=\exp\left\{\left(\sum_{k=1}^K\xt{k}\right)^\top\theta-\sum_{k=1}^K\psi_{H_k}(\theta)\right\}\prod_{k=1}^KdH_k(\xt{k}).
    $$
    By the factorization theorem, we find that $T(\Xt{1},\ldots,\Xt{K})=\sum_{k=1}^K\Xt{k}$ is sufficient for $\theta$.
    \item It remains to determine the distribution of $U=T(\Xt{1},\ldots,\Xt{K})$.  This random variable is the convolution of $P^{H_1}_\theta\times \cdots\times P^{H_K}_\theta$, and its distribution $\mu$ is defined by the following $K$-way integral:
    \begin{align*}
d\mu(u)&=\int\cdots\int1\left\{\sum_{k=1}^K\xt{k}=u\right\}\prod_{k=1}^KdP^{H_k}_\theta(\xt{k})\\
&=\exp\left\{u^\top\theta-\sum_{k=1}^K\psi_{H_k}(\theta)\right\}\int\cdots\int1\left\{\sum_{k=1}^K\xt{k}=u\right\}\prod_{k=1}^KdH_k(\xt{k})\\
&=\exp\left\{u^\top\theta-\psi_{H}(\theta)\right\}dH(u)\\
&=dP^H_\theta(u),
    \end{align*}
where in the second-to-last equality we use the assumption that $H$ is the $K$-way convolution of $H_1,\ldots,H_K$ and the fact that the moment generating function of a convolution is the product of the individual moment generating functions (and recalling that $\psi_H$ is the logarithm of the moment generating function of $H$). This establishes that $T(\Xt{1},\ldots,\Xt{K})\sim P^H_\theta$. By Theorem~\ref{thm:generalized-thinning}, the family $\mathcal P^H$ is thinned by this choice of $T(\cdot)$.
\end{enumerate}

We now prove the $\implies$ direction.
Suppose that $\mathcal P^H$ can be $K$-way thinned into $\mathcal P^{H_1},\ldots, \mathcal P^{H_K}$ using the summation function.  Then applying Definition~\ref{def:thinning} with $\theta=0$, we can take $X\sim P^{H}_0$ and produce $(\Xt{1},\ldots,\Xt{K})\sim P^{H_1}_0\times\cdots\times P^{H_K}_0$ for which $X=\Xt{1}+\cdots+\Xt{K}$.  Noting that $P^{H_k}_0=H_k$ for all $k$ and $P^{H}_0=H$, this proves that $H$ is a $K$-way convolution of $H_1,\ldots,H_K$. 
\end{proof}

\subsection{Proof of Theorem~\ref{thm:backwards-natexpfam}}
\label{app:pf-backwards}

\begin{proof}
Suppose that $X\sim P_\theta$ is a natural exponential family with $d$-dimensional parameter $\theta$ that can be thinned by $T(\cdot)$ into $\Xt{k}\overset{ind}\sim Q_\theta^{(k)}$ for $k=1,\dots K$. By Theorem \ref{thm:generalized-thinning}, $T(\Xt{1},\dots,\Xt{K})$ is a sufficient statistic for $\theta$ on the basis of $\Xt{1},\dots,\Xt{K}$, which implies that the conditional distribution $(\Xt{1},\dots,\Xt{K})|T(\Xt{1},\dots,\Xt{K})=t$ does not depend on $\theta$. We can write the conditional density with respect to the appropriate dominating measure as
\begin{align*}
& f_{\Xt{1},\dots,\Xt{K}|T(\Xt{1},\dots,\Xt{K})=t}(\xt{1},\dots,\xt{K}) \\ 
&= \frac{f_{\Xt{1},\dots,\Xt{K}}(\xt{1},\dots,\xt{K})1\{T(\xt{1},\dots,\xt{K})=t\}}{f_{T(\Xt{1},\dots,\Xt{K})}(t)} \\
&= \frac{q^{(1)}_\theta(\xt{1})\dots q^{(K)}_\theta(\xt{K}) 1\{T(\xt{1},\dots,\xt{K})=t\}}{\exp(T(\xt{1},\dots,\xt{K})^\top\theta - \psi(\theta))h(T(\xt{1},\dots,\xt{K}))} \\
&= \frac{\prod_{k=1}^Kq^{(k)}_\theta(\xt{k})}{\exp(T(\xt{1},\dots,\xt{K})^\top\theta - \psi(\theta))} \cdot \frac{ 1\{T(\xt{1},\dots,\xt{K})=t\}}{h(T(\xt{1},\dots,\xt{K}))},
\end{align*}
where in the second equality, we used that $T(\Xt{1},\dots,\Xt{K})\overset{D}{=}X\sim P_\theta$.

As this distribution cannot depend on $\theta$, the first fraction must be constant in $\theta$. That is, for any fixed $\theta_0\in\Omega$,
\begin{align}\label{eq:qtheta}
&\frac{\prod_{k=1}^Kq^{(k)}_\theta(\xt{k})}{\exp(T(\xt{1},\dots,\xt{K})^\top\theta - \psi(\theta))} = \frac{\prod_{k=1}^Kq^{(k)}_{\theta_0}(\xt{k})}{\exp(T(\xt{1},\dots,\xt{K})^\top\theta_0 - \psi(\theta_0))}  \nonumber\\
\iff&\prod_{k=1}^K \frac{q^{(k)}_\theta(\xt{k})}{q^{(k)}_{\theta_0}(\xt{k})} = \exp(T(\xt{1},\dots,\xt{K})^\top(\theta-\theta_0) - (\psi(\theta) - \psi(\theta_0))) \nonumber\\
\iff&T(\xt{1},\dots,\xt{K})^\top(\theta-\theta_0)  = \sum_{k=1}^K \left[\log q^{(k)}_\theta(\xt{k}) + \frac{1}{K}\psi(\theta) - \log q^{(k)}_{\theta_0}(\xt{k}) - \frac{1}{K}\psi(\theta_0)\right]. 
\end{align}

To proceed, we must first confirm that the term inside the summation on the right-hand side is linear in $\theta-\theta_0$. To see this, observe that if we replace $\xt{1}$ with $\tilde{x}^{(1)}$, then 
\begin{align*}
&\left[T(\xt{1},\xt{2},\dots,\xt{K})- T(\tilde{x}^{(1)},\xt{2},\dots,\xt{K})\right]^\top(\theta-\theta_0) \\
&= \log q^{(1)}_\theta(\xt{1}) - \log q^{(1)}_{\theta_0}(\xt{1}) - \log q^{(1)}_\theta(\tilde{x}^{(1)}) + \log q^{(1)}_{\theta_0}(\tilde{x}^{(1)}) \\
&= a^{(1)}(\xt{1},\theta) - a^{(1)}(\xt{1},\theta_0) - a^{(1)}(\tilde{x}^{(1)},\theta) + a^{(1)}(\tilde{x}^{(1)},\theta_0)
\end{align*}
where $a^{(1)}(x,\theta)=\log q_\theta^{(1)}(x)$. Since the initial expression in the previous string of equalities is linear in $\theta$, the same must be true for the final expression, implying that $a^{(1)}$ must be of the form 
$$
a^{(1)}(x,\theta) = T^{(1)}(x)^\top\theta + f^{(1)}(x) + g^{(1)}(\theta) 
$$
for some functions $T^{(1)}(\cdot)$, $f^{(1)}(\cdot)$, and $g^{(1)}(\cdot)$.

Substituting into the above, 
$$
\left[T(\xt{1},\xt{2},\dots,\xt{K})- T(\tilde{x}^{(1)},\xt{2},\dots,\xt{K})\right]^\top(\theta-\theta_0)
= \left[T^{(1)}(\xt{1})-T^{(1)}(\tilde{x}^{(1)})\right]^\top(\theta-\theta_0).
$$

Applying the same logic to every $k=1,\dots,K$ in sequence, we have that for any $k$,
\begin{align*}
&\left[T(\tilde{x}^{(1)},\dots,\tilde{x}^{(k-1)},\xt{k},\xt{k+1},\dots,\xt{K})- T(\tilde{x}^{(1)},\dots,\tilde{x}^{(k-1)},\tilde{x}^{(k)},\xt{k+1},\dots,\xt{K})\right]^\top(\theta-\theta_0) \\
&= \left[T^{(k)}(\xt{k})-T^{(k)}(\tilde{x}^{(k)})\right]^\top(\theta-\theta_0)
\end{align*}
for some function $T^{(k)}(\cdot)$.

Summing over $k=1,\dots,K$ then yields
$$
\left[T(\xt{1},\dots,\xt{K})- T(\tilde{x}^{(1)},\dots,\tilde{x}^{(K)})\right]^\top(\theta-\theta_0) 
= \left[\sum_{k=1}^KT^{(k)}(\xt{k})-\sum_{k=1}^KT^{(k)}(\tilde{x}^{(k)})\right]^\top(\theta-\theta_0).
$$

Since $\mathcal{P}=\{P_\theta:\theta\in\Omega\}$ is a $d$-dimensional full-rank natural exponential family, there exists a $\theta_0\in\Omega$ and $\epsilon>0$ such that $\theta = \theta_0 + \epsilon v\in\Omega$ for every $v\in\mathbb{R}^d$ such that $\|v\|_2=1$. 

Since the previous display is true for every pair of $\theta$ and $\theta_0$, selecting pairs such that $\theta-\theta_0=\epsilon v$ simplifies the above into
$$
\left[T(\xt{1},\dots,\xt{K})- T(\tilde{x}^{(1)},\dots,\tilde{x}^{(K)})\right]^\top v 
= \left[\sum_{k=1}^KT^{(k)}(\xt{k})-\sum_{k=1}^KT^{(k)}(\tilde{x}^{(k)})\right]^\top v.
$$

As the above holds for all $v\in\mathbb{R}^d$ such that $||v||_2=1$, restricting our attention to the standard basis vectors implies that 
$$
T(\xt{1},\dots,\xt{K})- T(\tilde{x}^{(1)},\dots,\tilde{x}^{(K)})
= \sum_{k=1}^KT^{(k)}(\xt{k})-\sum_{k=1}^KT^{(k)}(\tilde{x}^{(k)})
$$

and furthermore that 
$$
T(\xt{1},\dots,\xt{K})
= \sum_{k=1}^KT^{(k)}(\xt{k})+c.
$$

Without loss of generality, $c$ can be absorbed into the $T^{(k)}(\cdot)$ functions, thus proving the claim that if a natural exponential family can be thinned, then the function $T(\cdot)$ must be a summation of the form $T(\Xt{1},\dots,\Xt{K})=\sum_{k=1}^K T^{(k)}(\Xt{k})$ for some functions $T^{(k)}(\cdot)$ for $k=1,\dots,K$.

Finally, plugging this expression into \eqref{eq:qtheta} gives
$$
\sum_{k=1}^KT^{(k)}(\xt{k})^\top(\theta-\theta_0)  = \sum_{k=1}^K \left[\log q^{(k)}_\theta(\xt{k}) + \frac{1}{K}\psi(\theta) - \log q^{(k)}_{\theta_0}(\xt{k}) - \frac{1}{K}\psi(\theta_0)\right], 
$$
which shows that the functions $q_\theta^{(k)}(\cdot)$ can be characterised as
\begin{align*}
&T^{(k)}(\xt{k})^\top (\theta-\theta_0) = \log q^{(k)}_\theta(\xt{k}) + \frac{1}{K}\psi(\theta) - \log q^{(k)}_{\theta_0}(\xt{k}) - \frac{1}{K}\psi(\theta_0) \\
\iff&\log q^{(k)}_\theta(\xt{k}) = T^{(k)}(\xt{k})^\top (\theta-\theta_0) - \frac{1}{K}\psi(\theta) + \log q^{(k)}_{\theta_0}(\xt{k}) + \frac{1}{K}\psi(\theta_0) \\
\iff& q^{(k)}_\theta(\xt{k}) =  q^{(k)}_{\theta_0}(\xt{k})\exp\left(T^{(k)}(\xt{k})^\top (\theta-\theta_0) - \frac{1}{K}\psi(\theta) + \frac{1}{K}\psi(\theta_0)\right).
\end{align*}

Thus, $q_\theta^{(k)}(\cdot)$ is the density of an exponential family with sufficient statistic $T^{(k)}(\cdot)$ and carrier density given by $h^{(k)}(\xt{k})\propto q^{(k)}_{\theta_0}(\xt{k})\exp(-T^{(k)}(\xt{k})^\top \theta_0)$.

\end{proof}

\subsection{Proof of Proposition~\ref{prop:thin-S}} \label{app:pf-thin-S}

\begin{proof}
The result follows from a chain of equalities:
\begin{align*}
    I_X(\theta)&=I_{S(X)}(\theta)\\
    &=I_{T(\Xt{1},\ldots,\Xt{K})}(\theta)\\
    &=I_{(\Xt{1},\ldots,\Xt{K})}(\theta)\\
    &=\sum_{k=1}^K I_{\Xt{k}}(\theta).
\end{align*}
The first equality is true because $S(X)$ is sufficient for $\theta$ based on $X$.  The second equality follows from the definition of thinning $S(X)$ into $\Xt{1},\ldots,\Xt{K}$ using $T(\cdot)$.  The third equality follows from Theorem~\ref{thm:generalized-thinning}, which tells us that $T(\Xt{1},\ldots,\Xt{K})$ is sufficient for $\theta$ based on $(\Xt{1},\ldots,\Xt{K})$. The final equality follows from independence.
\end{proof}

\subsection{Proof of Theorem~\ref{thm:bernoulli}} \label{app:bernoulli-convolution}

\begin{proof}
We begin by providing some intuition. Since $Z^{(1)}$ and $Z^{(2)}$ are non-constant random variables, their supports each must contain more than one element. Therefore, by independence, the support of $Z^{(1)} + Z^{(2)}$ must contain more than two elements and thus cannot be Bernoulli.

Formally, let $\Qt{1}$ and $\Qt{2}$ be the distributions of $Z^{(1)}$ and $Z^{(2)}$, respectively.
If $Z^{(1)}+Z^{(2)}$ were Bernoulli, then
\begin{align*}
    1&=\mathbb P(Z^{(1)}+Z^{(2)}\in\{0,1\})\\
    &=\int\mathbb P(Z^{(2)}\in\{0-z^{(1)},~1-z^{(1)}\}|Z^{(1)}=z^{(1)})d\Qt{1}(z^{(1)})\\
    &=\int\mathbb P(Z^{(2)}\in\{0-z^{(1)},~1-z^{(1)}\})d\Qt{1}(z^{(1)}),
\end{align*}
where the last equality follows by independence of $Z^{(1)}$ and $Z^{(2)}$.  
For this integral to equal 1, we would need
\begin{equation}\label{eq:prob1}
    \mathbb P(Z^{(2)}\in\{0-z^{(1)},~1-z^{(1)}\})=1 \text{ for $\Qt{1}$-almost every $z^{(1)}$}
\end{equation}
since $P(Z^{(2)}\in\{0-z^{(1)},~1-z^{(1)}\})$ is bounded above by $1$. For $Z^{(1)}$ to be non-constant, there must be at least two distinct points $a$ and $b$ such that \eqref{eq:prob1} holds with $z^{(1)}=a$ and holds with $z^{(1)}=b$.  Since the intersection of two probability 1 sets is a set that holds with probability 1, we have that 
$$
\mathbb P(Z^{(2)}\in\{-a,~1-a\}\cap\{-b,~1-b\})=1,
$$
from which it follows that $\{-a,~1-a\}\cap\{-b,~1-b\}$ is non-empty.  However, there is no choice of $a\neq b$ for which this intersection has more than one element (which is required for $Z^{(2)}$ to be non-constant).  Thus we arrive at a contradiction.

\end{proof}

\subsection{Proof of Corollary~\ref{cor:bern2}} \label{app:bern2}

\begin{proof}
Since the Bernoulli family is a natural exponential family, if at least one of the conclusions of Theorem $\ref{thm:backwards-natexpfam}$ is always false for the Bernoulli, then the contrapositive of Theorem $\ref{thm:backwards-natexpfam}$ will imply that the Bernoulli distribution cannot be thinned by any function $T(\cdot)$. 

Suppose that $X\sim\text{Bernoulli}(\theta)$. Consider the first conclusion, namely that the thinning function $T(\xt{1},\dots,\xt{K})$ is of the form $\sum_{k=1}^KT^{(k)}(\xt{k})$. This would imply that $\sum_{k=1}^KT^{(k)}(\Xt{k})=X\sim\text{Bernoulli}(\theta)$.  However, by Theorem \ref{thm:bernoulli}, $T(\cdot)$ cannot be a convolution of independent, non-constant random variables. Therefore, the second conclusion can only be true if $\Xt{1},\dots,\Xt{K}$ are not mutually independent, some or all of $\Xt{1},\dots,\Xt{K}$ are constant, or some or all of $T^{(1)}(\cdot),\dots,T^{(K)}(\cdot)$ are constant functions. All three cases violate the second conclusion of Theorem \ref{thm:backwards-natexpfam} that $\Xt{k}$ are independent exponential families. Therefore, both conclusions of Theorem \ref{thm:backwards-natexpfam} cannot be simultaneously true, thus proving the claim. 
\end{proof}

\section{Connecting Example~\ref{exmp:gaussian} to prior work} \label{app:UV}
 
Other authors have considered obtaining two independent Gaussian random variables $\mathbf{U}$ and $\mathbf{V}$ from a single Gaussian random variable $\mathbf{X} \sim N_n(\boldsymbol\theta, \mathbf{I}_n)$ by generating $\mathbf{W} \sim N_n(\mathbf{0}_n, \gamma \mathbf{I}_n)$ for a tuning parameter $\gamma > 0$, and then setting $\mathbf{U} = \mathbf{X} + \mathbf{W}$ and  $\mathbf{V} = \mathbf{X} - \gamma^{-1}\mathbf{W}$. Then, $\mathbf{U} \sim N_n(\boldsymbol\theta, (1 + \gamma) \mathbf{I}_n)$ and $\mathbf{V} \sim N_n(\boldsymbol\theta, (1 + \gamma^{-1}) \mathbf{I}_n)$ are independent.
\citet{rasines2021splitting} and \citet{leiner2022data} applied this decomposition to address Scenario 1 in Section \ref{sec:introduction}. Additionally, \citet{rasines2021splitting} showed that this leads to asymptotically valid inference under certain regularity conditions, even when $\mathbf{X}$ is not normally distributed.  \citet{tian2020prediction} and \citet{oliveira2021unbiased} applied this decomposition to address Scenario 2 in Section \ref{sec:introduction}. 

This decomposition is in fact identical to Example~\ref{exmp:gaussian} up to scaling, with $\mathbf{X}^{(1)}=\epsilon_1\mathbf{U}$, $\mathbf{X}^{(2)}=\epsilon_2\mathbf{V}$, $\epsilon_1=(1+\gamma)^{-1}$, and $\epsilon_2=1-\epsilon_1$.
In particular, to thin $\mathbf{X} \sim N_n(\boldsymbol\theta, \mathbf{I}_n)$ by addition into $(\mathbf{X}^{(1)},\mathbf{X}^{(2)})$, we sample from $G_{\mathbf{X}}$ where $G_{\mathbf{t}}$, defined in Theorem~\ref{thm:generalized-thinning}, can be shown to equal the singular multivariate normal distribution
$$
N_{2n}\left(\begin{pmatrix}\epsilon_1\mathbf{t}\\ \epsilon_2\mathbf{t}\end{pmatrix},\epsilon_1\epsilon_2\begin{pmatrix}\mathbf{I}_n&-\mathbf{I}_n\\-\mathbf{I}_n&\mathbf{I}_n\end{pmatrix}\right).
$$

Sampling from $G_{\mathbf{X}}$ is equivalent to sampling $\mathbf{W} \sim N_n(\mathbf{0}_n, \gamma \mathbf{I}_n)$ (independent of $\mathbf{X}$) and then generating $\mathbf{X}^{(1)} = \epsilon_1 (\mathbf{X} + \mathbf{W})$ and $\mathbf{X}^{(2)} = \epsilon_2 (\mathbf{X} - \gamma^{-1} \mathbf{W})$.
These ideas can easily be generalized to thin $\mathbf{X} \sim N_n(\boldsymbol\theta, \boldsymbol\Sigma)$ with a known positive definite covariance matrix $\boldsymbol\Sigma$.

\section{Derivations of thinning procedures} \label{sec:derivations}

\subsection{Exponential families}

\subsubsection{Weibull distribution}
\label{subsec:weibull-proofs}

The gamma family with known shape $\alpha$ and unknown rate $\theta$ admits a collection of thinning functions, indexed by a hyperparameter $\nu>0$, that thin the gamma family into the Weibull family.

\begin{exmp}[Thinning $\text{Gamma}(\alpha,\theta)$ with $\alpha=K$ known, approach 3]
\label{ex:weibull}
Recall that the Weibull distribution with known shape parameter $\nu$ (but varying scale $\lambda$) is a general exponential family.  Then, starting with $\Xt{k} \overset{iid}{\sim} \text{Weibull}(\lambda, \nu)$ for $k=1,\dots,K$, we have that $T^{(k)}(\xt{k}) = (\xt{k})^\nu$. We can thus apply Proposition~\ref{prop:exp-family} using the function
$$
T(\xt{1},\dots,\xt{K})=\sum_{k=1}^K(\xt{k})^\nu
$$
to thin the distribution of $\sum_{k=1}^K(\Xt{k})^\nu$ into $(\Xt{1},\dots,\Xt{K})$. 
As $\sum_{k=1}^K(\Xt{k})^\nu\sim\text{Gamma}(K,\lambda^{-\nu})$, 
taking $\lambda=\theta^{-1/\nu}$ yields the desired result.  

To generate $(\Xt{1},\dots,\Xt{K})$, we can first apply the $K$-fold gamma thinning result discussed in Example \ref{ex:dtgamma} with $\epsilon_k = \frac{1}{K}$ to generate $Y^{(k)} \overset{iid}{\sim} \text{Exp}(\lambda^{-\nu})$, and then compute $\Xt{k} = (Y^{(k)})^{\frac{1}{\nu}}$. 
\end{exmp}

\begin{proof}[Proof of Example~\ref{ex:weibull}]
We must prove that if $\Xt{k} \overset{iid}{\sim} \text{Weibull}(\lambda, \nu)$, for $k=1,\dots,K$, then $\sum_{k=1}^K\left(\Xt{k}\right)^\nu\sim \text{Gamma}(K, \lambda^{-\nu})$.

Recalling that the gamma distribution is convolution-closed in its shape parameter, it is sufficient to show for a single $\Xt{k}\sim\text{Weibull}(\lambda, \nu)$ random variable that $(\Xt{k})^\nu \sim \text{Gamma}(1,\lambda^{-\nu})=\text{Exp}(\lambda^{-\nu})$, where $\nu>0$. Denote $Z=(\Xt{k})^\nu$. Then,

\begin{align*}
    f_Z(z) &= f_{\Xt{k}}\left(z^{\frac{1}{\nu}}\right)\left|\frac{d \xt{k}}{d z}\right| \\
    &\propto \left(z^{\frac{1}{\nu}}\right)^{\nu-1} \exp\left(-\left(\frac{z^{\frac{1}{\nu}}}{\lambda}\right)^\nu\right)\left|\frac{1}{\nu}z^{-\frac{\nu-1}{\nu}}\right| \\
    &\propto \exp\left(-\lambda^{-\nu}z\right).
\end{align*}

The above implies that $(\Xt{k})^\nu \sim \text{Exp}(\lambda^{-\nu})$, and thus $\sum_{k=1}^K\left(\Xt{k}\right)^\nu\sim \text{Gamma}(K, \lambda^{-\nu})$ as required.
    
\end{proof}

\subsubsection{Beta distribution}
\label{subsec:beta-proofs}

\begin{proof}[Proof of Example~\ref{ex:beta}]
We must prove the following three claims:
\begin{enumerate}
    \item \emph{If $\Xt{k} \sim \text{Beta}\left(\frac{1}{K}\theta + \frac{k-1}{K}, \frac{1}{K}\beta\right)$, for $k=1,\dots,K$, and $\Xt{k}$ are mutually independent, then $\left(\prod_{k=1}^K \Xt{k}\right)^{\frac{1}{K}} \sim \text{Beta}(\theta, \beta)$.}

Recall that the beta distribution is fully characterised by its moments due to its finite support \citep{feller1971}. Also recall that the expectation of the $r$th power of a $X\sim\text{Beta}(\theta,\beta)$ random variable is $E[X^r]=\frac{B(\theta+r,\beta)}{B(\theta,\beta)}$ where $B$ is the beta function. Finally, note that the Gauss multiplication theorem \citep[page 256]{abramowitz1972handbook} says that
$$
\prod_{k=1}^K\Gamma\left(z+\frac{k-1}{K}\right)=(2\pi)^{\frac{K-1}{2}}K^{\frac{1}{2}-Kz}\Gamma(Kz).
$$

Then, the $r$th moment of $\left(\prod_{k=1}^K \Xt{k}\right)^{\frac{1}{K}}$ is

\begin{align*} 
E\left[\left(\left(\prod_{k=1}^K \Xt{k}\right)^{\frac{1}{K}}\right)^r\right] &= \prod_{k=1}^KE\left[ \left(\Xt{k}\right)^{\frac{r}{K}}\right] \\
&= \prod_{k=1}^K\frac{B(\frac{1}{K}\theta+\frac{k-1}{K}+\frac{r}{K},\frac{1}{K}\beta)}{B(\frac{1}{K}\theta+\frac{k-1}{K},\frac{1}{K}\beta)} \\
&= \prod_{k=1}^K\frac{\frac{\Gamma(\frac{1}{K}\theta+\frac{k-1}{K}+ \frac{r}{K})\Gamma(\frac{1}{K}\beta)}{\Gamma(\frac{1}{K}\theta +\frac{k-1}{K}+ \frac{r}{K}+\frac{1}{K}\beta)}}{\frac{\Gamma(\frac{1}{K}\theta+\frac{k-1}{K})\Gamma(\frac{1}{K}\beta)}{\Gamma(\frac{1}{K}\theta+\frac{k-1}{K}+\frac{1}{K}\beta)}} \\
&= \prod_{k=0}^{K-1}\frac{\Gamma(\frac{1}{K}\theta+ \frac{r}{K}+\frac{k}{K})\Gamma(\frac{1}{K}\theta+\frac{1}{K}\beta+\frac{k}{K})}{\Gamma(\frac{1}{K}\theta+\frac{1}{K}\beta+ \frac{r}{K}+\frac{k}{K})\Gamma(\frac{1}{K}\theta+\frac{k}{K})} \\
&= \frac{\left[\prod_{k=0}^{K-1}\Gamma(\frac{1}{K}\theta+ \frac{r}{K}+\frac{k}{K})\right]\left[\prod_{k=0}^{K-1}\Gamma(\frac{1}{K}\theta+\frac{1}{K}\beta+\frac{k}{K})\right]}{\left[\prod_{k=0}^{K-1}\Gamma(\frac{1}{K}\theta+\frac{1}{K}\beta+ \frac{r}{K}+\frac{k}{K})\right]\left[\prod_{k=0}^{K-1}\Gamma(\frac{1}{K}\theta+\frac{k}{K})\right]} \\
&= \frac{\left[(2\pi)^{\frac{K-1}{2}}K^{\frac{1}{2}-(\theta+r)}\Gamma(\theta+r)\right]\left[(2\pi)^{\frac{K-1}{2}}K^{\frac{1}{2}-(\theta+\beta)}\Gamma(\theta+\beta)\right]}{\left[(2\pi)^{\frac{K-1}{2}}K^{\frac{1}{2}-(\theta+\beta+r)}\Gamma(\theta+\beta+r)\right]\left[(2\pi)^{\frac{K-1}{2}}K^{\frac{1}{2}-\theta}\Gamma(\theta)\right]} \\
&= \frac{\Gamma(\theta+r)\Gamma(\beta)\Gamma(\theta+\beta)}{\Gamma(\theta+\beta+r)\Gamma(\theta)\Gamma(\beta)} \\
&= \frac{B(\theta+r,\beta)}{B(\theta,\beta)}.
\end{align*}
This matches the moments of a $\text{Beta}(\theta, \beta)$ distribution, implying that $\left(\prod_{k=1}^K \Xt{k}\right)^{\frac{1}{K}} \sim \text{Beta}(\theta, \beta)$ as required. 
    
    \item \emph{A sufficient statistic for $\theta$ in the joint distribution of $\Xt{1},\dots,\Xt{K}$ is 
    $$T(\Xt{1},\dots,\Xt{K})=\left(\prod_{k=1}^K \Xt{k}\right)^{\frac1{K}}.$$}

By the mutual independence of $\Xt{k}$, the joint density of $\Xt{1},\dots,\Xt{K}$ can be written as

\begin{align*} 
f_{\Xt{1},\dots,\Xt{K}}(\xt{1},\dots,\xt{K}) &= \prod_{k=1}^K f_{\Xt{k}}(\xt{k}) \\
&\propto \prod_{k=1}^K \left(\xt{k}\right)^{\frac{1}{K}\theta + \frac{k-1}{K} - 1}\left(1-\xt{k}\right)^{\frac{1}{K}\beta-1} \\
&=\left[\left(\prod_{k=1}^K \xt{k}\right)^{\frac{1}{K}}\right]^{\theta}\left[\prod_{k=1}^K \left(\xt{k}\right)^{\frac{k-1}{K}-1}\right]\left[\prod_{k=1}^K\left(1-\xt{k}\right)\right]^{\frac{1}{K}\beta-1}.
\end{align*}

    By the factorization theorem, $T(\Xt{1},\dots,\Xt{K})=\left(\prod_{k=1}^K \Xt{k}\right)^{\frac{1}{K}}$ is a sufficient statistic for $\theta$.

    \item \emph{To sample from $G_t$, i.e., the distribution of $(\Xt{1},\dots,\Xt{K})|T(\Xt{1},\dots,\Xt{K})=t$ 
    , we first sample from $(\Xt{1},\dots,\Xt{K-1})|T(\Xt{1},\dots,\Xt{K})=t$ and then recover $\Xt{K}$.}

We will show that the conditional density $f_{\Xt{1},\dots,\Xt{K-1}|T(\Xt{1},\dots,\Xt{K})=t}(\xt{1},\dots,\xt{K-1})$ is, up to a normalizing constant involving $t$, 
$$
\left[\prod_{k=1}^{K-1}\left(\xt{k}\right)^{\frac{k-K}{K}-1}\right]\left[\left(\prod_{k=1}^{K-1}\left(1-\xt{k}\right)\right)\left(1-\frac{t^K}{\prod_{k=1}^{K-1}\xt{k}}\right)\right]^{\frac{1}{K}\beta-1}.$$
We derive this as follows (where any factors not involving $\xt{1},\ldots,\xt{K-1}$ are omitted, and we write $\theta_k=\frac{\theta}{K}+\frac{k-1}{K}$):

\begin{align*}
&\quad f_{\Xt{1},\dots,\Xt{K-1}|T(\Xt{1},\dots,\Xt{K})=t}(\xt{1},\dots,\xt{K-1}) \\ 
&\propto f_{\Xt{1},\dots,\Xt{K-1},T(\Xt{1},\dots,\Xt{K})}(\xt{1},\dots,\xt{K-1},t) \\
&= f_{\Xt{1},\dots,\Xt{K}}\left(\xt{1},\dots,\xt{K-1},\frac{t^K}{\prod_{k=1}^{K-1}\xt{k}}\right)\left|\frac{\partial}{\partial t}\frac{t^K}{\prod_{k=1}^{K-1}\xt{k}}\right| \\
&=\left(\prod_{k=1}^{K-1}f_{\Xt{k}}\left(\xt{k}\right)\right)f_{\Xt{K}}\left(\frac{t^K}{\prod_{k=1}^{K-1}\xt{k}}\right)\left|\frac{Kt^{K-1}}{\prod_{k=1}^{K-1}\xt{k}}\right| \\
&\propto \left(\prod_{k=1}^{K-1}\left(\xt{k}\right)^{\theta_k-1}\left(1-\xt{k}\right)^{\frac{1}{K}\beta-1}\right) \left(\frac{t^K}{\prod_{k=1}^{K-1}\xt{k}}\right)^{\theta_K-1}\left(1-\frac{t^K}{\prod_{k=1}^{K-1}\xt{k}}\right)^{\frac{1}{K}\beta-1}\frac{1}{\prod_{k=1}^{K-1}\xt{k}} \\
&\propto \left[\prod_{k=1}^{K-1}\left(\xt{k}\right)^{\theta_k-\theta_K-1}\right]\left[\left(\prod_{k=1}^{K-1}\left(1-\xt{k}\right)\right)\left(1-\frac{t^K}{\prod_{k=1}^{K-1}\xt{k}}\right)\right]^{\frac{1}{K}\beta-1}.
\end{align*}
It remains to note that $\theta_k-\theta_K=\frac{k-K}{K}$.

To generate $(\Xt{1},\dots,\Xt{K})$, first sample from $(\Xt{1},\dots,\Xt{K-1})|T(\Xt{1},\dots,\Xt{K})=t$ with numerical sampling methods. In this example, a Metropolis algorithm with a uniform proposal over $[t^{K},1)^K$ is effective \citep{metropolis1953equation}. Then, compute $\Xt{K}=\frac{t^K}{\prod_{k=1}^{K-1}\Xt{k}}$. 
\end{enumerate}
\end{proof}

\subsubsection{Dirichlet distribution}
\label{subsec:dirichlet}

The Dirichlet distribution (which subsumes the beta distribution) on the $K$-simplex is typically parameterized by a $K$-dimensional vector $\boldsymbol\alpha$. It can also be parameterized by the mean, defined as $\boldsymbol\alpha/\sum_{k=1}^K\alpha_k$, and precision, defined as $\sum_{k=1}^K \alpha_k$. Using the mean-precision parameterization, the Dirichlet distribution with known precision $\phi$ and unknown mean $\boldsymbol\theta$ can be thinned into $K$ gamma random variables.

\begin{exmp}[Thinning $\text{Dirichlet}_K(\boldsymbol\theta,\phi)$ with $\phi$ known]
\label{ex:dirichlet}
Following the steps of Algorithm \ref{alg:recipe}, start with $K$ mutually independent gamma random variables, $\Xt{k}\sim\text{Gamma}(\theta_k\phi,\nu)$ for $k=1,\dots,K$ where $\nu>0$ is a tuning parameter chosen by the user. Then, note that   
$$
T(\Xt{1},\dots,\Xt{K}) = (\Xt{1},\dots,\Xt{K})^\top/\sum_{k=1}^K\Xt{K}
$$
is a sufficient statistic for $\boldsymbol\theta$ on the basis of $(\Xt{1},\dots,\Xt{K})$. Since $T(\Xt{1},\dots,\Xt{K}) \sim \text{Dirichlet}_K(\boldsymbol\theta,\phi)$, we can thus thin the Dirichlet distribution into $(\Xt{1},\dots,\Xt{K})$. 

To generate $(\Xt{1},\dots,\Xt{K})$, first sample from the conditional distribution of $\Xt{1}$ given $T(\Xt{1},\dots,\Xt{K})=\mathbf{t}$. This follows a $\text{Gamma}(\phi,\nu/t_1)$ distribution. Then for $k=2,\dots,K$, set $\Xt{k}=\Xt{1}t_k/t_1$.
\end{exmp}

\begin{proof}[Proof of Example~\ref{ex:dirichlet}]
We must prove that $(\Xt{1},\dots,\Xt{K})^\top/\sum_{k=1}^K\Xt{K}$ is a sufficient statistic for $\boldsymbol\theta$ in the joint distribution of $\Xt{k}\sim\text{Gamma}(\theta_k\phi,\nu)$ for $k=1,\dots,K$.

Consider the joint density of $\Xt{k}$ for $k=1,\dots,K$:

\begin{align*}
f_{\Xt{1},\dots,\Xt{K}}(\xt{1},\dots,\xt{K}) &= \prod_{k=1}^K f_{\Xt{k}}(\xt{k}) \\
&\propto \left[\prod_{k=1}^K \left(\xt{k}\right)^{\theta_k\phi-1}\exp\left(-\nu\xt{k}\right)\right] \\
&= \prod_{k=1}^K \left[\left(\xt{k}\right)^{\theta_k\phi-1}\left(\sum_{k'=1}^K\xt{k'}\right)^{(\theta_k\phi-1)-(\theta_k\phi-1)}\exp\left(-\nu\xt{k}\right)\right] \\
&= \prod_{k=1}^K \left[\left(\frac{\xt{k}}{\sum_{k'=1}^K\xt{k'}}\right)^{\theta_k\phi-1}\left(\sum_{k'=1}^K\xt{k'}\right)^{\theta_k\phi-1}\exp\left(-\nu\xt{k}\right)\right] \\
&= \left(\sum_{k'=1}^K\xt{k'}\right)^{\sum_{k=1}^K(\theta_k\phi-1)}\prod_{k=1}^K \left[\left(\frac{\xt{k}}{\sum_{k'=1}^K\xt{k'}}\right)^{\theta_k\phi-1}\exp\left(-\nu\xt{k}\right)\right] \\
&= \left(\sum_{k'=1}^K\xt{k'}\right)^{\phi\sum_{k=1}^K\theta_k-K}\prod_{k=1}^K \left[\left(\frac{\xt{k}}{\sum_{k'=1}^K\xt{k'}}\right)^{\theta_k\phi-1}\exp\left(-\nu\xt{k}\right)\right] \\
&= \left(\sum_{k'=1}^K\xt{k'}\right)^{\phi-K}\prod_{k=1}^K \left[\left(\frac{\xt{k}}{\sum_{k'=1}^K\xt{k'}}\right)^{\theta_k\phi-1}\exp\left(-\nu\xt{k}\right)\right] \\
\end{align*}

The above implies by the factorization theorem that $(\Xt{1},\dots,\Xt{K})^\top/\sum_{k=1}^K\Xt{K}$ is a sufficient statistic for $\boldsymbol\theta$ on the basis of $(\Xt{1},\dots,\Xt{K})$ as required.
    
\end{proof}

\subsubsection{Gamma distribution}
\label{subsec:gamma-proofs}

In proving Example \ref{ex:gamma1}, rather than work with gamma random variables directly, we will find it convenient to work with the logarithm of gamma random variables. We start by deriving the moment generating function of this distribution.
\begin{lemma}\label{lemma:log-gamma}
Consider a random variable $Y$ such that $e^Y\sim \text{Gamma}(\theta, \beta)$.  Then the moment generating function of $Y$ exists in a neighborhood around 0 and is given by 
$$
\Phi_{Y}(t)=\frac{\Gamma(\theta+t)}{\Gamma(\theta)\beta^{t}}.
$$
\end{lemma}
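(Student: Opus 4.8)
The plan is to compute the moment generating function of $Y$ directly from the density of $e^Y$. First I would set $X=e^Y\sim\text{Gamma}(\theta,\beta)$, so that (in the rate parametrization used throughout the paper) $X$ has density $f_X(x)=\frac{\beta^\theta}{\Gamma(\theta)}x^{\theta-1}e^{-\beta x}$ on $(0,\infty)$. The key observation is that $e^{tY}=(e^Y)^t=X^t$, so $\Phi_Y(t)=E[e^{tY}]=E[X^t]=\int_0^\infty x^t f_X(x)\,dx$; computing the MGF of $Y$ is therefore the same as computing the (possibly fractional) moments of a gamma random variable.

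Next I would evaluate this integral. Collecting the power of $x$ gives $\Phi_Y(t)=\frac{\beta^\theta}{\Gamma(\theta)}\int_0^\infty x^{\theta+t-1}e^{-\beta x}\,dx$, and the substitution $u=\beta x$ turns the integral into $\beta^{-(\theta+t)}\int_0^\infty u^{\theta+t-1}e^{-u}\,du=\beta^{-(\theta+t)}\Gamma(\theta+t)$, valid whenever $\theta+t>0$. Simplifying the constants yields $\Phi_Y(t)=\frac{\Gamma(\theta+t)}{\Gamma(\theta)\beta^{t}}$, which is the claimed formula.

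Finally, I would record the domain of validity: the integral $\int_0^\infty x^{\theta+t-1}e^{-\beta x}\,dx$ converges precisely when $\theta+t>0$ (integrability at the origin requires $\theta+t>0$, while the exponential factor always controls the tail), i.e.\ for $t>-\theta$. Since $\theta>0$, this set contains an open interval around $0$, so $\Phi_Y$ exists in a neighborhood of $0$, as claimed. There is essentially no obstacle here beyond bookkeeping; the only point that warrants a moment's care is precisely this convergence check, since it is what pins down the neighborhood of $0$ and justifies writing the moments as ratios of gamma functions.
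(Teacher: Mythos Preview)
Your proof is correct and follows essentially the same approach as the paper: both reduce $\Phi_Y(t)$ to a gamma integral and identify the condition $t>-\theta$ for convergence. The only cosmetic difference is that the paper first derives the density of $Y=\log X$ and integrates $e^{ty}$ against it, whereas you stay in $X$-space and compute $E[X^t]$ directly; these are the same calculation up to the substitution $x=e^y$.
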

\begin{proof}[Proof of Lemma~\ref{lemma:log-gamma}]
The density of $Y$ is given by
\begin{align*}
    f_Y(y)&=f_{\text{Gamma}(\theta, \beta)}(e^y)e^y\\
    &=\frac{\beta^\theta}{\Gamma(\theta)}e^{y(\theta-1)}e^{-\beta e^y}e^y\\
    &=\frac{\beta^\theta}{\Gamma(\theta)}e^{y\theta-\beta e^y},
\end{align*}
where the extra factor of $e^y$ in the first equality is the Jacobian of the transformation.  For $t>-\theta$, the moment generating function for this random variable is given by
\begin{align*}
\Phi_Y(t)&=\mathbb E[e^{tY}]\\
    &=\int e^{ty}\frac{\beta^\theta}{\Gamma(\theta)}e^{y\theta-\beta e^y}dy\\
    &=\frac{\beta^\theta}{\Gamma(\theta)}\frac{\Gamma(\theta+t)}{\beta^{\theta+t}}\int\frac{\beta^{\theta+t}}{\Gamma(\theta+t)} e^{y(\theta+t)-\beta e^y}dy\\
    &=\frac{\Gamma(\theta+t)}{\Gamma(\theta)\beta^{t}}\int\frac{\beta^{\theta+t}}{\Gamma(\theta+t)} e^{y(\theta+t)-\beta e^y}dy\\
    &=\frac{\Gamma(\theta+t)}{\Gamma(\theta)\beta^{t}}.
    \end{align*}
The assumption that $t>-\theta$ ensures that $\theta+t>0$ so that the integrand in the second-to-last line is the density of the logarithm of a $\text{Gamma}(\theta+t,\beta)$ random variable.  Since $\theta>0$, we have established that the moment generating function exists in a neighborhood around 0.
\end{proof}

\begin{proof}[Proof of Example~\ref{ex:gamma1}]
We must prove the following three claims:
\begin{enumerate}
    \item \emph{If for $k=1,\dots,K$, $\Xt{k} \sim \text{Gamma}\left(\frac{1}{K}\theta + \frac{k-1}{K}, \frac{1}{K}\beta\right)$ and $\Xt{k}$ are mutually independent, then $\left(\prod_{k=1}^K \Xt{k}\right)^{\frac{1}{K}} \sim \text{Gamma}(\theta, \beta)$.} 

Defining $Y^{(k)}=\log\Xt{k}$ and $\bar Y_K=\frac{1}{K}\sum_{k=1}^KY^{(k)}$, observe that
$$
\left(\prod_{k=1}^K \Xt{k}\right)^{\frac{1}{K}}=e^{\bar Y_K}.
$$
Thus, our goal is to prove that $e^{\bar Y_K}\sim \text{Gamma}(\theta, \beta)$.  Since the moment generating function completely characterizes a distribution, it is sufficient to show that 
$\Phi_{\bar Y_K}(t)$ matches the expression in Lemma~\ref{lemma:log-gamma}. Applying Lemma~\ref{lemma:log-gamma} to $e^{Y^{(k)}}\sim\text{Gamma}(\theta_k,\beta_k)$, where $\theta_k=\theta/K+(k-1)/K$ and $\beta_k=\beta/K$, implies that 
$$
\Phi_{Y^{(k)}}=\frac{\Gamma(\theta_k+t)}{\Gamma(\theta_k)\beta_k^{t}}.
$$
By independence of $Y^{(1)},\ldots,Y^{(K)}$ and standard properties of the moment generating function,
\begin{align*}    
\Phi_{\bar Y_K}(t)&=\prod_{k=1}^K\Phi_{Y_K/K}(t)\\
&=\prod_{k=1}^K\Phi_{Y_K}(t/K)\\
&=\prod_{k=1}^K\frac{\Gamma(\theta_k+t/K)}{\Gamma(\theta_k)\beta_k^{t/K}}.
\end{align*}

Recalling the form of $\theta_k$ and $\beta_k$ and applying the Gauss multiplication theorem \citep[page 256]{abramowitz1972handbook} to both the numerator and denominator gives
$$
\Phi_{\bar Y_K}(t)=\frac{K^{-(\theta+t)}\Gamma(\theta+t)}{K^{-\theta}\Gamma(\theta)}\frac{1}{(\beta/K)^t}=\frac{\Gamma(\theta+t)}{\Gamma(\theta)\beta^t}.
$$
This completes the proof.

    \item \emph{A sufficient statistic for $\theta$ in the joint distribution of $\Xt{1},\dots,\Xt{K}$ is $T(\Xt{1},\dots,\Xt{K})=\left(\prod_{k=1}^K \Xt{k}\right)^{\frac{1}{K}}$.}

By the mutual independence of $\Xt{k}$, the joint density of $\Xt{1},\dots,\Xt{K}$ can be written as,

\begin{align*} 
f_{\Xt{1},\dots,\Xt{K}}(\xt{1},\dots,\xt{K}) &= \prod_{k=1}^K f_{\Xt{k}}(\xt{k}) \\
&\propto \prod_{k=1}^K \left(\xt{k}\right)^{\frac{1}{K}\theta + \frac{k-1}{K} - 1}\exp\left(-\frac{\beta}{K}\xt{k}\right) \\
&=\left[\left(\prod_{k=1}^K \xt{k}\right)^{\frac{1}{K}}\right]^{\theta}\left[\prod_{k=1}^K \left(\xt{k}\right)^{\frac{k-1}{K}-1}\right]\exp\left(-\frac{\beta}{K}\sum_{k=1}^K\xt{k}\right)
\end{align*}

    By the factorization theorem, $T(\Xt{1},\dots,\Xt{K})=\left(\prod_{k=1}^K \Xt{k}\right)^{\frac{1}{K}}$ is a sufficient statistic for $\theta$ as required.

    \item \emph{To sample from $G_t$, the conditional distribution $(\Xt{1},\dots,\Xt{K})|T(\Xt{1},\dots,\Xt{K})=t$
    , we first sample from $(\Xt{1},\dots,\Xt{K-1})|T(\Xt{1},\dots,\Xt{K})=t$ and then recover $\Xt{K}$.}

The conditional density $f_{\Xt{1},\dots,\Xt{K-1}|T(\Xt{1},\dots,\Xt{K})=t}(\xt{1},\dots,\xt{K-1})$, up to a normalizing constant depending on $t$, is
$$
\left[\prod_{k=1}^{K-1}\left(\xt{k}\right)^{\frac{k-K}{K}-1}\right]\exp\left(-\frac{\beta}{K}\left(\sum_{k=1}^{K-1}\xt{k} + \frac{t^K}{\prod_{k=1}^{K-1}\xt{k}}\right)\right).
$$
The derivation is as follows (where any factors not involving $\xt{1},\ldots,\xt{K-1}$ are omitted and we write $\theta_k=\frac{\theta}{K}+\frac{k-1}{K}$):

\begin{align*}
&\quad f_{\Xt{1},\dots,\Xt{K-1}|T(\Xt{1},\dots,\Xt{K})=t}(\xt{1},\dots,\xt{K-1}) \\ 
&\propto f_{\Xt{1},\dots,\Xt{K-1},T(\Xt{1},\dots,\Xt{K})}(\xt{1},\dots,\xt{K-1},t) \\
&= f_{\Xt{1},\dots,\Xt{K}}\left(\xt{1},\dots,\xt{K-1},\frac{t^K}{\prod_{k=1}^{K-1}\xt{k}}\right)\left|\frac{\partial}{\partial t}\frac{t^K}{\prod_{k=1}^{K-1}\xt{k}}\right| \\
&=\left(\prod_{k=1}^{K-1}f_{\Xt{k}}\left(\xt{k}\right)\right)f_{\Xt{K}}\left(\frac{t^K}{\prod_{k=1}^{K-1}\xt{k}}\right)\left|\frac{Kt^{K-1}}{\prod_{k=1}^{K-1}\xt{k}}\right| \\
&\propto \left(\prod_{k=1}^{K-1}\left(\xt{k}\right)^{\theta_k-1}\exp\left(-\frac{\beta}{K}\xt{k}\right)\right) \left(\frac{t^K}{\prod_{k=1}^{K-1}\xt{k}}\right)^{\theta_K-1}\exp\left(-\frac{\beta}{K}\frac{t^K}{\prod_{k=1}^{K-1}\xt{k}}\right)\frac{1}{\prod_{k=1}^{K-1}\xt{k}} \\
&\propto \left(\prod_{k=1}^{K-1}\left(\xt{k}\right)^{\frac{k-K}{K}-1}\right)\exp\left(-\frac{\beta}{K}\left(\sum_{k=1}^{K-1}\xt{k}+\frac{t^K}{\prod_{k=1}^{K-1}\xt{k}}\right)\right),
\end{align*}
where in the last step we used that $\theta_k-\theta_K=\frac{k-K}{K}$.

To generate $(\Xt{1},\dots,\Xt{K})$, first sample from $(\Xt{1},\dots,\Xt{K-1})|T(\Xt{1},\dots,\Xt{K})=t$ with numerical sampling methods. In this example, MCMC methods work well though the choice of proposal distribution should consider $K$ and $\beta$. Then, compute $\Xt{K}=\frac{t^K}{\prod_{k=1}^{K-1}\Xt{k}}$. 
\end{enumerate}
\end{proof}

\subsection{Families with support controlled by an unknown parameter}

\subsubsection{Scaled beta distribution}
\label{subsec:scaled-beta-proofs}

We consider the family of distributions obtained by scaling a $\text{Beta}(\alpha,1)$ distribution (with $\alpha$ fixed) by an unknown scale parameter $\theta>0$.  In the special case that $\alpha=1$, this corresponds to the $\text{Unif}(0,\theta)$ family presented in Example~\ref{ex:scaled-uniform}.

\begin{exmp}[Thinning $\theta\cdot\text{Beta}(\alpha,1)$ with $\alpha$ known] \label{ex:scaled-beta}
We start with  $\Xt{k} \overset{iid}{\sim} \theta\cdot\text{Beta}\left(\frac{\alpha}{K},1\right)$ for $k=1,\ldots,K$, and note that $T(\Xt{1},\dots,\Xt{K}) = \max(\Xt{1},\dots,\Xt{K})$ is sufficient for $\theta$. 
 Furthermore, $\max(\Xt{1},\dots,\Xt{K}) \sim \theta\cdot\text{Beta}(\alpha, 1)$. Thus, we define $G_t$ to be the conditional distribution of 
 $(\Xt{1},\dots,\Xt{K})$ given $\max(\Xt{1},\dots,\Xt{K}) = t$. Then, 
 by Theorem~\ref{thm:generalized-thinning}, 
 we can thin  $X \sim \theta\cdot\text{Beta}(\alpha,1)$ by 
 sampling from  $G_X$.

To sample from this conditional distribution, we first   
draw $\mathbf{C}\sim\text{Categorical}_K\left(1/K, \ldots, 1/K\right)$. Then, $\Xt{k} = C_kX+(1-C_k)Z_k$ where $Z_k \overset{iid}{\sim} X\cdot\text{Beta}\left(\frac{\alpha}{K},1\right)$.
\end{exmp}

\begin{proof}[Proof of Example~\ref{ex:scaled-beta}]
We must prove the following three claims:
\begin{enumerate}
    \item \emph{If for $k=1,\dots,K$, $\Xt{k} \overset{iid}{\sim} \theta \cdot\text{Beta}\left(\frac{\alpha}{K},1\right)$, then $\max(\Xt{1}, \dots,\Xt{K}) \sim \theta \cdot\text{Beta}\left(\alpha,1\right)$.}  
    
First, note that $\frac{1}{\theta}\Xt{k} \overset{iid}{\sim} \text{Beta}\left(\frac{\alpha}{K},1\right)$. The distribution of $\max(\Xt{1},\dots,\Xt{K})$ can be derived using the CDF method as follows:
\begin{align*}
    P(\max(\Xt{1},\dots,\Xt{K})\le z) &= P(\Xt{1} \le z,\dots, \Xt{K} \le z) \\
    &= \prod_{k=1}^K P(\Xt{k} \le z) \\
    &= \prod_{k=1}^K P\left(\frac{1}{\theta}\Xt{k} \le \frac{z}{\theta}\right) \\
    &= \prod_{k=1}^K \left(\frac{z}{\theta}\right)^{\frac{\alpha}{K}} \\
    &= \left(\frac{z}{\theta}\right)^\alpha,
\end{align*}
where we have used that $P(\text{Beta}(\alpha,1)\le x)=x^\alpha$ for $x\in(0,1)$. 
The above implies that $\max(\Xt{1},\dots,\Xt{K})\sim\theta \cdot\text{Beta}\left(\alpha,1\right)$ as required.
    \item \emph{A sufficient statistic for $\theta$ based on $\Xt{1},\dots,\Xt{K}$ is $\max(\Xt{1},\dots,\Xt{K})$.} 

Using the independence of $\Xt{k}$, the joint distribution can be written as

\begin{align*} 
&\quad f_{\Xt{1},\dots,\Xt{K}}(\xt{1},\dots,\xt{K}) \\
&= \prod_{k=1}^K f_{\Xt{k}}(\xt{k}) \\
&\propto \prod_{k=1}^K\left(\xt{k}\right)^{\frac{\alpha}{K}-1}I\{0<\xt{k}<\theta\} \\
&= \left(\prod_{k=1}^K\xt{k}\right)^{\frac{\alpha}{K}-1}I\{\min(\xt{1},\dots,\xt{K})>0\}I\{\max(\xt{1},\dots,\xt{K})<\theta\}.
\end{align*}

By the factorization theorem, we conclude that $\max(\Xt{1},\dots,\Xt{K})$ is a sufficient statistic for $\theta$ as required.

    \item \emph{We can sample from the conditional distribution $(\Xt{1},\dots,\Xt{K})|\max(\Xt{1},\dots,\Xt{K})=t$ by taking $\Xt{k}=C_kt+(1-C_k)Z_k$ where $\mathbf{C} \sim \text{Categorical}_K(1/K,\dots,1/K)$ and $Z_k\sim t\cdot\text{Beta}(\frac{\alpha}{K},1)$.} 

Without loss of generality, consider $\Xt{k}$. Given that $\Xt{1},\dots,\Xt{K}$ are identically distributed, $P(\Xt{k}=t)=\frac{1}{K}$. Hence, in the first stage, we can draw one sample, $\mathbf{C}\sim\text{Categorical}_K(1/K,\dots,1/K)$ to determine if $\Xt{k}$ is the maximum. If $\Xt{k}$ is not the maximum then we know that $\Xt{k} \le t$. We can compute the density of $Z_k\overset{D}{=}(\Xt{k}|\Xt{k}\le t)$ as follows,

$$
f_{\Xt{k}|\Xt{k}\le t}(\xt{k}) = \frac{f_{\Xt{k}}(\xt{k})}{P(\Xt{k}\le t)} =\frac{\frac{1}{\theta^\frac{\alpha}{K} B\left(\frac{\alpha}{K},1\right)} \left(\xt{k}\right)^{\frac{\alpha}{K}-1}}{\left(\frac{t}{\theta}\right)^{\frac{\alpha}{K}}} = \frac{1}{t^\frac{\alpha}{K} B\left(\frac{\alpha}{K},1\right)} \left(\xt{k}\right)^{\frac{\alpha}{K}-1}.
$$

The above implies that $Z_k \sim t\cdot\text{Beta}\left(\frac{\alpha}{K},1\right)$ as required.
    
\end{enumerate}
The result then follows from Theorem~\ref{thm:generalized-thinning}.
\end{proof}

\subsubsection{Shifted exponential distribution}
\label{subsec:shifted-exponential-proofs}

We consider $X\sim \mathrm{SExp}(\theta,\lambda)$, which is the location family generated by shifting an exponential random variable by an amount $\theta$.  It has density
$$
p_{\theta,\lambda}(x)=\lambda e^{-\lambda (x-\theta)}1\{x\ge \theta\}.
$$

\begin{exmp}[Thinning a $\mathrm{SExp}(\theta,\lambda)$ random variable with known $\lambda$] \label{ex:shifted-exponential}
We begin with $\Xt{k} \overset{iid}{\sim} \mathrm{SExp}(\theta,\lambda/K)$ for $k=1,\ldots,K$, and note that $T(\Xt{1},\dots,\Xt{K})=\min(\Xt{1},\dots,\Xt{K})$ is sufficient for $\theta$.  Furthermore,  $\min(\Xt{1},\dots,\Xt{K})\sim \mathrm{SExp}(\theta,\lambda)$. We define $G_t$ to be the conditional distribution of $(\Xt{1},\ldots,\Xt{K})$ given $  \min(\Xt{1},\dots,\Xt{K})=t$.  Then, by Theorem~\ref{thm:generalized-thinning},   we can thin $X\sim \mathrm{SExp}(\theta,\lambda)$  by sampling from $G_X$.

To sample from $G_X$, we first draw  $\mathbf{C}\sim\text{Categorical}_K\left(1/K,\ldots,1/K \right)$. We then take 
$\Xt{k}=X+(1-C_k)Z_k$,
where  $Z_k\overset{iid}{\sim} \mathrm{Exp}(\lambda/K)$. 

\end{exmp}

\begin{proof}[Proof of Example~\ref{ex:shifted-exponential}]
We must prove the following three claims:
\begin{enumerate}
    \item \emph{If for $k=1,\dots,K$, $\Xt{k} \overset{iid}{\sim} \text{SExp}\left(\theta,\lambda/K\right)$, then $\min(\Xt{1}, \dots,\Xt{K}) \sim \text{SExp}(\theta,\lambda)$.}  
    
First, note that $\Xt{k} - \theta \overset{iid}{\sim} \text{Exp}\left(\lambda/K\right)$. The distribution of $\min(\Xt{1},\dots,\Xt{K})$ can be derived using the CDF method as follows,
\begin{align*}
    P(\min(\Xt{1},\dots,\Xt{K})\ge z) &= P(\Xt{1} \ge z,\dots, \Xt{K} \ge z) \\
    &= \prod_{k=1}^K P(\Xt{k} \ge z) \\
    &= \prod_{k=1}^K P\left(\Xt{k}-\theta \ge z-\theta\right) \\
    &= \prod_{k=1}^K \exp\left(-\frac{\lambda}{K}(z-\theta)\right) \\
    &= \exp\left(-\lambda(z-\theta)\right).
\end{align*}

The above implies that $\min(\Xt{1},\dots,\Xt{K})\sim\text{SExp}(\theta,\lambda)$ as required.
    \item \emph{A sufficient statistic for $\theta$ based on $\Xt{1},\dots,\Xt{K}$ is $\min(\Xt{1},\dots,\Xt{K})$.} 

Using the independence of $\Xt{k}$, the joint distribution can be written as
\begin{align*} 
&\quad f_{\Xt{1},\dots,\Xt{K}}(\xt{1},\dots,\xt{K}) \\
&= \prod_{k=1}^K f_{\Xt{k}}(\xt{k}) \\
&\propto \prod_{k=1}^K \exp\left(-\frac{\lambda}{K}(\xt{k}-\theta)\right)I\{\xt{k}>\theta\} \\
&\propto \exp\left(-\frac{\lambda}{K}\sum_{k=1}^K\xt{k}\right)I\{\min(\xt{1},\dots,\xt{K})>\theta\}. 
\end{align*}

Given that the joint distribution can be written such that $\theta$ only interacts with the data through the $I\{\min(\xt{1},\dots,\xt{K})>\theta\}$ term, we conclude that $\min(\Xt{1},\dots,\Xt{K})$ is a sufficient statistic for $\theta$ as required.

    \item \emph{We can sample from the conditional distribution $\left(\Xt{1},\dots,\Xt{K}\right)|\min(\Xt{1},\dots,\Xt{K})=t$ by taking $\Xt{k}=t+(1-C_k)Z_k$ where $\mathbf{C} \sim \text{Categorical}_K(1/K,\dots,1/K)$ and $Z_k\sim\text{Exp}(\lambda/K)$.} 

Without loss of generality, consider $\Xt{k}$. Given that $\Xt{1},\dots,\Xt{K}$ are identically distributed, $P(\Xt{k}=X)=\frac{1}{K}$. Hence, in the first stage, we can draw one sample, $\mathbf{C}\sim\text{Categorical}_K(1/K,\dots,1/K)$ to determine if $\Xt{k}$ is the minimum. Otherwise, we require that $ \Xt{k} \ge t$. We know that the density of $Z_k \overset{D}{=}(\Xt{k}|\Xt{k}\ge t) \overset{D}{=} \Xt{k}$ by the memoryless property of the exponential distribution. Thus, $Z_k \sim \text{Exp}\left(\lambda/K\right)$ as required.
    
\end{enumerate}
The result then follows from Theorem~\ref{thm:generalized-thinning}.

\end{proof}

\section{Additional example of indirect thinning}
\label{appendix:indirect}

We consider an example of indirect thinning in which $S(\cdot)$ is neither invertible, nor scalar-valued. Specifically, let $\mathbf{X}=(X_1,\ldots,X_n)$ represent a sample of $n$ independent and identically distributed normal observations with unknown mean $\theta_1$ and variance $\theta_2$. Using ideas from Theorem~\ref{thm:natural-exponential-families} and Proposition~\ref{prop:exp-family}, we thin $\mathbf{X}$ through the sample mean and sample variance. 

\begin{exmp}[Indirect thinning of  $N_n(\theta_1\mathbf{1}_n, \theta_2\mathbf{I}_n)$  through the sample mean and sample variance] \label{ex:normal} 
Suppose $\mathbf{X}\sim N_n(\theta_1\mathbf{1}_n, \theta_2\mathbf{I}_n)$. Then $S(\mathbf{X})$ is sufficient for $\boldsymbol\theta=(\theta_1,\theta_2)$, where
$$
S(\mathbf{x}) =  \left( \frac{1}{n} \sum_{i=1}^n x_i, \frac{1}{n-1}\sum_{i=1}^{n}\left(x_i -  \frac{1}{n} \sum_{i'=1}^n x_{i'}\right)^2 \right).
$$ 
We will indirectly thin $\mathbf{X}$ through $S(\cdot)$ into $K=n$ univariate normals.
To do so, we start with  $\Xt{k} \overset{iid}{\sim} N\left(\theta_1,\theta_2\right)$ for $k=1,\ldots,K$. A sufficient statistic for $\boldsymbol\theta$ based on $(\Xt{1},\ldots,\Xt{K})$ is $T(\Xt{1},\ldots,\Xt{K})$,  where
$T(\xt{1},\ldots,\xt{K})=S((\xt{1},\ldots,\xt{K})^\top)$,
i.e., we concatenate the $K$ entries into a vector and apply $S(\cdot)$.  
Furthermore, $T(\Xt{1},\ldots,\Xt{K})$ has the same distribution as $S(\mathbf{X})$, since  $(\Xt{1},\ldots,\Xt{K})^\top$ and $\mathbf{X}$ have the same distribution. 
This establishes that we can indirectly thin $\mathbf{X}$ through $S(\cdot)$ by $T(\cdot)$. 

By Theorem~\ref{thm:generalized-thinning}, we define $G_{\mathbf{t}}$ to be the conditional distribution of  $(\Xt{1},\ldots,\Xt{K})$ given $T(\Xt{1},\ldots,\Xt{K})=\mathbf{t}$, i.e. it is the distribution of a $N_K(\theta_1\mathbf{1}_K, \theta_2\mathbf{I}_K)$ random vector conditional on its sample mean and sample variance equalling $\mathbf{t}=(t_1,t_2)$.  
This conditional distribution is uniform over the set of points in $\mathbb{R}^K$ with sample mean $t_1$ and sample variance $t_2$.  To see this, note that $G_{\mathbf{t}}$ cannot depend on $\boldsymbol\theta$ (by sufficiency), so we can take $\boldsymbol\theta=(t_1,t_2)$ and equivalently describe $G_{\mathbf{t}}$ as the distribution of a $N_K(t_1\mathbf{1}_K, t_2\mathbf{I}_K)$ random vector conditional on its sample mean and sample variance equalling $\mathbf{t}=(t_1,t_2)$.  Such a distribution has constant density on a sphere centered at $t_1\mathbf{1}_K$.  Thus, the conditional distribution $G_{\mathbf{t}}$ is uniform over the set of points in $\mathbb{R}^K$ with sample mean $t_1$ and sample variance $t_2$. Finally, we obtain $(\Xt{1},\ldots,\Xt{K})$ by sampling from $G_{\mathbf{X}}$.
\end{exmp}
In effect, Example~\ref{ex:normal} shows that given a realization of $\mathbf{X}\sim N_n(\theta_1\mathbf{1}_n, \theta_2\mathbf{I}_n)$, we can generate a new random vector $(\Xt{1},\ldots,\Xt{n})^\top$ with the identical distribution,  and with the same sample mean and sample variance, without knowledge of the true mean or true variance. 
This might have applications in cases where the true values of the observations cannot be shared. 
Furthermore, the ideas in  Example~\ref{ex:normal} can be applied in settings where only the sufficient statistics of a realization of $\mathbf{X}\sim N_n(\theta_1\mathbf{1}_n, \theta_2\mathbf{I}_n)$ are available, and we wish to generate a ``plausible" sample that could have led to those sufficient statistics. 

\section{Numerical experiments}
\label{app:experiments}

In this section, we illustrate some of the examples from Sections~\ref{sec:natural-exp-fam}, \ref{sec:general-exp}, and \ref{sec:outside-exp-fam} through numerical simulations.
Specifically, we thin a Gamma($\alpha, \theta$) distribution using the three different approaches described in Examples~\ref{ex:dtgamma}, \ref{ex:scaled-normal}, and \ref{ex:weibull}, a Beta($\theta, \beta$) into two non-identical beta random variables as described in Example~\ref{ex:beta}, and a Unif($0, \theta$) into scaled betas as described in Example~\ref{ex:scaled-uniform}.  We take $K=2$ throughout for ease of presentation.

\begin{figure}[!h]
\begin{center}
\includegraphics[width=\textwidth]{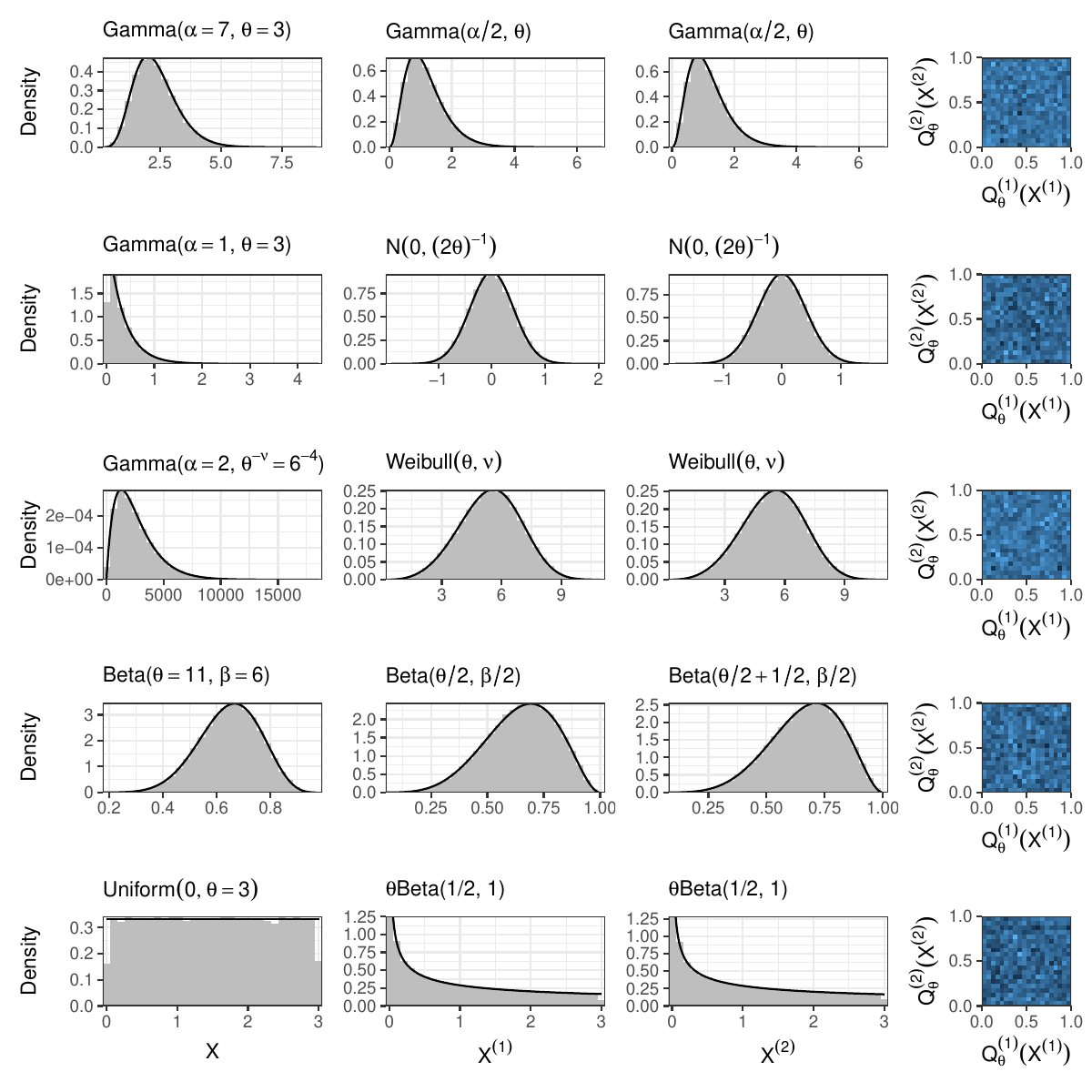}
\end{center}
\caption{Numerical examples of data thinning. The left-hand column displays a sample from $P_\theta$, which we wish to thin.  The center-left and center-right columns display the empirical distributions of $\Xt{1}$ and $\Xt{2}$ that result from thinning, overlaid with  the theoretical distributions $\Qt{1}_\theta$ and $\Qt{2}_\theta$. The right-hand column displays the empirical joint distribution of $(\Qt{1}_\theta(\Xt{1}), \Qt{2}_\theta(\Xt{2}))$, providing visual evidence that they are independent. With a slight abuse of notation, $\Qt{1}_\theta(\cdot)$ and $\Qt{2}_\theta(\cdot)$ represent the CDFs of their respective distributions.}
\label{fig:numericalexamples}
\end{figure}

In Figure \ref{fig:numericalexamples}, each row corresponds to one of the examples mentioned above. In the left-hand column,  we display the empirical density of $B=100,000$ realizations, $x_b$ (for $b=1,\ldots,B$), of the $X\sim P_\theta$ that we wish to thin, overlaid with the true density of $P_\theta$. In the center-left and center-right columns, we display $B$ realizations of $\Xt{1}$ and $\Xt{2}$ respectively, where each realization $(\xt{1}_b,\xt{2}_b)$ is obtained by sampling from $G_{x_b}$, the conditional distribution of $(\Xt{1},\Xt{2})$ given $T(\Xt{1}, \Xt{2})=x_b$. (This sampling is done without knowledge of $\theta$.) We overlay the densities of the marginals $\Qt{1}_\theta$ and $\Qt{2}_\theta$. The right-hand column displays 
the empirical joint distribution of $(\Qt{1}_\theta(\Xt{1}), \Qt{2}_\theta(\Xt{2}))$.  

In each case, our empirical findings corroborate our theoretical results: We see that the empirical distribution of $X \sim P_\theta$ agrees with its theoretical density (left-hand column); that the empirical distributions of $\Xt{1}$ and $\Xt{2}$ sampled from $G_X$ coincide with $\Qt{1}_\theta$ and $\Qt{2}_\theta$ (even though the empirical distributions were obtained without knowledge of $\theta$; center-left and center-right columns); and that the joint distribution of $\Qt{1}_\theta(\Xt{1})$ and $\Qt{2}_\theta(\Xt{2})$ resembles the independence copula (right-hand column). 

\section{Changepoint detection simulations} \label{app:changepoint}
First, we generate data with a common variance, specifically $X_1,\ldots,X_{2000}\overset{\text{iid}}{\sim} N(0,1)$, and apply the three approaches to detecting and testing for a change in variance that were described in Section \ref{sec:changepoint}.  
We repeat this process 1000 times, and display the type 1 error rate in Figure \ref{fig:simulation}. The naive approach does not control the type 1 error rate, while the order-preserved sample splitting and generalized data thinning approaches do. 

Figure \ref{fig:null} displays the estimated changepoints, as well as those for which we rejected the null hypothesis of no change in variance at the Bonferroni corrected threshold, for a single realization of the simulated data. The naive approach results in a number of false positives, while the order-preserved sample splitting and generalized data thinning approaches do not. 

\begin{figure}[!h]
\begin{center}
\includegraphics[width=\textwidth]{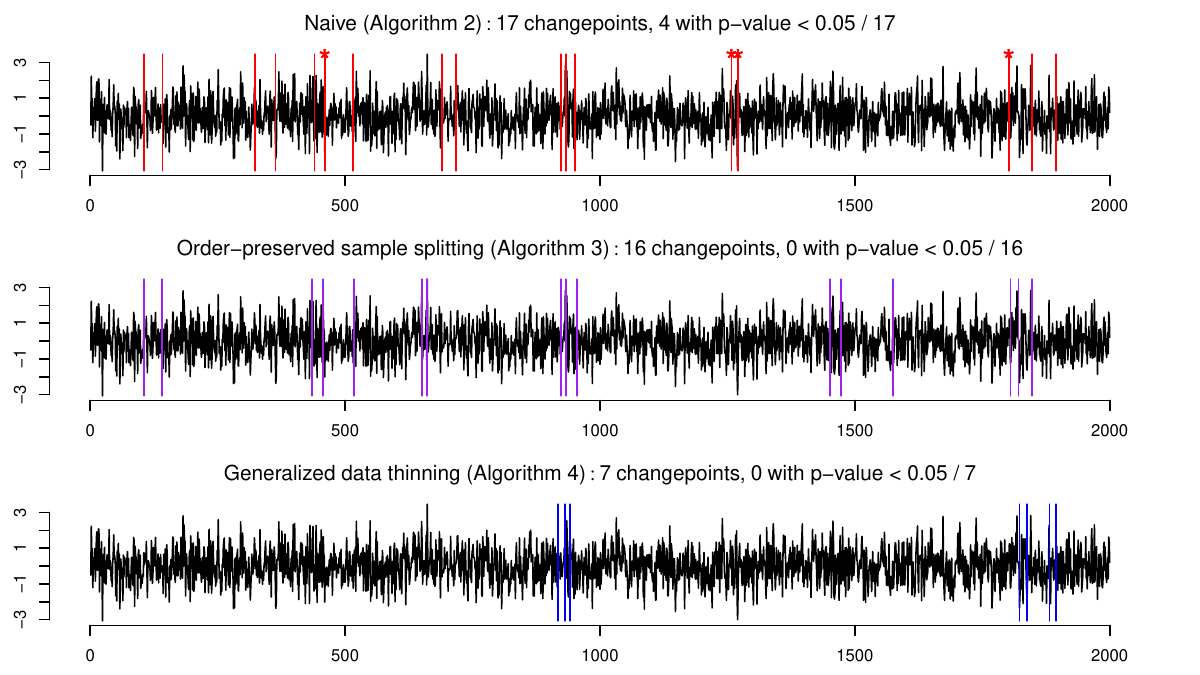}
\end{center}
\caption{Results for the simulation study with no changepoints. \emph{Top row:} Simulated data with constant variance, as well as results of the naive approach: Red lines indicate changepoints estimated using all of the data, and red asterisks indicate the estimated changepoints for which the p-value computed using all of the data was below 0.05  divided by the number of detected changepoints. The method (falsely) rejects the null hypothesis for 4 out of 17 estimated changepoints. 
\emph{Middle row:} Same as the first row, but for the order-preserved sample splitting approach: Purple lines indicate changepoints estimated using the odd observations, and purple asterisks indicate those with p-values below 0.05  divided by the number of detected changepoints, using only the even observations for testing. This method (correctly) rejects none of the 16 estimated changepoints.
\emph{Bottom row:} Same as the first row, but for the generalized
data thinning approach: Blue lines indicate changepoints estimated using the training set, and blue asterisks indicate those with test set p-values below 0.05  divided by the number of detected changepoints. This method (correctly) rejects none of the 7 estimated changepoints.}
\label{fig:null}
\end{figure}

Next, we generated data with two true changepoints: for $i = 1, \dots, 500$, $X_i \overset{\text{iid}}{\sim} N(0,4)$; for $i = 501, \dots, 1500$, $X_i \overset{\text{iid}}{\sim} N(0,25)$; and for $i = 1501, \dots, 2000$, $X_i \overset{\text{iid}}{\sim} N(0,1)$. We again apply the three approaches to detecting and testing for a change in variance that were described in Section \ref{sec:changepoint}, and display the results in Figure \ref{fig:alt}. In this setting, all three approaches reject the null hypothesis of no change in variance at two timepoints, which are located exactly at the two true changepoints. 

\begin{figure}[!h]
\begin{center}
\includegraphics[width=\textwidth]{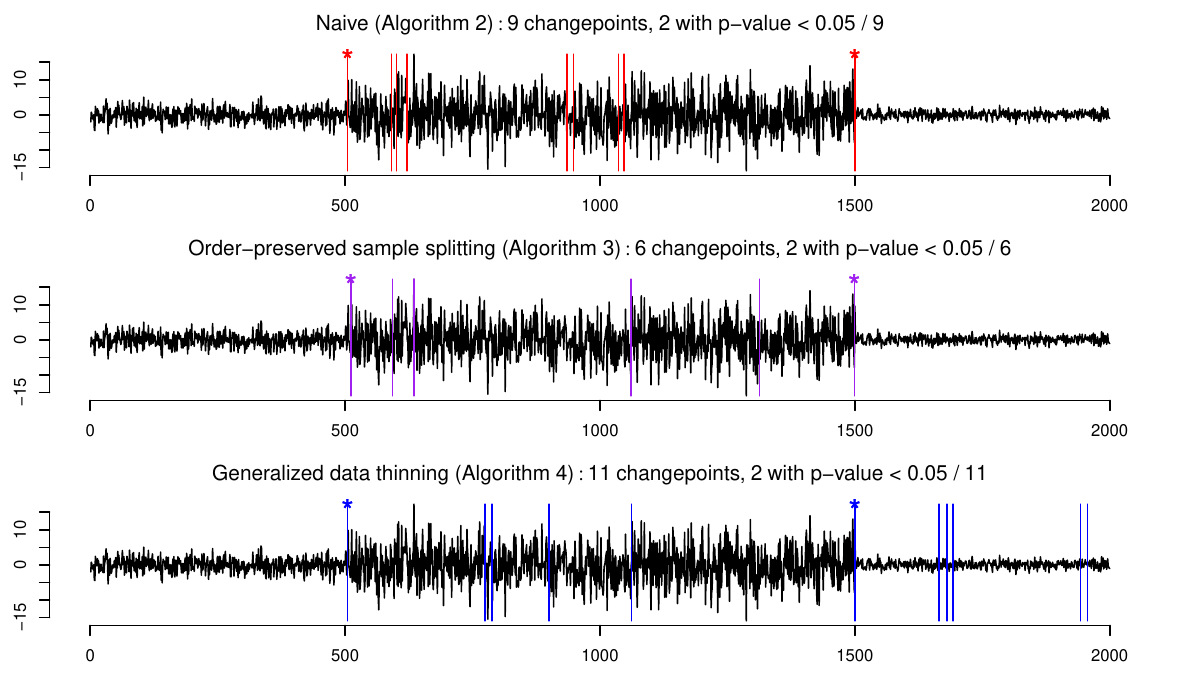}
\end{center}
\caption{Results for the simulation study with changes in variance at timepoints 501 and 1501. \emph{Top row:} Simulated data with two changepoints, as well as results of the naive approach: Red lines indicate changepoints estimated using all of the data, and red asterisks indicate the estimated changepoints for which the p-value computed using all of the data was below 0.05 divided by the number of detected changepoints. This approach leads to rejections only at the two true changepoints.
\emph{Middle row:} Same as the first row, but for the order-preserved sample splitting approach: Purple lines indicate changepoints estimated using the odd observations, and purple asterisks indicate those with p-values below 0.05 divided by the number of detected changepoints, using only the even observations for testing. This approach also leads to rejections only at the two true changepoints. \emph{Bottom row:} Same as the first row, but for the generalized
data thinning approach: Blue lines indicate changepoints estimated using the training set, and blue asterisks indicate those with test set p-values below 0.05 divided by the number of detected changepoints. This approach also leads to rejections only at the two true changepoints.}
\label{fig:alt}
\end{figure}

\section{Relaxing the independence assumption} \label{app:fission}

We now consider how the generalized data thinning recipe changes if we relax the independence requirement for $\Xt{1}$ and $\Xt{2}$.

\begin{algorithm}[Finding distributions that can be decomposed into non-independent components]
\label{alg:recipe-fission}
\textcolor{white}{.}
\begin{enumerate}
    \item Choose a family of distributions $\mathcal Q=\{Q_\theta:\theta\in\Omega\}$ over $(\Xt{1},\Xt{2})$, 
     where $\Xt{1}$ and $\Xt{2}$ are not necessarily independent.
    \item Let $(\Xt{1},\Xt{2})\sim Q_\theta$, and let  $T(\Xt{1},\Xt{2})$ denote a sufficient statistic for $\theta$.
    \item Let $P_\theta$ denote the distribution of $T(\Xt{1},\Xt{2})$.  
\end{enumerate}
\end{algorithm}
Then, given $X \sim P_\theta$, we can generate $(\Xt{1},\Xt{2})$ by sampling from $G_X$, where $G_t$ is defined as the conditional distribution
    $$
    (\Xt{1},\Xt{2})|T(\Xt{1},\Xt{2})=t.
    $$

By sufficiency, the sampling mechanism $G_t$  can be performed without knowledge of $\theta$.
The key point here is that the main ideas in this paper apply even if $\Xt{1}$ and $\Xt{2}$ are dependent; however, we focused on independence in this paper to facilitate downstream application of the decompositions that we obtain.

\end{document}